\tiny\color{gray},
\DeclareMathAlphabet{\mathcal}{OMS}{cmsy}{m}{n}
\newtheorem{theorem}{Theorem}
\newtheorem{lemma}{Lemma}
\newcommand{\PS}{S}
\newcommand{\NT}{\widetilde{T}}
\newcommand{\NS}{\widetilde{S}}
\newcommand{\Pg}{\mathbf{g}}
\newcommand{\Ng}{\mathbf{\widetilde{g}}}
\newcommand{\SumT}{\sum_{t=1}^{T}}
\newcommand{\AvgT}{\frac{1}{T}\sum_{t=1}^{T}}
\renewcommand{\paragraph}[1]{{\smallskip\noindent\bf #1}}
\begin{document}

\date{}

\title{\Large \bf On the Performance and Convergence of Distributed Stream Processing via Approximate Fault Tolerance
\thanks{An earlier version of the paper appeared in \cite{Huang2016}. In this
extended version, we formally prove that AF-Stream preserves the convergence
guarantees of online learning in distributed stream processing, and also validate our convergence analysis through prototype experiments for different datasets, parameters, and consistency models.} 
}

\author{
{\rm Zhinan Cheng$^1$, Qun Huang$^2$, and Patrick P. C. Lee$^1$} \\
$^1$Department of Computer Science and Engineering, The Chinese University of Hong Kong \\
$^2$State Key Lab of Computer Architecture, Institute of Computing Technology,
Chinese Academy of Science
} 

\maketitle

\thispagestyle{empty}

\begin{abstract}
Fault tolerance is critical for distributed stream processing systems, 
yet achieving error-free fault tolerance often incurs substantial performance
overhead.  We present {\em AF-Stream}, a distributed stream processing system
that addresses the trade-off between performance and accuracy in fault
tolerance.  AF-Stream builds on a notion called {\em approximate fault
tolerance}, whose idea is to mitigate backup overhead by adaptively issuing
backups, while ensuring that the errors upon failures are bounded with
theoretical guarantees.  Specifically, AF-Stream allows users to specify
bounds on both the state divergence and the loss of non-backup streaming
items. It issues state and item backups only when the bounds are reached.
Our AF-Stream design provides an extensible programming model for
incorporating general streaming algorithms as well as exports only few
threshold parameters for configuring approximation fault tolerance.
Furthermore, we formally prove that AF-Stream preserves high
algorithm-specific accuracy of streaming algorithms, and in particular the
convergence guarantees of online learning.  Experiments show that AF-Stream
maintains high performance (compared to no fault tolerance) and high accuracy
after multiple failures (compared to no failures) under various streaming
algorithms. 


\end{abstract}

\section{Introduction}
\label{sec:introduction}

Stream processing becomes an important paradigm for processing data at high
speed and large scale.  As opposed to traditional batch processing that is
designed for static data, stream processing treats data as a continuous
stream, and processes every item in the stream in real-time.  For scalability,
we can make stream processing {\em distributed}, by processing streaming items
in parallel through multiple processes (or workers) or threads.  

Given that failures can happen at any time and at any worker in a distributed
environment, fault tolerance is a critical requirement in distributed stream
processing.  In particular, streaming algorithms often keep internal states in
main memory, which is fast but unreliable.  Also, streaming items are generated
continuously in real-time, and will become unavailable after being processed. 
Thus, we need to provide fault tolerance guarantees for both internal states
and streaming items.  Most existing stream processing systems (e.g., Spark
Streaming \cite{Zaharia2013} and Apache Flink \cite{Carbone2015}) support
{\em error-free} fault tolerance and exactly-once message delivery semantics.
They issue regular backups for both internal states and streaming items, so
that when failures happen, they can recover from the most recent backup and
resume processing as normal.  However, frequent backups
incur significant network and disk I/Os, thereby degrading stream
processing performance.  Some stream processing systems (e.g., Storm
\cite{Toshniwal2014} and Heron \cite{Kulkarni2015}) support {\em best-effort}
fault tolerance and at-most-once message delivery semantics to trade recovery
accuracy for performance, but the drawback is that they can incur unbounded
errors that compromise the correctness of outputs when recovering from
failures. 

We propose {\em AF-Stream}, a distributed stream processing system that 
addresses the trade-off between performance and accuracy in fault tolerance.
AF-Stream builds on a notion called {\em approximate fault tolerance}, whose
idea is to adaptively issue backup operations for both internal states and
unprocessed items, while incurring only {\em bounded} errors after failures
are recovered.  Specifically, AF-Stream estimates the errors upon failures
with the aid of extensible programming interfaces, and issues a backup
operation only when the errors go beyond the user-specified acceptable level.

We justify the trade-off with two observations.
First, to mitigate computational complexities, streaming algorithms tend to
produce ``quick-and-dirty" results rather than exact ones (e.g., data
synopsis) (\S\ref{subsec:classes}).  It is thus feasible to incur small
additional errors due to approximate fault tolerance, provided that the errors
are bounded.  Such errors can often be amortized after the processing of
large-volume and high-speed data streams.
Second, although failures are prevalent in distributed systems, their
occurrences remain relatively infrequent over the lifetime of stream
processing.  Thus, incurring errors upon failures should bring limited
disturbance.  We point out that the power of approximation has been
extensively addressed in distributed computing (e.g., \cite{Agarwal2013,
	Li2014, Rabkin2014, Agarwal2015, Pundir2015, Wei2015, Xing2015}). To 
our knowledge, AF-Stream is the first work that leverages approximation to
achieve fault tolerance in distributed stream processing with bounded error
guarantees.  Our contributions are summarized as follows. 

First, AF-Stream provides an extensible programming model for general
streaming algorithms.  In particular, it exports built-in interfaces and
primitives that make fault tolerance intrinsically supported and transparent
to programmers.  

Second, AF-Stream realizes approximate fault tolerance, which bounds errors
upon failures in two aspects: state divergence and number of lost items.  We
prove that the errors are bounded independent of the number of failures and
the number of workers in a distributed environment.  Also, the error bounds
are tunable with only {\em three} user-configurable threshold parameters to
trade between performance and accuracy.  We give examples on how such
parameters correspond to the error bounds of streaming algorithms.  Note that
AF-Stream adds no error to the common case when no failure happens. 

Third, we analyze how AF-Stream affects the model convergence of online
learning under approximate fault tolerance.  We formally prove that AF-Stream
preserves the iterative-convergent and error-tolerant nature of online
learning \cite{Bottou2007,Dai2015} after recovering from multiple failures.
Our analysis focuses on {\em stochastic gradient descent}
\cite{Bottou1998}, a core primitive for many popular online learning
algorithms \cite{Bottou2010}.  

Fourth, we implement an AF-Stream prototype, with emphasis on optimizing its
inter-thread and inter-worker communications.  Our AF-Stream prototype also
supports different consistency models. 

Finally, we evaluate the performance and accuracy of our AF-Stream prototype
on Amazon EC2 and a local cluster for various streaming algorithms.  AF-stream
only degrades the throughput by up to 4.7\%, 5.2\%, and 0.3\% in heavy hitter
detection, online join, and online logistic regression, respectively, when
compared to disabling backups; meanwhile, its accuracy after 10 failures only
drops by a small percentage (based on algorithm-specific metrics) when
compared to without failures.  Also, for online logistic regression,
AF-Stream only slightly delays model convergence (e.g., by 13.3\% and 20\% for
two real-world traces under eight failures).

The source code of our AF-Stream prototype is available at: 
{\bf http://adslab.cse.cuhk.edu.hk/software/afstream}. 

\section{Streaming Algorithms}
\label{sec:algs}

We first provide an overview of general streaming algorithms.  We then focus
on three classes of streaming algorithms, and identify their commonalities to
guide our system design.

\subsection{Overview and Motivation}
\label{subsec:abstraction}

A streaming algorithm comprises a set of {\em operators} for processing a data
stream of {\em items}.  Each operator continuously receives an input item,
processes the item, and produces one or multiple output items.  It also keeps
an in-memory {\em state}, which holds a collection of values corresponding to
the processing of all received input items.  The state is updated after each
input item is processed.  Also, an operator produces new items with respect to
the updated state.  

If failures never happen, we refer to the produced state and output as the
{\em ideal state} and {\em ideal output}, respectively.  However, the actual
deployment environment is failure-prone.  In this case, an operator needs to
generate its {\em actual state} and {\em actual output} in a fault-tolerant
manner.  Specifically, an operator needs to handle: (i) the missing in-memory
state and (ii) missing unprocessed items.  

Suppose that we want to achieve error-free fault tolerance for an operator.
That is, we ensure that the actual state and output are identical to the ideal
state and output, respectively, after the recovery of a failure. 
This necessitates the availability of a prior state and any following items,
so that when a failure happens, we can retrieve the prior state and resume the
processing of the following items.  In particular, in stream processing, the
items to be processed are generated continuously, so we need to issue periodic
state backups and a backup for {\em every} item in order to achieve error-free
fault tolerance.  For example, Spark Streaming saves each state as a Resilient
Distributed Dataset (RDD) \cite{Zaharia2013} in a mini-batch fashion and saves
each item via write-ahead logging \cite{SparkZeroLoss}.

However, making regular backups for both the state and each item incurs
excessive I/O overhead to normal processing.  
We motivate this claim by evaluating the backup overhead of Spark
Streaming (v1.4.1) versus the mini-batch interval (i.e., the duration within
which items are batched).  Note that the performance of Spark Streaming is
sensitive to the number of items in a mini-batch: having small mini-batches
aggravates backup overhead, while having very large mini-batches increases the
processing time to even exceed the mini-batch interval and makes the system
{\em unstable} (i.e., the system throughput varies significantly)
\cite{Das2014}.  
Thus, for each given mini-batch interval, we tune the stream input rate that
gives the maximum stable throughput.  Figure~\ref{fig:spark} shows the
throughput of Spark Streaming for Grep and WordCount, measured on Amazon EC2
(see \S\ref{sec:eval} for the details on the datasets and experimental setup).
We see that the throughput drops significantly due to backups, for example, by
nearly 50\% for WordCount when the mini-batch interval is 1s.

\begin{figure}[!t]
\centering
\begin{tabular}{c@{\ }c}
	\hspace{-0.1in}
	\includegraphics[width=1.65in]{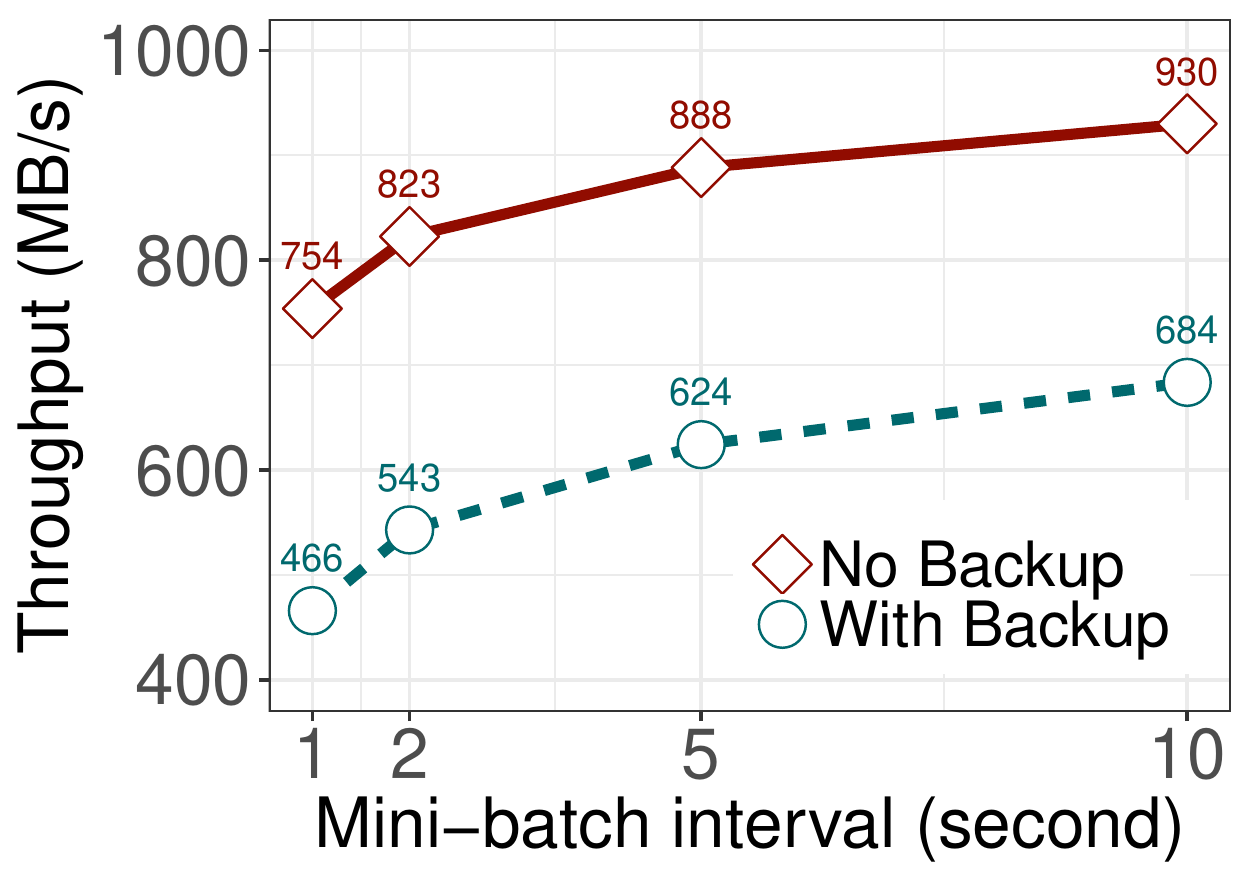} &
	\includegraphics[width=1.65in]{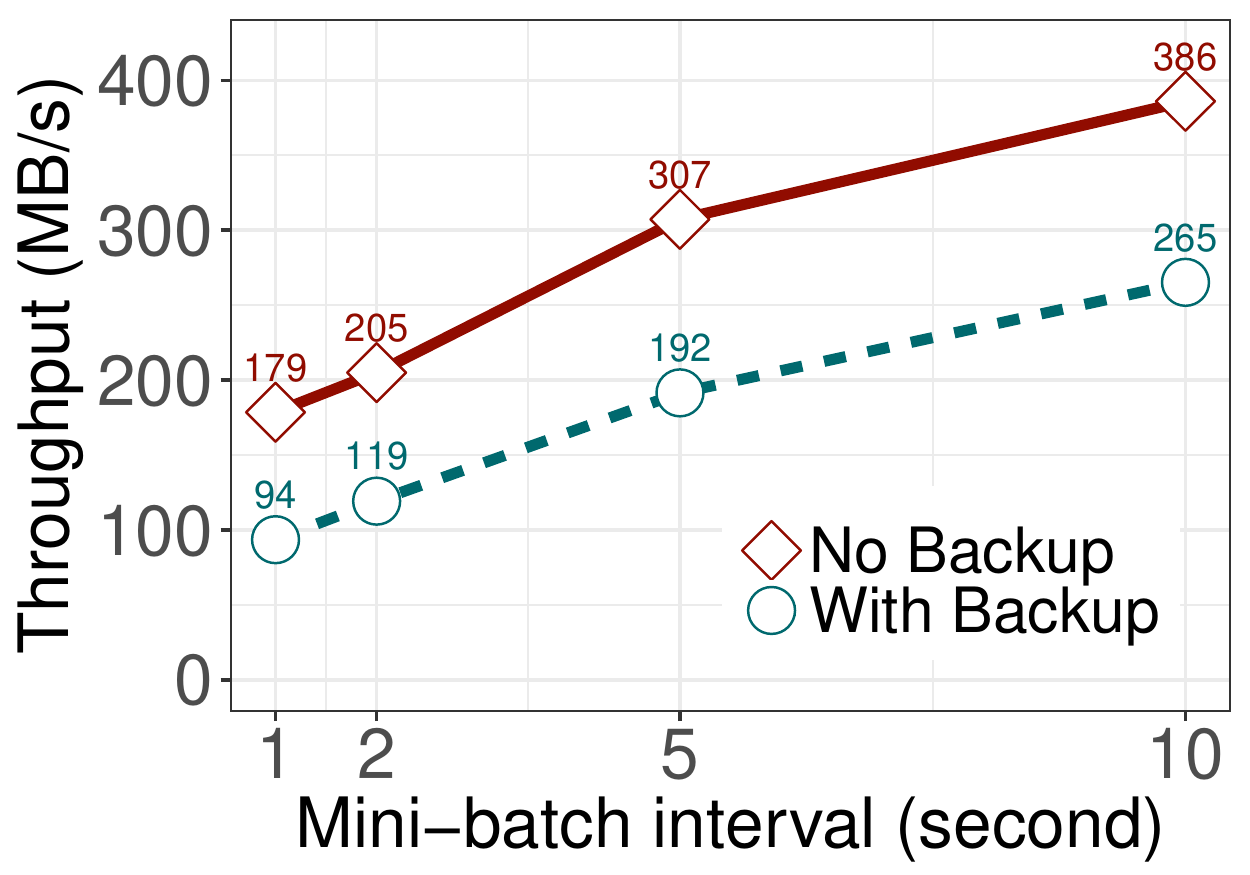}\\
	{\small (a) Grep} &
	{\small (b) WordCount} 
\end{tabular}
\vspace{-6pt}
\caption{Throughput of Spark Streaming with and without backups on Amazon EC2.}
\label{fig:spark}
\end{figure}

\begin{table}[!t]
\centering
\renewcommand{\arraystretch}{1.15}
\small
\begin{tabular}[c]{|l|c|c|}
\hline
& {\bf Grep} & {\bf WordCount} \\
\hline
\hline
No fault tolerance & 275.83\,MB/s & 149.24\,MB/s \\
\hline
With state backup only & 263.78\,MB/s & 118.41\,MB/s \\
\hline
With item backup only & 136.81\,MB/s & 82.18\,MB/s \\
\hline
With both state and item backups & 98.75\,MB/s & 76.76\,MB/s \\
\hline
\end{tabular}
\caption{Throughput of Flink under different fault tolerance mechanisms on a
	local cluster.} 
\label{tab:flink}
\end{table}

We also study the fault tolerance overhead in Apache Flink (v1.7.2)
\cite{Carbone2015} on a local cluster (see \S\ref{sec:eval} for the cluster
setup).  Flink issues state backups as checkpoints.  For item backups, we
persist items in Kafka \cite{Kafka}, which we configure as a reliable data
source for Flink; if item backups are disabled, we load the whole dataset into
RAM to mitigate the I/O overhead (see Section~\ref{subsubsec:baseline} for
more evaluation on the use of Kafka). 
Table~\ref{tab:flink} shows the throughput of Grep and WordCount with
different fault tolerance approaches.  Enabling both state and item backups
can drop the throughput by 64\% and 49\% in Grep and WordCount, respectively,
compared to without fault tolerance.

\subsection{Classes of Streaming Algorithms}
\label{subsec:classes}

This paper focuses on three classes of streaming algorithms
that have been well studied and widely deployed.

\paragraph{Data synopsis.} Data synopsis algorithms summarize vital
information of large-volume data streams into compact in-memory data
structures with low time and space complexities.  Examples include sampling,
histograms, wavelets, and sketches \cite{Cormode2012}, and have been used in
areas such as anomaly detection in network traffic
\cite{Estan2002,Cormode2004} and social network analysis \cite{Song2009}. 
To bound the memory usage of data structures, data synopsis algorithms are
often designed to return estimates with bounded errors.  For example,
sampling algorithms (e.g., \cite{Estan2002}) perform computations on a subset
of items; sketch-based algorithms (e.g., \cite{Estan2002,Cormode2004}) map a
large key space into a fixed-size two-dimensional array of counters.

\paragraph{Stream database queries.} Stream databases manage data streams with
SQL-like operators as in traditional relational databases, and allow queries
to be continuously executed over continuous streams.  Since some SQL operators
(e.g., join, sorting, etc.) require multiple iterations to process items,
stream database queries need to adapt the semantics of SQL operators for
stream processing.  For example, they restrict the processing of items over
a time window, or return approximate query results using data synopsis
techniques (e.g., sampling in online join queries \cite{Haas1999,Luo2002}).

\paragraph{Online learning.}  Machine learning aims to model the properties of
data by processing the data (possibly over iterations) and identifying the
optimal parameters toward some objective function. It has been widely used in
web search, advertising, and analytics.

Traditional machine learning algorithms assume that the whole dataset is
available in advance and iteratively refine parameters towards a global
optimization objective on the whole dataset.  
To support stream processing, online learning algorithms define a
local objective function with respect to the current model parameter values,
and search for the model parameter values that optimize the local objective
function.  
After a large number of items are processed, it has been
shown that the local approach can converge to a global optimal point, subject
to certain conditions \cite{Zinkevich2003,Hoffman2010,Hoffman2012}.

\begin{table*}[!t]
	\centering
	\renewcommand{\arraystretch}{1.05}
	\small
	\begin{tabular}[c]{|p{1.8in}|p{4.4in}|}
		\hline
		\textbf{Common features}  & \textbf{Corresponding design choices}\\
		\hline
		\hline
		Common primitive operators  & AF-Stream abstracts a streaming algorithm into a
		set of operators and maintains fault tolerance for each operator
		(\S\ref{subsec:arch}).  It also realizes a rich set of built-in primitive
		operators (\S\ref{sec:impl}).\\
		\hline
		Intensive and skewed state updates & AF-Stream supports partial state backup
		to reduce the backup size (\S\ref{subsubsec:state}). It also exposes
		interfaces to let users specify which parts of a state are actually included
		in a backup (\S\ref{subsec:program}).\\
		\hline
		Bounding by windows & AF-Stream resets thresholds based on windows 
		(\S\ref{subsubsec:multiple}).  It also exposes interfaces to let users specify
		window types and lengths (\S\ref{subsec:program}).\\
		\hline
	\end{tabular}
	\caption{Common features of streaming algorithms and their corresponding
		design choices.}
	\label{tab:common}
\end{table*}

\subsection{Common Features}
\label{subsec:commonalities}

We identify the common features of existing streaming algorithms to be
addressed in our system design.  Table~\ref{tab:common} shows how our
design choices are related to the common features. 

\paragraph{Common primitive operators.}  We can often decompose an
operator of a streaming algorithm into a number of {\em primitive
	operators}, which form the building blocks of the same class of streaming
algorithms. For example, stream database queries are formed by few operators
such as map, union, and join \cite{Abadi2003, Gulisano2010, Qian2013};
sketch-based algorithms are formed by hash functions, matching, and numeric
arrays \cite{Yu2013}.

\paragraph{Intensive and skewed updates.}  The state of an operator often
holds {\em update-intensive} values, such as sketch counters \cite{Yu2013} and
model parameters in online learning \cite{Li2014}.  Also, state values are
updated at different frequencies.  For example, the counters corresponding to
frequent items are updated at higher rates; in online learning, only few model
parameters are frequently updated due to the sparsity nature of machine
learning features \cite{Li2014}.

\paragraph{Bounding by windows.}  Streaming algorithms often work on a bounded
sequence of items of a stream.  For example, some operators of stream database
queries process a time window of items (\S\ref{subsec:classes}); some
incremental processing systems (e.g.,
\cite{Logothetis2010,Chen2010,Bhatotia2011,Zaharia2013}) divide
a stream into mini-batches for processing.

\subsection{Distributed Implementation}
\label{subsec:distributed}

We can parallelize a streaming algorithm via distributed implementation.  We
identify two common distributed approaches, which can be used individually or
in combination.

\paragraph{Pipelining.} It divides an operator into multiple stages,
each of which corresponds to an operator or a primitive operator
(\S\ref{subsec:commonalities}).  The output items of one stage can serve as
input items to the next stage, while different stages can process different
items in parallel.

\paragraph{Operator duplication.} It parallelizes stream
processing via multiple copies of the same operator.  There are two ways
to distribute loads across operator copies: {\em data partitioning}, in which
each operator copy processes a subset of items of a stream, and {\em
	state partitioning}, in which each operator copy manages a subset of values of
a state.  Both approaches can be used simultaneously in a streaming algorithm. 

\section{AF-Stream Design}
\label{sec:system}

AF-Stream abstracts a streaming algorithm as a set of operators.  It maintains
fault tolerance for each operator by making backups for its state and
unprocessed items.
To realize approximate fault tolerance, AF-Stream issues a backup operation
only when the actual state and output deviate much from the ideal state and
output, respectively.  This mitigates backup overhead, while incurring bounded
errors after failures are recovered.

AF-Stream's approximate fault tolerance inherently differs from existing
backup-based approaches for distributed stream processing.  Unlike the
approaches that achieve error-free fault tolerance (e.g.,
\cite{Hwang2007,Akidau2013,Qian2013}), AF-Stream issues fewer backups, thereby
improving stream processing performance.  In addition, unlike the approaches
that achieve best-effort fault tolerance by also making fewer backups (e.g.,
\cite{Neumeyer2010,Storm}), AF-Stream ensures that the errors are bounded with
theoretical guarantees.  AF-Stream also differs from existing approximate
processing systems (\S\ref{sec:related}) in that it only applies approximation
in maintaining fault tolerance rather than in normal processing.

\subsection{Design Assumptions}
\label{subsec:assumptions}

To bound the errors upon failures, AF-Stream makes the following assumptions
on streaming algorithms.

AF-Stream assumes that a single lost item (without any backup) brings limited
degradations to accuracy.  Instead of focusing on identifying a specific item
(e.g., finding an outlier item), AF-Stream is designed to analyze the
aggregated behavior over a large-volume stream of items, so each item has
limited impact on the overall analysis.  For example, in network monitoring, 
we may want to identify the flows whose sums of packet sizes exceed a threshold.
Each item corresponds to a packet, whose maximum size is typically limited by
the network (e.g., 1,500 bytes).  Note that this assumption is also made by
existing stream processing systems that build on approximation techniques 
(\S\ref{sec:related}). 

Note that after we recover a failed operator, there may be errors when the
restored operator resumes processing from an actual state instead of from the
ideal state.  Nevertheless, such errors can often be amortized or compensated
after processing a sufficiently large number of items.  For example, online
learning algorithms can converge to the optimal solution even if they
start from a non-ideal state \cite{Langford2009, Smola2010, Ho2013}.  Thus,
this type of errors brings limited accuracy degradations, as also validated by
our experiments (\S\ref{sec:eval}).

Finally, AF-Stream assumes that the errors across multiple duplicate operator
copies can be aggregated. For example, machine learning algorithms maintain
linearly additive states \cite{Li2014}, thereby allowing the errors of
multiple copies to be summable.

\subsection{Architecture}
\label{subsec:arch}

\begin{figure}[!t]
	\centering
	\includegraphics[width=3.2in]{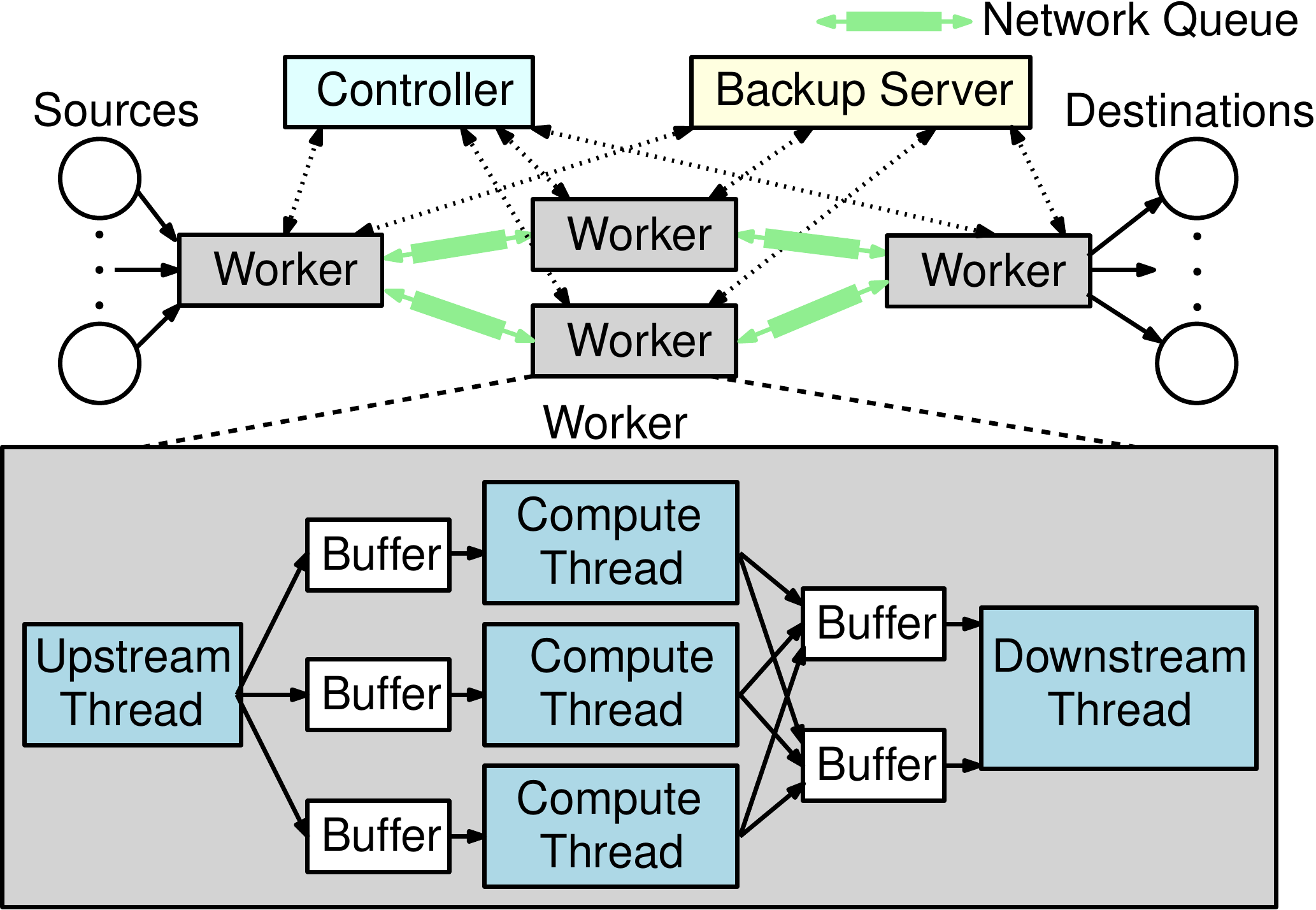}
	\caption{AF-Stream architecture.}
	\label{fig:arch}
	\vspace{-6pt}
\end{figure}

Figure~\ref{fig:arch} shows the architecture of AF-Stream.  AF-Stream
comprises multiple processes, including a single {\em controller} and multiple
{\em workers}.  Each worker manages a single operator of a streaming
algorithm, while the controller coordinates the executions of all workers.
AF-stream organizes workers as a graph, in which one or multiple sources
originate data streams, and one or multiple destinations store the final
results.  For a pair of neighboring workers, say $w_1$ and $w_2$, we call
$w_1$ an {\em upstream} worker and $w_2$ a {\em downstream} worker if the
stream processing directs from $w_1$ to $w_2$.  Specifically, a worker
receives input items from either a source or an upstream worker, processes the
items, and forwards output items to either a destination or a downstream
worker.  

AF-Stream also supports the {\em feedback} mechanism, which is essential for
some streaming algorithms (e.g., model convergence in online learning
\cite{Langford2009}).  It allows a downstream worker to optionally send
feedback messages to an upstream worker.  In other words, the communication
between each pair of neighboring workers is bi-directional.  

The controller manages the execution of each worker, which periodically sends
heartbeats to the controller.  If a worker fails, the controller recovers the
failed state and data in a new worker.  Also, each worker issues backups to a
centralized {\em backup server}, which keeps backups in reliable storage.  
The backup server should be viewed as a {\em logical} entity that can be 
substituted with any external storage system (e.g., HDFS \cite{Shvachko10}).
In this paper, we assume that the controller and the backup server are always
available and have sufficient computational resources, yet we can deploy
multiple controllers and backup servers for fault tolerance and scalability.

Each worker in AF-Stream comprises one upstream thread,
one downstream thread, and multiple compute threads.  The upstream thread
forwards input items from either a source or an upstream worker to one of the
compute threads, while the downstream thread forwards the output items from
the compute threads to either a destination or a downstream worker.  In
particular, multiple compute threads can collaboratively process items, such
that a compute thread can partially process an item and forward the
intermediate results to another compute thread for further processing.
Furthermore, the downstream thread can collect and forward any feedback
message from a downstream worker to the compute threads for processing, and
the upstream thread can forward any new feedback message from the compute
threads to an upstream worker.  Our implementation experience is that it
suffices to have only one upstream thread and one downstream thread per worker
to achieve the required processing performance. Thus, we can reserve the
remaining CPU cores for compute threads to perform heavy-weight computations.

AF-Stream connects workers and threads as follows.  For inter-worker
communications, it connects each pair of upstream and downstream
workers via a bi-directional network queue.  For inter-thread communications,
it shares data across threads via in-memory circular ring buffers.
We carefully optimize both network queue and ring buffer implementations so as
to mitigate the communication overhead (\S\ref{sec:impl}).

\subsection{Programming Model}
\label{subsec:program}

AF-Stream manages two types of objects: {\em states} and {\em items}
(\S\ref{subsec:abstraction}), whose formats are user-defined.
Each operator is associated with a state\footnote{We also allow an operator to
	maintain an empty state (i.e., stateless).}, and each state holds an array of
binary values.  Also, each operator supports three types of items:
(i) {\em data items}, which collectively refer to the input and output items
that traverse along workers from upstream to downstream in stream processing,
(ii) {\em feedback items}, which traverse along workers from downstream to
upstream, and (iii) {\em punctuation items}, which specify the end of an
entire stream or a sub-stream for windowing (\S\ref{subsec:commonalities}).

AF-Stream has two sets of interfaces: {\em composing interfaces} and
{\em user-defined interfaces}, as listed in 
Appendix~\ref{sec:appendix_interfaces}. 
The composing interfaces are used to define the AF-Stream architecture and
the stream processing workflows.  Their functionalities can be
summarized as follows: (i) connecting workers (in server hostnames) and the
source/destination (in file pathnames), (ii) adding threads to each worker,
(iii) connecting threads within each worker, (iv) pinning a thread to a CPU
core, and (v) specifying the windowing type (e.g., hopping window, sliding
window, decaying window) and window length.

On the other hand, user-defined interfaces allow programmers to add specific
implementation details.  AF-Stream automatically calls the user-defined
interfaces and processes items based on their implementations.  Specifically,
the upstream thread can be customized to receive items, dispatch them to the
compute threads, and optionally send feedback items to upstream workers.
Similarly, the downstream thread can be customized to send output items and
optionally receive feedback items from downstream workers.  The compute
thread can be customized to process data items, feedback items, and
punctuation items.

AF-Stream provides three user-defined interfaces for building
operators that realize approximate fault tolerance for state
backup (\S\ref{subsubsec:state}): (i) {\tt StateDivergence}, which quantifies
the divergence between the current state and the most recent backup state,
(ii) {\tt BackupState}, by which operators provide the state to be saved in
reliable storage via the backup server, and (iii) {\tt RecoverState}, by which
operators obtain the most recent backup state from the backup server.
Such interfaces are included in a base class called 
{\tt FTOperator}, which can be extended by specific fault-tolerant operators.
By default, the three interfaces have empty implementations, meaning that
state backup is disabled.

\begin{lstlisting}[frame=lines, language=C++, caption=Implementation of the fault-tolerant hashmap., captionpos=b, label=lst:example, float=!t]template<class K, class V> 
class FTHashMap: public FTOpertor {
private:
	std::unordered_map<K,V> map;  // current map
	std::unordered_map<K,V> bkup_map; // backup map
public:
	double StateDivergence(){
	double d = 0;
		for (auto it = map.begin(); it != map.end(); it++)
			d += abs(map[it->first] - bkup_map[it->first]);
		return d;	
	}
	void BackupState(BackupData &backup_data){
		backup_data.meta.len = sizeof(map);
		memcpy(backup_data.data, &map, sizeof(map));
		bkup_map = map;
		// backup_data is sent to backup server
	}
	void RecoverState(BackupData &recover_data){
		// recovery_data is fetched from backup server
		memcpy(&map, recover_data.data, sizeof(map));
		bkup_map = map;
	}
};
\end{lstlisting}

For example, Listing~\ref{lst:example} shows the code snippet of a
fault-tolerant hashmap {\tt FTHashMap}, which extends the base class 
{\tt FTOperator}. In
the implementation, {\tt StateDivergence} can specify the divergence of the
hash map (e.g., in Listing~\ref{lst:example}, we use the Manhattan distance of
counter values between the current and backup hash maps).  {\tt BackupState}
can issue backups for the counter values in the hashmap, while 
{\tt RecoverState} can restore the hashmap.  Here, we assume that the backup
hashmap is stored in a backup server.  Note that {\tt BackupState} and 
{\tt RecoverState} also update the current backup hashmap in {\tt bkup\_map}
for the computation of the state divergence in {\tt StateDivergence}.  We
further elaborate how the functions are called in \S\ref{subsubsec:state}.

\subsection{Approximate Fault Tolerance}
\label{subsec:ft}

AF-Stream maintains approximate fault tolerance for both the state and items
of each operator.  We introduce both state backup (\S\ref{subsubsec:state})
and item backup (\S\ref{subsubsec:item_backup}) as individual backup
mechanisms that are complementary to each other and are configured by
different thresholds. 

\subsubsection{State Backup}
\label{subsubsec:state}

Recall that AF-Stream calls {\tt BackupState} to issue a backup operation for
the state of an operator to the backup server.  Instead of making frequent
calls to {\tt BackupState}, AF-Stream defers the call to {\tt BackupState}
until the current state deviates from the most recent backup state by some
threshold (denoted by $\theta$).  Specifically, AF-Stream caches a copy of the
most recent backup state of the operator in local memory.  Each time when an
item updates the current state, AF-Stream calls {\tt StateDivergence} to
compute the divergence between the current state and the cached backup state.
If the divergence exceeds $\theta$, AF-Stream calls {\tt BackupState} to issue
a backup for the current state, and updates the cached copy accordingly.

Making the backup of an entire state may be expensive, especially if the state
is large.  To further mitigate backup overhead, AF-Stream supports the
backup of a {\em partial} state, such that programmers can only specify the
list of updated values (together with their indices) that make a state
substantially deviate as the returned
state of {\tt BackupState}.  Partial state backup is useful when only few
values of a state are updated (\S\ref{subsec:commonalities}).

Each state update triggers a call to {\tt StateDivergence}.  For
most operators and divergence functions (e.g., difference of cardinalities,
Manhattan distance, Euclidean distance, or maximum difference of values),
{\tt StateDivergence} only involves
few arithmetic operations.  For example, suppose that a divergence function
returns the sum of differences (denoted by $D$) of all values between two
states.  If we update a value, then we can compute the new sum of differences
(denoted by $D'$) as $D' = D + \Delta$, where $\Delta$ is the change of the
value.  Thus, if the operator has only one compute thread, the compute thread
itself can evaluate the divergence with limited overhead.  On the other hand,
if the operator has multiple compute threads to process a state simultaneously,
say by operator duplication (\S\ref{subsec:distributed}), then summing the
divergence of all compute threads can be expensive due to inter-thread
communications.  AF-Stream currently employs a smaller threshold in each
compute thread, such that the sum of the thresholds does not exceed $\theta$.
For example, we can set the threshold as $\theta/n$ if there are $n > 1$
compute threads.

\subsubsection{Item Backup}
\label{subsubsec:item_backup}

AF-Stream makes item backups {\em selective}, such that a worker decides to
make a backup for an item based on the item type and a pre-defined threshold.
For a punctuation item, a worker always makes a backup for it to
accurately identify the head and tail of a sequence to be processed; for a
data item or a feedback item, the worker counts the number of pending items
that have not yet been processed.  If the number exceeds a threshold (denoted
by $l$), the worker makes backups for all pending items so as to bound the
number of missing items upon failures.

In AF-Stream, item backups are issued by the receiver-side worker (i.e., a
downstream worker handles the backups of data and punctuation items, while an
upstream worker handles the backups of feedback items).  The rationale is that
the receiver-side worker can exactly count the number of unprocessed items and
decide when to issue item backups.  Consider a data item that traverses from
an upstream worker to a downstream worker.  After the upstream worker sends
the data item, it keeps the data item in memory.  When the downstream worker
receives the data item, it examines the number of unprocessed data items.  If
the number exceeds the threshold $l$, then the worker makes backups for all
pending data items before dispatching the data item to the compute threads for
processing.  It also returns an acknowledgement (ACK) to the upstream worker,
which can then release the cached data item from memory.

Performing item backups on the receiver side enables us to handle ACKs
differently from the traditional upstream backup approach 
\cite{Akidau2013, Hwang2007, Qian2013}, as shown in
Figure~\ref{fig:data_backup}.  Upstream backup makes item backups on the
sender side.  For example, an upstream worker is responsible for making item
backups for data items. It caches the output items in memory and waits until
its downstream worker replies an ACK.  However, the downstream worker sends
the ACK only when the item is completely processed, and the upstream worker
may need to cache the item for a long time.  In contrast, our receiver-side
backup approach can send an ACK before an item is processed, and hence limits
the caching time in the upstream worker.

We emphasize that our receiver-side backup approach should be viewed as a
complementary solution (rather than a replacement) to upstream backup from the
perspective of approximate fault tolerance.  It still caches items in memory
in the upstream worker (as shown in Figure~\ref{fig:data_backup}(b)) as in
upstream backup, but opts to issue backups on the receiver side rather than on
the sender side.  As discussed above, it not only reduces the caching time in
the upstream worker, but also favors our approximate fault tolerance design as
the receiver-side worker can count the number of unprocessed data items and
issue item backups if the number exceeds the threshold $l$.

\begin{figure}[!t]
	\centering
	\includegraphics[width=3in]{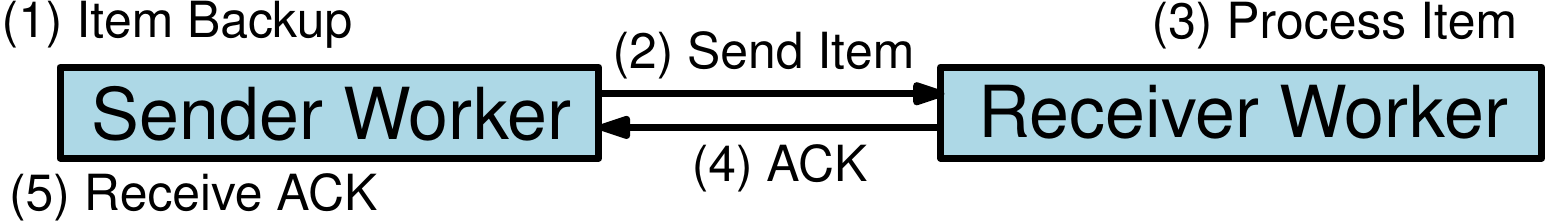}\\
	{\small (a) Upstream backup} \\
	\vspace{1em}
	\includegraphics[width=3in]{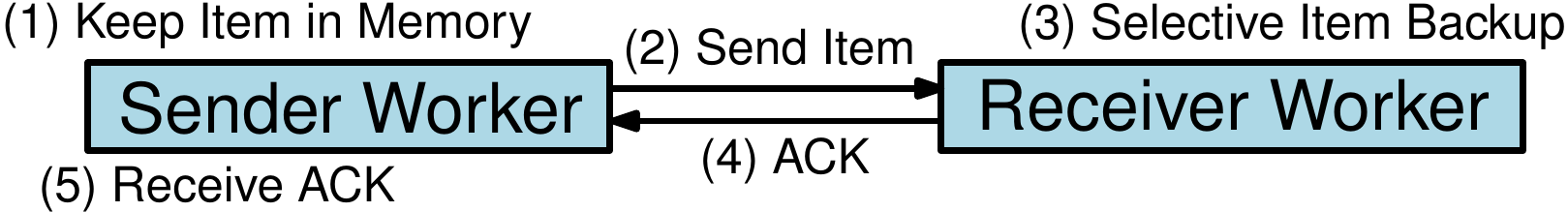}\\
	{\small (b) Our receiver-side backup}
	\caption{Comparisons between upstream backup and our receiver-side backup.}
	\label{fig:data_backup}
	\vspace{-6pt}
\end{figure}

The rate of item backups in AF-Stream is responsive to the current load of a
worker.  Specifically, when a worker is heavily loaded, it will accumulate
more unprocessed items and hence trigger more backups.  Nevertheless, the
backup overhead remains limited since item processing is the bottleneck in
this case.  On the other hand, when a worker is lightly loaded, it issues
fewer backups and avoids compromising the processing performance.

Note that the sender-side worker may miss all unacknowledged items if it
fails.  Currently AF-Stream does not make backups for such unacknowledged
items, but instead bounds the maximum number of unacknowledged items (denoted
by $\gamma$) that a worker can cache in memory.  If the number of
unacknowledged items reaches $\gamma$, then the worker is blocked from sending
new items until the number drops.  The value of $\gamma$ can be generally very
small as our receiver-side approach allows a worker to reply an ACK
immediately upon receiving an item.

\subsubsection{Recovery}
\label{subsubsec:recovery}

If the controller detects a failed worker, it activates recovery and restores
the state of the failed worker in a new worker.  The new worker calls
{\tt RecoverState} to retrieve the most recent backup state.  Also, AF-Stream
replays the backup items into the new worker, which can then process them to
update the restored state.

Some backup items may have already been processed and updated in the recovered
state, and we should avoid the duplicate processing of those items.  AF-Stream
uses the sequence number information for backup and recovery.  When a worker
receives an item, it associates the item with a sequence number.  Each
state also keeps the sequence number of the latest item that it includes.  When
AF-Stream replays backup items during recovery, it only chooses the items
whose sequence numbers are larger than the sequence number kept in the
restored state.
To reduce the recovery time, our current implementation restores fewer items
by replaying only the most recent sequence of consecutive items. Since the
number of unprocessed items before the replayed sequence is at most $l$, the
errors are still bounded.

\subsubsection{User-Configurable Thresholds}
\label{subsubsec:config_thresholds}

AF-Stream exports three user-configurable threshold parameters: (i) $\Theta$,
the maximum divergence between the current state and the most recent backup
state, (ii) $L$, the maximum number of unprocessed non-backup items and (iii)
$\Gamma$, the maximum number of unacknowledged items.  It automatically 
{\em tunes} the thresholds $\theta$, $l$, and $\gamma$ at runtime with respect
to $\Theta$, $L$ and $\Gamma$, respectively, such that the errors are bounded
independent of the number of failures that have occurred and the number of
workers in a distributed environment.  

How the error bounds quantify the accuracy of a streaming algorithm is
specific to the algorithmic design and cannot be directly generalized for
all streaming algorithms.  Nevertheless, in Appendix~\ref{sec:appendix_error}, 
we present both theoretical analysis and numerical examples on how these
parameters are translated to the accuracy of two streaming algorithms, namely
heavy-hitter detection in Count-Min Sketch \cite{Cormode2005a} and Ripple Join
\cite{Haas1999}, and show how their accuracy will be degraded under
approximate fault tolerance with respect to $\Theta$, $L$, and $\Gamma$.
Furthermore, in \S\ref{sec:theory}, we formally analyze how these parameters
affect the model convergence of online learning.  We also resort to
experiments to empirically evaluate the accuracy for various parameter choices
in AF-Stream (\S\ref{sec:eval}).

AF-Stream currently requires users to have domain knowledge on configuring the
parameters with respect to the desired level of accuracy.  We pose the issue
of configuring the parameters without user intervention as future work.

\subsection{Error Analysis}
\label{subsec:ft_analysis}

We analyze how AF-Stream bounds the errors upon failures for different
failure scenarios in a distributed environment.
We quantify the error bounds in two aspects: (i) the divergence between the
actual and ideal states and (ii) the number of lost output items, based on
our assumptions (\S\ref{subsec:assumptions}).
In particular, we assume that each lost item brings limited accuracy
degradations.  To quantify the degradations, after updating the current state
with an item, we let the divergence between the current state and the most
recent backup state increase by at most $\alpha$ and the number of output
items generated by the update be at most $\beta$, where $\alpha$ and $\beta$
are two constants specific for a streaming algorithm.  Note that both
$\alpha$ and $\beta$ are introduced purely for our analysis, while programmers
only need to configure $\Theta$, $L$, and $\Gamma$ for building a streaming
algorithm.  Also, our analysis does not assume any probability distribution of
failure occurrences.

\subsubsection{Single Failure of a Worker}
\label{subsubsec:single}

\begin{figure}[t]
	\centering
	\includegraphics[width=3in]{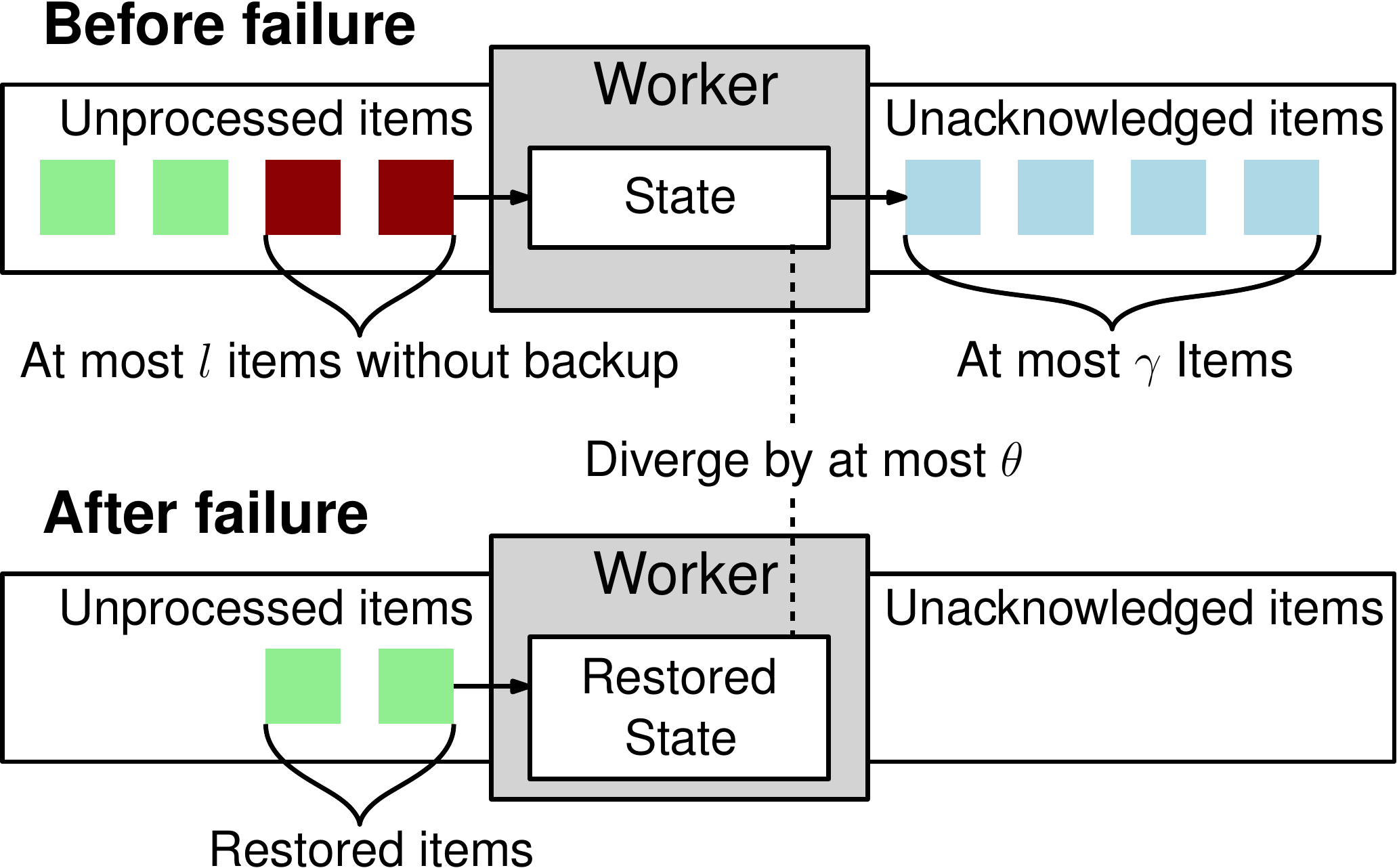}
	\caption{Errors due to a single failure of a worker.}
	\label{fig:fail}
	\vspace{-6pt}
\end{figure}

Suppose that AF-Stream sees only a single failure of a worker over its
lifetime.  We consider the worst case when the failed worker is restored, as
shown in Figure~\ref{fig:fail}.  First, for the divergence between the actual
and ideal states, the restored state diverges from the actual state before the
failure by at most $\theta$, and each of the lost unprocessed items (with $l$
at most) changes the divergence by at most $\alpha$.  Thus, the total
divergence between the actual and ideal states is at most $\theta + l\alpha$,
which is bounded.  Second, for the number of lost output items, AF-Stream
loses at most $\gamma$ unacknowledged output items, and each of the lost
unprocessed items (with $l$ at most) leads to at most $\beta$ lost output
items.  Thus, the total number of lost output items is at most $\gamma +
l\beta$, which is also bounded.  By initializing $\theta$, $l$, and $\gamma$
as $\Theta$, $L$, and $\Gamma$, respectively, we can bound the errors with
respect to the user-configurable thresholds.

\subsubsection{Multiple Failures of a Single Worker}
\label{subsubsec:multiple}

Suppose that AF-Stream sees multiple failures of a single worker over its
lifetime, while other workers do not fail.  In this case, the errors after
each failure recovery will be accumulated.  AF-Stream bounds the accumulated
errors by adapting the three thresholds $\theta$, $l$, and $\gamma$ with
respect to the three user-configurable parameters $\Theta$, $L$ and $\Gamma$,
respectively, and the number of failures denoted by $k$.  First, a worker
initializes the thresholds with $\theta=\frac{\Theta}{2}$, $l=\frac{L}{2}$,
and $\gamma=\frac{\Gamma}{2}$.  After each failure recovery, the worker halves
each threshold; in other words, after recovering from $k$ failures, the
thresholds become $\theta=\frac{\Theta}{2^{k+1}}$,
$l=\frac{L}{2^{k+1}}$, and
$\gamma=\frac{\Gamma}{2^{k+1}}$.
By summing up the errors accumulated over $k$ failures, 
we can show that the divergence between the actual
and ideal states is at most $\Theta+L\alpha$ and the number of
lost output items is at most $\Gamma+L\beta$.  Note that the errors are
bounded independent of the number of failures $k$.

Our adaptations imply that the thresholds become very small after many
failures, so AF-Stream reduces to error-free fault tolerance and makes
frequent backups.  AF-Stream addresses this issue by allowing the
thresholds to be reset.  In particular, many streaming algorithms work with
windowing, and AF-Stream can use different strategies to reset the thresholds
for different windowing types (\S\ref{subsec:program}).  For example,
for the hopping window, AF-Stream resets the thresholds to the initial values
$\theta\!=\!\frac{\Theta}{2}$, $l\!=\!\frac{L}{2}$, and
$\gamma\!=\!\frac{\Gamma}{2}$ at
the end of a window; for the sliding window, AF-Stream tracks the time of the
last failure and increases the thresholds once a failure is not included in
the window;  for the decaying window, AF-Stream always keeps
$\theta\!=\!\Theta$, $l\!=\!L$, and $\gamma\!=\!\Gamma$ and disables threshold
adaptations, as the errors fade over time.

\subsubsection{Failures in Multiple Workers}
\label{subsubsec:multiple2}

We address the general case when a general number of failures can happen
in multiple workers in a distributed environment.  To make the error bounds
independent of the number of workers, AF-Stream employs small initial
thresholds, such that the accumulated errors are the same as those in the
single-worker scenario.  Specifically, for operator duplication with $n$
copies, AF-Stream initializes each copy with $\theta\!=\!\frac{\Theta}{2n}$,
$l\!=\!\frac{L}{2n}$, and $\gamma\!=\!\frac{\Gamma}{2n}$; for a pipeline with
$m$ operators, AF-Stream initializes the $i$-th operator in the pipeline with
$\theta\!=\!\frac{\Theta}{2\beta^{m-i}}$, $l\!=\!\frac{L}{2\beta^{m-i}}$, and
$\gamma\!=\!\frac{\Gamma}{2\beta^{m-i}}$ (since each lost item can lead to at
most $\beta^m$ lost output items after $m$ pipeline stages).  By summing the
errors over all failures and all workers, we can obtain the same error bounds
as in \S\ref{subsubsec:multiple}.

\section{Case Study: Convergence of Online Learning}
\label{sec:theory}

In this section, we analyze how AF-Stream affects the {\em model convergence}
of online learning.  

Our insight is that general machine learning algorithms are
known to be iterative-convergent and error-tolerant \cite{Bottou2007,Dai2015},
meaning that they still converge to an optimum even in the face of state
errors and loss of items.  Offline parameter server systems
\cite{Dai2015,Ho2013,Li2014} have leveraged this nature and proposed relaxed
consistency models to improve machine learning performance.   Here, we prove
that AF-Stream preserves the iterative-convergent and error-tolerant nature of
online learning independent of the number of failures $k$ (defined in
\S\ref{subsubsec:multiple}). 

We focus on online learning algorithms based on {\em stochastic gradient
	descent (SGD)} \cite{Bottou1998}, which has been used by many popular machine
learning algorithms to learn model parameters via stochastic optimization
\cite{Bottou2010}.  Our goal is to prove that upon the recovery of 
{\em multiple} failures, online learning algorithms based on SGD still
converge to an optimum under AF-Stream. 

Note that our analysis inherently differs from the standard SGD analysis, as
we must address the complicated interactions of SGD and the approximate fault
tolerance design of AF-Stream.  In particular, after each failure recovery,
AF-Stream will (slightly) deviate SGD from its standard behavior, and
characterizing the aggregated impact of approximate fault tolerance after
multiple failures is non-trivial. 

We formulate our analysis as follows. 
SGD maintains its state as a vector of model parameters.  Let $\PS_t$ denote
the state vector of SGD after the $t$-th item $X_t$ is processed,
assuming that no failure happens.  SGD defines a risk function $f_t$ with
respect to the state for each $X_t$.  To update model parameters, SGD first
computes the gradient of $\Pg_t = \nabla f_t(\PS_{t-1})$, followed by updating
the vector of model parameters $\PS_{t} = \PS_{t-1}-\eta_t\Pg_t$, where
$\eta_t=\tfrac{1}{\sqrt{t}}$ is a parameter that controls the learning rate.

The convergence theory of SGD \cite{Langford2009,Zinkevich2003} further
defines the average risk function (also known as {\em regret}) of the first
$T$ items as $\tfrac{1}{T}\sum_{t=1}^{T}f_t(\PS_t)$.  The theory addresses the
problem whether the regret can be minimized.  Prior studies
\cite{Dai2015,Ho2013,Langford2009} prove that under certain conditions the
regret converges to the global optimum (denoted by $\PS^*$):
\begin{equation}
\lim\limits_{T\to\infty}(\AvgT f_t(\PS_t)-\AvgT f_t(\PS^*))=0.
\nonumber
\end{equation}

We denote the actual model parameters by $\NS_t$, as opposed to ideal model
parameters denoted by $S_t$.  We also denote the global optimum by $\PS^*$.
Let $k$ be the number of failures encountered by AF-Stream throughout its
lifetime.  In practice, we expect that $k$ is much less than $T$ (i.e., the
number of items being processed), since failures are less common compared to
the numerous items in stream processing.  Upon detecting each failure,
AF-Stream performs recovery (\S\ref{subsubsec:recovery}).
Theorem~\ref{theorem:sgd} formalizes the convergence guarantee. 
\begin{theorem}
	\label{theorem:sgd}
	Consider an SGD sequence $\{\NS_t\}$ generated under AF-Stream.
	The limit
	\begin{equation}
	\lim\limits_{T\to\infty}[\AvgT f_t(\NS_t) - \AvgT f_t(\PS^*)]=0 \nonumber
	\end{equation}
	still holds even after the recovery of $k$ failures, provided that $k \ll T$. 
\end{theorem}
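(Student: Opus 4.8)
The plan is to treat the fault-tolerant sequence $\{\NS_t\}$ as a \emph{perturbed} SGD trajectory and prove a regret bound for it directly, rather than comparing it term-by-term with the ideal sequence $\{\PS_t\}$; a naive term-wise comparison fails, since without strong convexity the two trajectories need not reconverge after a single perturbation, so $\sum_t \|\NS_t - \PS_t\|$ can be $\Theta(T)$. Concretely, I would model each failure recovery as injecting an additive perturbation $\phi_t$ into the update, writing $\NS_t = \NS_{t-1} - \eta_t \Ng_t + \phi_t$ with $\Ng_t = \nabla f_t(\NS_{t-1})$, where $\phi_t = 0$ except at the (at most $k$) recovery steps. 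The perturbation at a recovery absorbs both effects identified in the error analysis (\S\ref{subsec:ft_analysis}): the reset of the state to the cached backup, whose divergence from the pre-failure state is at most $\theta$, and the contribution of the lost unprocessed items, of which there are at most $l$, each changing the state by at most $\alpha$. As in the standard analysis I assume the $f_t$ are convex with gradients bounded by $G$ and that iterates stay in a bounded region of diameter $D$ (via projection), so $\|\Ng_t\| \le G$ and $\|\NS_{t-1} - \PS^*\| \le D$ throughout.

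The core step is the usual telescoping argument applied to the perturbed recursion. Starting from convexity, $f_t(\NS_{t-1}) - f_t(\PS^*) \le \langle \Ng_t, \NS_{t-1} - \PS^*\rangle$, and expanding $\|\NS_t - \PS^*\|^2$ with the perturbed update, I would obtain

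\begin{equation}
f_t(\NS_{t-1}) - f_t(\PS^*) \le \frac{\|\NS_{t-1}-\PS^*\|^2 - \|\NS_t - \PS^*\|^2}{2\eta_t} + \frac{\eta_t}{2}\|\Ng_t\|^2 + R_t, \nonumber
\end{equation}

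where $R_t$ collects the cross term $\tfrac{1}{\eta_t}\langle \phi_t,\,\NS_{t-1}-\eta_t\Ng_t-\PS^*\rangle$ and the quadratic term $\tfrac{\|\phi_t\|^2}{2\eta_t}$, and vanishes at every non-recovery step. Summing over $t=1,\dots,T$, the first two groups of terms reproduce the classical $O(\sqrt{T})$ regret bound: with $\eta_t = 1/\sqrt{t}$ the telescoping sum is controlled by $D^2/(2\eta_T) = \tfrac{1}{2}D^2\sqrt{T}$, and $\sum_t \tfrac{\eta_t}{2}\|\Ng_t\|^2 \le \tfrac{G^2}{2}\sum_t 1/\sqrt{t} = O(\sqrt{T})$.

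The crux is to show that the perturbation remainder $\sum_t R_t$ is also $O(\sqrt{T})$, and this is exactly where the approximate fault tolerance design pays off. By the threshold-halving rule of \S\ref{subsubsec:multiple}, the $i$-th recovery uses $\theta = \Theta/2^{i+1}$ and $l = L/2^{i+1}$, so its perturbation magnitude satisfies $\|\phi_{t_i}\| \le \theta + l\alpha \le (\Theta + L\alpha)/2^{i+1}$, whence $\sum_{i=1}^{k}\|\phi_{t_i}\| \le \Theta + L\alpha$ \emph{independent of $k$}. Each $R_{t_i}$ is weighted by $1/\eta_{t_i} = \sqrt{t_i} \le \sqrt{T}$ and, using $\|\NS_{t_i-1}-\PS^*\| \le D$ and $\|\Ng_{t_i}\| \le G$, is bounded by $\sqrt{T}\bigl(\|\phi_{t_i}\|(D+\eta_{t_i}G) + \tfrac{1}{2}\|\phi_{t_i}\|^2\bigr)$. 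Summing over the $k$ recoveries and invoking the geometric bound gives $\sum_t R_t \le \sqrt{T}(\Theta+L\alpha)(D+G) + \sqrt{T}(\Theta+L\alpha)^2 = O(\sqrt{T})$. Hence the total regret of $\{\NS_t\}$ is $O(\sqrt{T})$, so $\AvgT f_t(\NS_{t-1}) - \AvgT f_t(\PS^*) = O(1/\sqrt{T}) \to 0$. A final routine step converts $f_t(\NS_{t-1})$ into $f_t(\NS_t)$: since $\|\NS_t - \NS_{t-1}\| \le \eta_t G + \|\phi_t\|$, Lipschitzness makes the two averaged sums differ by $O(1/\sqrt{T})$, giving the theorem's limit.

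I expect the main obstacle to be the honest bookkeeping of $R_t$: decomposing a single recovery into its state-reset and lost-item components and certifying that their combined effect on the parameter vector is captured by $\theta + l\alpha$ relies on the assumptions (\S\ref{subsec:assumptions}) that per-item state changes are bounded by $\alpha$ and that these errors are additive across the replayed and skipped items. The condition $k \ll T$ is what guarantees that this bounded perturbation mass, even though weighted by the late step sizes $\sqrt{t_i}$, remains $o(T)$ after averaging, while the geometric threshold schedule is precisely what makes the bound uniform in $k$.
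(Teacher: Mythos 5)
Your proposal is correct, but it takes a genuinely different route from the paper's proof. The paper splits the $T$ items into $k+1$ failure-free subsequences, proves a restart-style regret bound for each contiguous run (Lemma~\ref{lemma:sgd_seq}, a Zinkevich-type bound of the form $H^2(\sqrt{a+T}-\sqrt{a})+F^2\sqrt{a+T}$), bridges each discarded stretch with an auxiliary continued sequence $\NS'_t$ so that the lemma applies, and sums the pieces, arriving at $\mathcal{O}(H(\Theta^2+\alpha L^2))+F^2k\sqrt{T}+(H^2+F^2)\sqrt{T}$; the restart at every failure is what creates the $F^2k\sqrt{T}$ term. You instead keep a single global telescoping over all $T$ steps of a perturbed recursion and isolate the failures in a remainder $\sum_t R_t$, which the threshold-halving rule makes $\mathcal{O}(\sqrt{T})$ with constants independent of $k$, since the perturbation mass $\sum_i\|\phi_{t_i}\|\le\Theta+L\alpha$ is a geometric series. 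Both routes rest on the same pillars (convexity, bounded gradients, bounded diameter, $\eta_t=1/\sqrt{t}$, and the halving schedule), but yours yields a bound uniform in $k$ where the paper's degrades linearly in $k$, and it avoids the auxiliary-sequence bookkeeping; the paper's decomposition, in exchange, keeps the per-failure cost explicit and models the system semantics more literally. That is the one caveat worth flagging: in the paper's formalization the state is frozen at the checkpoint $\NS_{c_i}$ throughout a discarded stretch (lost items never contribute gradient steps), whereas your recursion applies every gradient and corrects with a single jump of size at most $\theta_i+l_i\alpha$ at recovery, so strictly speaking you bound the regret of a slightly different sequence. The patch is routine: on the discarded indices the two sequences remain within $\theta_i+l_i\alpha$ of each other, so by the gradient bound their regrets differ by a constant for any fixed failure pattern, and this constant vanishes after dividing by $T$ --- the same order of imprecision the paper itself tolerates when it writes the per-failure cost as $\mathcal{O}(H(\theta_{i+1}^2+\alpha l_{i+1}^2))$ without the stretch length appearing.
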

\noindent
{\bf Proof.} See Appendix~\ref{sec:appendix_convergence}. 

\paragraph{Remarks.} Theorem~\ref{theorem:sgd} implies that the model
convergence of AF-Stream still holds regardless of the value of $k$, provided
that $k \ll T$.  We also resort to experiments to empirically evaluate the
convergence performance of AF-Stream for various settings
(\S\ref{subsec:eval_convergence}).

\section{Implementation}
\label{sec:impl}

We have implemented a prototype of AF-Stream in C++.  AF-Stream connects the
controller and all workers using ZooKeeper \cite{Hunt2010} to manage fault
tolerance.  It also includes a backup server, which is now implemented
as a daemon that receives backup states and items from the workers via TCP
and writes them to its local disk.  Our current prototype has around 47,800
LOC in total.

Our prototype realizes a number of built-in primitive operators and their
implementations of {\tt StateDivergence}, {\tt BackupState}, and {\tt
	RecoverState}, listed in Appendix \ref{sec:appendix_interfaces}.  We now
support the numeric variable, vector, matrix, hash table, and set, and provide
them with built-in fault tolerance.  For example, we can keep elements in our
built-in hash table, whose fault tolerance is automatically enabled.
Programmers can also build their own fault-tolerant operators via implementing
the above three interfaces.  

While we currently implement AF-Stream as a clean-slate system, we explore how
to integrate our approximate fault tolerance notion into existing stream
processing systems in future work.  Such integration must address at least the
following technical challenges, including: (i) the three interfaces 
(i.e., {\tt StateDivergence}, {\tt BackupState}, and {\tt RecoverState}) for
state backup (\S\ref{subsubsec:state}), (ii) the receiver-side backup and
acknowledgement mechanisms for item backup (\S\ref{subsubsec:item_backup}), 
(iii) the new recovery mechanism (\S\ref{subsubsec:recovery}), and (iv) the
user-configurable thresholds (\S\ref{subsubsec:config_thresholds}).

\paragraph{Communication optimization.} 
Our prototype specifically optimizes both inter-thread and
inter-worker communications for high-throughput stream processing.  For
inter-thread communications, we implement a lock-free multi-producer,
single-consumer (MPSC) ring buffer \cite{NatSysLab}.  We assign one MPSC ring
buffer per destination, and group output items from different compute threads
by destinations (\S\ref{subsec:arch}).  This offloads output item scheduling
to compute threads, and simplifies subsequent inter-worker communications.
Also, we only pass the pointers to the items to the ring buffer, so that the
rate of the number of items that can be shared is independent of the item
size.

For inter-worker communications, we implement a bi-directional network queue
with ZeroMQ \cite{ZeroMQ}.  ZeroMQ itself uses multiple threads to
manage TCP connections and buffers, and the thread synchronization is
expensive. Thus, we modify ZeroMQ to remove its thread layer, and make the
upstream and downstream threads of a worker directly manage TCP connections
and buffers.  The modifications enable us to achieve high-throughput stream
processing in a 10~Gb/s network (\S\ref{sec:eval}).

\paragraph{Asynchronous backups.} 
We further mitigate the performance degradation of making backups
using an asynchronous technique similar to \cite{Fernandez2014}.
Specifically, our prototype employs a dedicated backup thread for each worker.
It collects all backups of a worker and sends them to the backup server,
allowing other threads to proceed normal processing after generating backups.
Note that the asynchronous technique does not entirely eliminate backup
overhead, since the backup thread can still be overloaded by frequent backup
operations.  Thus, we propose approximate fault tolerance to limit the number
of backup operations. 

\paragraph{Consistency models.}  AF-Stream currently supports different
consistency models to synchronize the views of workers.  For example, in SGD, 
AF-Stream processes streaming items in multiple local workers, each of
which computes gradients on the items based on its local model and sends
the local gradient results to a global worker.  The global worker aggregates
all the results to update its global model, and feedbacks the updated global
model to each local worker to update its local model.   AF-Stream can choose
one of the following three consistency models for synchronizing the model
updates: (i) the Asynchronous Parallel (ASP) model \cite{Zinkevich2010}, in
which after a local worker sends local results to the global worker, it
continues to use its (stale) local model to compute gradients for the new
incoming items, until the global worker feedbacks an updated global model;
(ii) the Bulk Synchronous Parallel (BSP) model \cite{Bradley2011}, in which
after a local worker sends local results to the global worker, it stops
working and waits until receiving the feedback from the global worker; and
(iii) the Stale Synchronous Parallel (SSP) model \cite{Ho2013}, which allows
local workers to use the stale model with at most an additional number of
iterations (which is user-configurable).   Note that the BSP model is adopted
by Spark Streaming \cite{Zaharia2013}, which processes streaming items in
mini-batches.

\section{Experiments}
\label{sec:eval}

We present evaluation results on AF-Stream through experiments on Amazon EC2
and a local cluster.

\begin{table*}[!ht]
\centering
\renewcommand{\arraystretch}{1.0}
\small
\begin{tabular}[c]{|p{0.7in}|p{1.25in}|p{1.25in}|p{1.15in}|p{0.45in}|l|l|}
\hline
\multirow{2}{*}{\bf Algorithm}  
& \multicolumn{1}{c|}{\multirow{1}{*}{{\bf Upstream stage}}}
& \multicolumn{1}{c|}{\multirow{1}{*}{{\bf Downstream stage}}}
& \multicolumn{1}{c|}{\multirow{1}{*}{\bf State}} 
& \multicolumn{3}{c|}{{\bf Trace}}\\
\cline{5-7}
&  \multicolumn{1}{c|}{{\bf (\# workers)}} 
&  \multicolumn{1}{c|}{{\bf (\# workers)}}
& \multicolumn{1}{c|}{{\bf (Divergence)}}
& {\bf Source} & {\bf \# items} & {\bf Size}\\
\hline
\hline
Grep & Parse and send the lines with the matched pattern (10)
	 & Merge matched lines (1) 
	 & None
	 & Gutenberg \cite{Gutenberg} & 300M & 15 GB\\
\hline
WordCount & Parse and send words with intermediate counts (4)
		  & Aggregate intermediate word counts (7)
& Hash table of word counts (maximum difference of word counts)
	& Gutenberg \cite{Gutenberg} & 300M & 15 GB\\
\hline
HH detection
	& Update packet headers into a local sketch and send the local sketch
	and local HHs (10)
	& Merge local sketches into a global sketch, and check if local HHs
	are actual HHs with the global sketch (1)
	& Matrix of counters (maximum difference of counter values in bytes)
	& Caida \cite{Caida} & 1G & 40 GB\\
\hline
Online join
	& Find and send tuples that have matching packet headers in two
	streams (2)
	& Perform join and aggregation (9)
	& Set of sampled items (difference of number of packets)
	& Caida \cite{Caida} & 1G & 40 GB\\
\hline 
Online logistic regression
	& Train and send the local model, and update the
	  local model with feedback items (10)
	& Merge local models to form a global model, and send the global model
	  as feedback items (1)
	& Hash table of model parameters (Euclidean distance)
	& KDD Cup 2012 \cite{Niu2012} & 110M & 42 GB\\
\hline
\end{tabular}
\caption{Summary of streaming algorithms evaluated in \S\ref{subsec:eval_ec2}.}
\label{tab:eval}
\end{table*}

\subsection{Performance on Amazon EC2}
\label{subsec:eval_ec2}

We first evaluate AF-Stream on an Amazon EC2 cluster.  The cluster is located
in the {\tt us-east-1b} zone. It comprises a total of 12 instances: one 
{\tt m4.xlarge} instance with four CPU cores and 16\,GB RAM, and 11 
{\tt c3.8xlarge} instances with 32 CPU cores and 60\,GB RAM each.  We deploy
the controller and the backup server in the {\tt m4.xlarge} instance, and a
worker in each of the {\tt c3.8xlarge} instances.  We connect all instances
via a 10\,Gb/s network.

Our experiments consider five streaming algorithms: Grep, WordCount, heavy
hitter detection, online join, and online logistic regression.  The latter
three algorithms are chosen as the representatives for data synopsis, stream
database queries, and online learning, respectively (\S\ref{subsec:classes}).
We pipeline each streaming algorithm in two stages, in which the upstream
stage reads traces, processes items, and sends intermediate outputs to the
downstream stage for further processing.  Each stage contains one or multiple
workers.  We evenly partition a trace into subsets and assign each subset to
an upstream worker, which partitions its intermediate outputs to different
downstream workers if more than one downstream worker is used.
Table~\ref{tab:eval} summarizes each algorithm, including the functionalities
and number of workers for both upstream and downstream stages, the definitions
of the state and the corresponding state divergence, as well as the source,
number of items, and size of each trace.  We elaborate the algorithm details
later when we present the results.

Each experiment shows the average results over 20~runs.  Before each
measurement, we load traces into the RAM of each upstream worker,
which then reads the traces from RAM during processing.  This eliminates the
overhead of reading on-disk traces, and moves the bottleneck to AF-Stream
itself.  

We evaluate the throughput and accuracy of AF-Stream for each algorithm, and
show the trade-off with respect to $\Theta$, $L$, and $\Gamma$.  For
throughput, we measure the rate of the amount of data processed in each
upstream worker, and compute the sum of rates in all upstream workers as the
resulting throughput.  For accuracy, we provide the specific definition for
each algorithm when we present the results.

\subsubsection{Baseline Performance}
\label{subsubsec:baseline}

We benchmark the baseline performance of AF-Stream using two algorithms: Grep,
which returns the input lines that match a pattern, and WordCount, which
counts the occurrence frequency of each word.  Note that Grep does not need
any state to be kept in a worker, while WordCount defines a state as a hash
table of word counts.  We implement both algorithms based on their
implementations in open-source Spark Streaming \cite{Zaharia2013}.  We use the
documents on Gutenberg \cite{Gutenberg} as the inputs, with the total size
15\,GB.

\paragraph{Experiment~1 (Comparisons with existing fault tolerance
	approaches).} We compare AF-Stream with two open-source stream processing
systems: Storm (v0.9.5) \cite{Storm} and Spark Streaming (v1.4.1) 
\cite{Zaharia2013}.  We
consider three setups for each of them.  The first setup disables fault
tolerance to provide baseline results.  The second setup uses their own
available fault tolerance mechanisms.  Specifically, Storm tracks every item
until it is fully processed (via a component called {\tt Acker}), yet it only
achieves best-effort fault tolerance as it does not support state
backups.  On the other hand, Spark Streaming achieves error-free fault
tolerance by making state backups as RDDs in mini-batches and making item
backups via write-ahead logging \cite{SparkZeroLoss}.  We set the mini-batch
interval of Spark Streaming as 1s (\S\ref{subsec:abstraction}).  For
both setups, we configure the systems to read traces from memory to avoid the
disk access overhead.  Finally, the third setup achieves fault tolerance
through Kafka \cite{Kafka}, a reliable messaging system that persists items to
disk for availability.  We configure the systems to read input items from the
disk-based storage of Kafka, so that Kafka serves as an extra item backup
system to replay items upon failures.  We configure the fault tolerance
mechanisms of Storm and Spark Streaming based on the official documentations;
for Kafka integration, we refer to \cite{KafkaSpark,KafkaStorm}.  In
particular, Kafka and write-ahead logging in Spark Streaming have the same
functionality, so for performance efficiency, it is suggested to disable
write-ahead logging when Kafka is used \cite{KafkaSpark}.  On the other hand,
other fault tolerance approaches can be used in conjunction with Kafka. We
present all combinations in our results. 

In addition, we implement existing fault tolerance approaches in AF-Stream and
compare them with approximate fault tolerance under the same implementation
setting.  We consider {\em mini-batch} and {\em upstream backup}.  Mini-batch
follows Spark Streaming \cite{Zaharia2013} and makes backups for mini-batches.
To realize the mini-batch approach in AF-Stream, we divide a stream into
mini-batches with the same number of items, such that the number of items per
mini-batch is the maximum number while keeping the system stable
(\S\ref{subsec:abstraction}).  Our approach generates around 800 mini-batches.
Our modified AF-Stream then issues state and item backups for each mini-batch.
On the other hand, upstream backup \cite{Hwang2005} provides error-free fault
tolerance by making backups in upstream workers.  To realize upstream backup,
we issue a backup for every data item, while we issue a state backup every 1\%
of data items.  In addition to saving items via the backup server, both
approaches also keep the items in memory for ACKs.  Once the memory usage
exceeds a threshold (1\,GB in our case), we save any new item to local disk.
Finally, we set $\Theta\!=\!10^4$, $L\!=\!10^3$, and $\Gamma\!=\!10^3$ in
AF-Stream for approximate fault tolerance.

\begin{table}[!t]
	\centering
	\renewcommand{\arraystretch}{1.15}
	{\small
		\begin{tabular}[c]{|l|c|c|}
			\hline
			& {\bf Grep}
			& {\bf WordCount} \\
			\hline
			\hline
			\multicolumn{3}{|l|}{\bf Storm}\\
			\hline
			No fault tolerance & 262.04\,MB/s & 901.47\,MB/s \\
			\hline
			With item backup only & 100.65\,MB/s & 571.31\,MB/s \\
			\hline
			With Kafka only & 85.79\,MB/s & 192.66\,MB/s \\
			\hline
			With Kafka + item backup & 80.14\,MB/s & 174.30\,MB/s \\
			\hline
			\hline
			\multicolumn{3}{|l|}{\bf Spark Streaming}\\
			\hline
			No fault tolerance & 178.61\,MB/s & 754.02\,MB/s \\
			\hline
			With RDD + write-ahead logging & 93.61\,MB/s & 466.07\,MB/s \\
			\hline
			With Kafka only & 81.49\,MB/s & 147.28\,MB/s \\
			\hline
			With Kafka + RDD & 75.09\,MB/s & 140.80\,MB/s \\
			\hline
			\hline
			\multicolumn{3}{|l|}{\bf AF-Stream implementation}\\
			\hline
			No fault tolerance & 3.55\,GB/s & 1.92\,GB/s \\
			\hline
			Mini-batch & 358.16\,MB/s & 380.49\,MB/s \\
			\hline
			Upstream backup & 379.61\,MB/s & 373.89\,MB/s \\
			\hline
			Approximate fault tolerance & 3.48\,GB/s & 1.88\,GB/s \\
			\hline
		\end{tabular}
	}
	\caption{Experiment~1: Comparisons with existing fault tolerance approaches.}
	\label{tab:exp1}
	\vspace{-12pt}
\end{table}

Table~\ref{tab:exp1} shows the throughput of different fault tolerance
approaches.  Both Storm and Spark Streaming see throughput drops when fault
tolerance is used.  Compared to the throughput without fault tolerance, even
the most modest case degrades the throughput by around 37\% (i.e., Storm's
item backup for Grep).  We find that the bottlenecks of both systems are
mainly due to frequent item backups.  Also, Kafka integration achieves even
lower throughput since Kafka incurs extra I/Os to read traces from disk.
In contrast, AF-Stream with approximate fault tolerance issues fewer backup
operations.  It achieves 3.48\,GB/s for Grep and 1.78\,GB/s for WordCount, both
of which are close to when AF-Stream disables fault tolerance.  Note that
AF-Stream outperforms Spark Streaming and Storm even when they disable fault
tolerance.  The reason is that AF-Stream has a more simplified implementation.

\paragraph{Experiment~2 (Impact of thresholds on performance).} We examine how
the thresholds $\Theta$, $L$, and $\Gamma$ affect the performance of
AF-Stream in different aspects.  Since Grep does not keep any state, we focus
on WordCount.  We vary $\Theta$ and $L$, and fix $\Gamma\!=\!10^3$.   When
$\Theta$ and $L$ are sufficiently large (i.e., close to infinity), we in
essence disable both state and item backups, respectively.

Figure~\ref{fig:exp2}(a) presents the throughput of AF-Stream versus $L$ for
different settings of $\Theta$ and the case of disabling state backups.
Compared to disabling state backups, the throughput loss is 33\% when
$\Theta=1$, but we reduce the loss to 10\% by setting $\Theta=10$.

Figure~\ref{fig:exp2}(b) also presents the recovery time when recovering a
worker failure, starting from the time when a new worker process is resumed
until it starts normal processing.  A smaller $\Theta$ implies a longer
recovery time, as we need to make backups for more updated state values in
partial backup (\S\ref{subsubsec:state}).  Nevertheless, the recovery time in
all cases is less than one second.

\begin{figure}[t]
	\centering
	\begin{tabular}{@{\ }c@{\ }c}
		\includegraphics[width=1.65in]{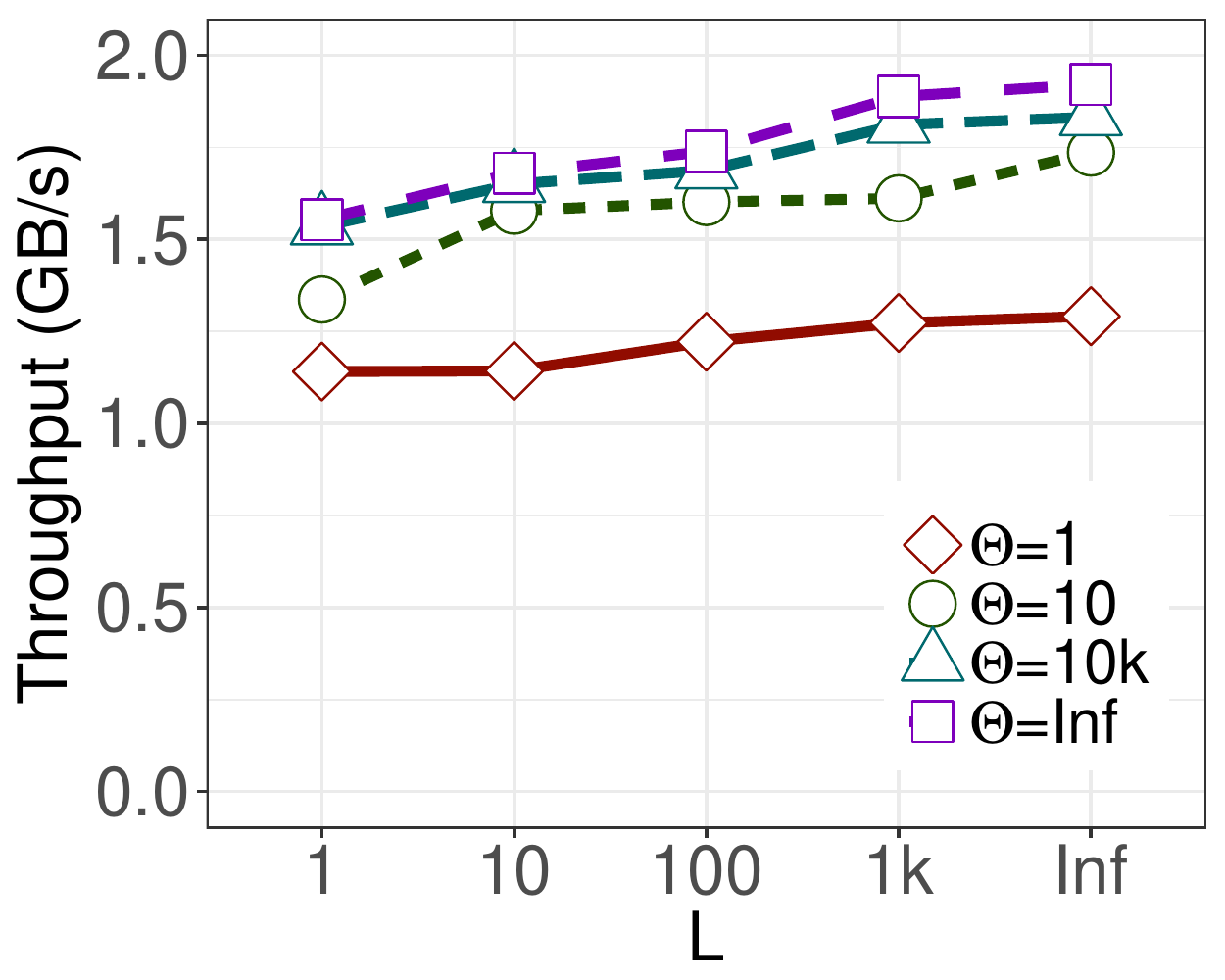} &
		\includegraphics[width=1.65in]{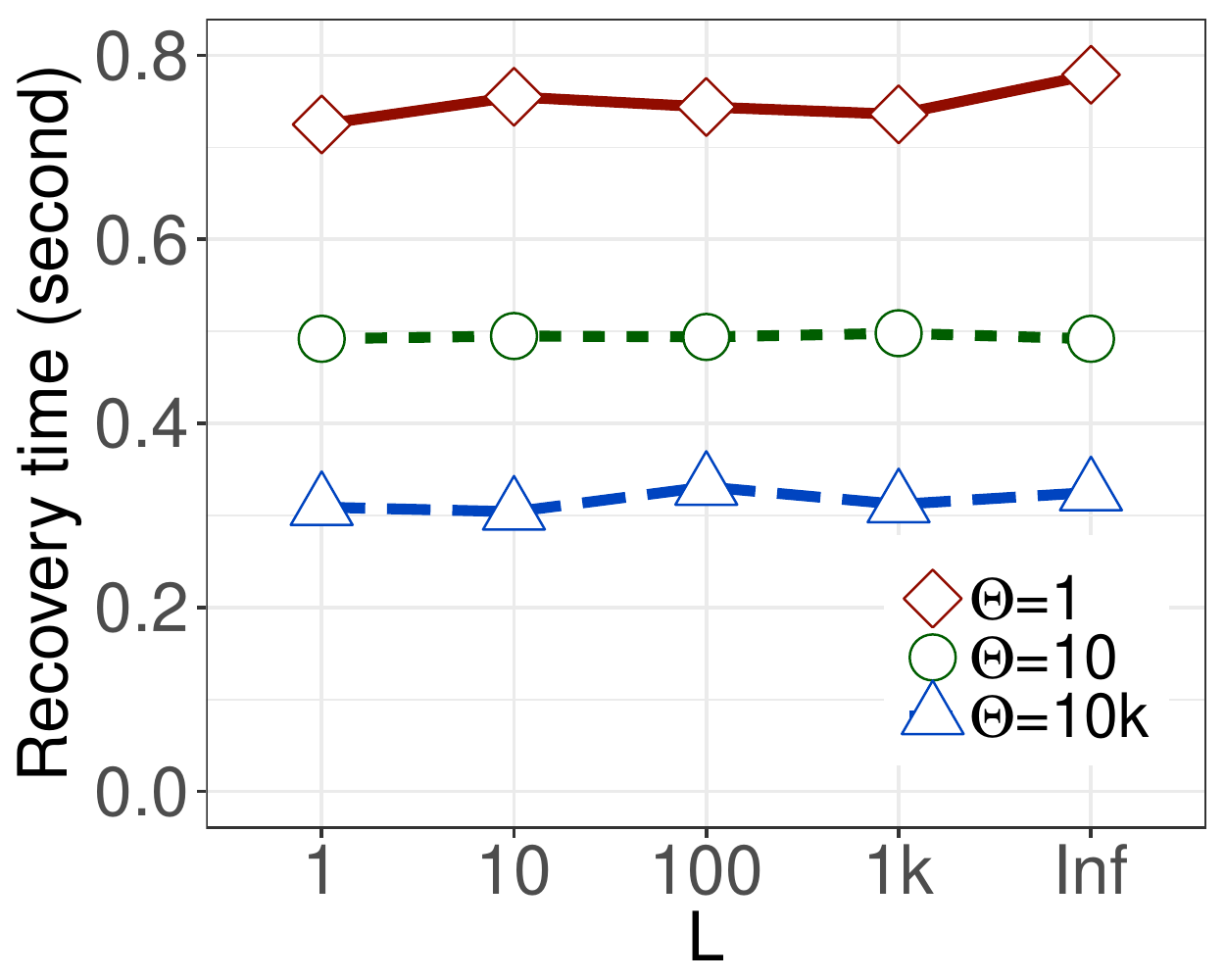}
		\vspace{-3pt}\\
		{\small (a) Throughput} &
		{\small (b) Recovery time}\\
		\includegraphics[width=1.65in]{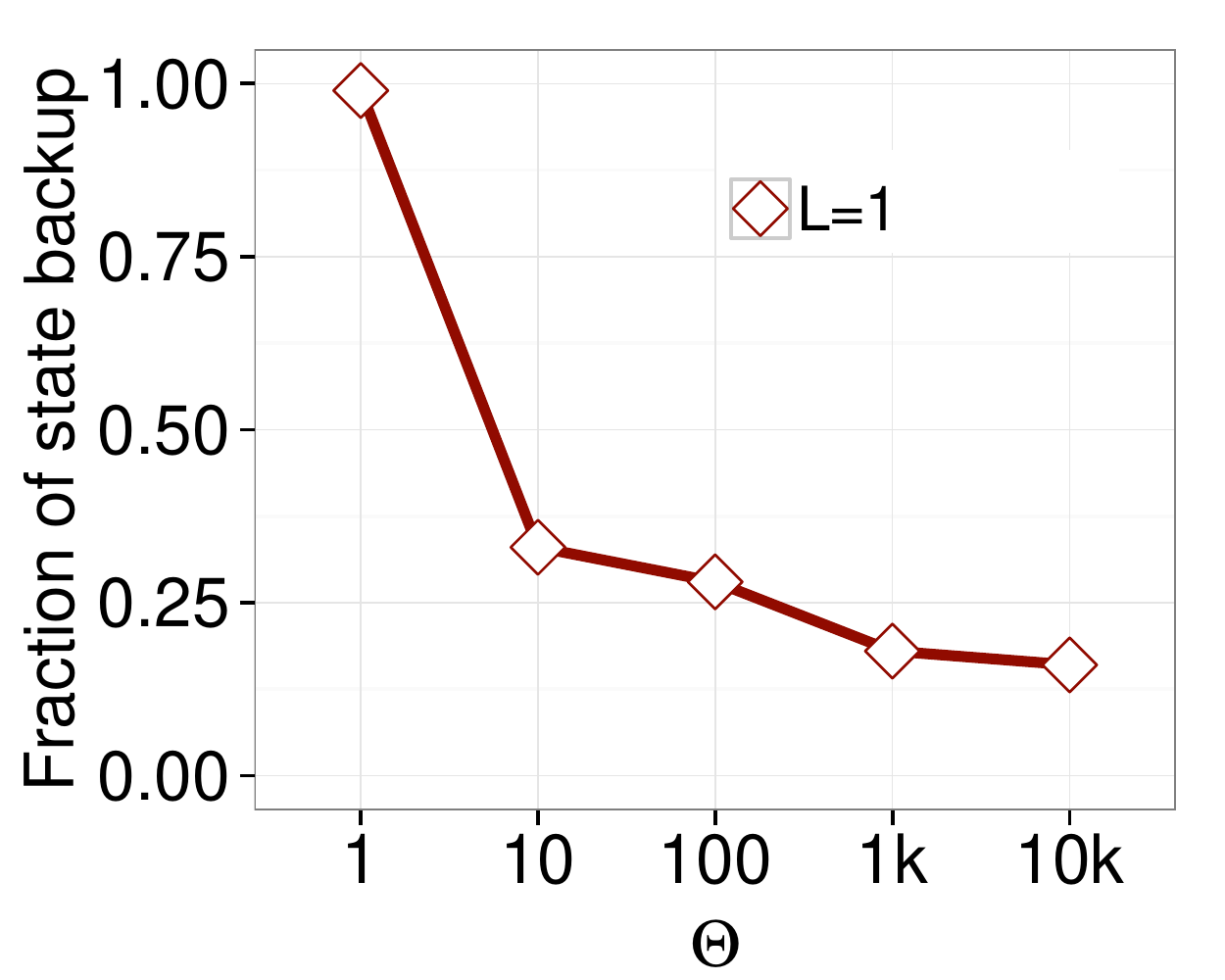} &
		\includegraphics[width=1.65in]{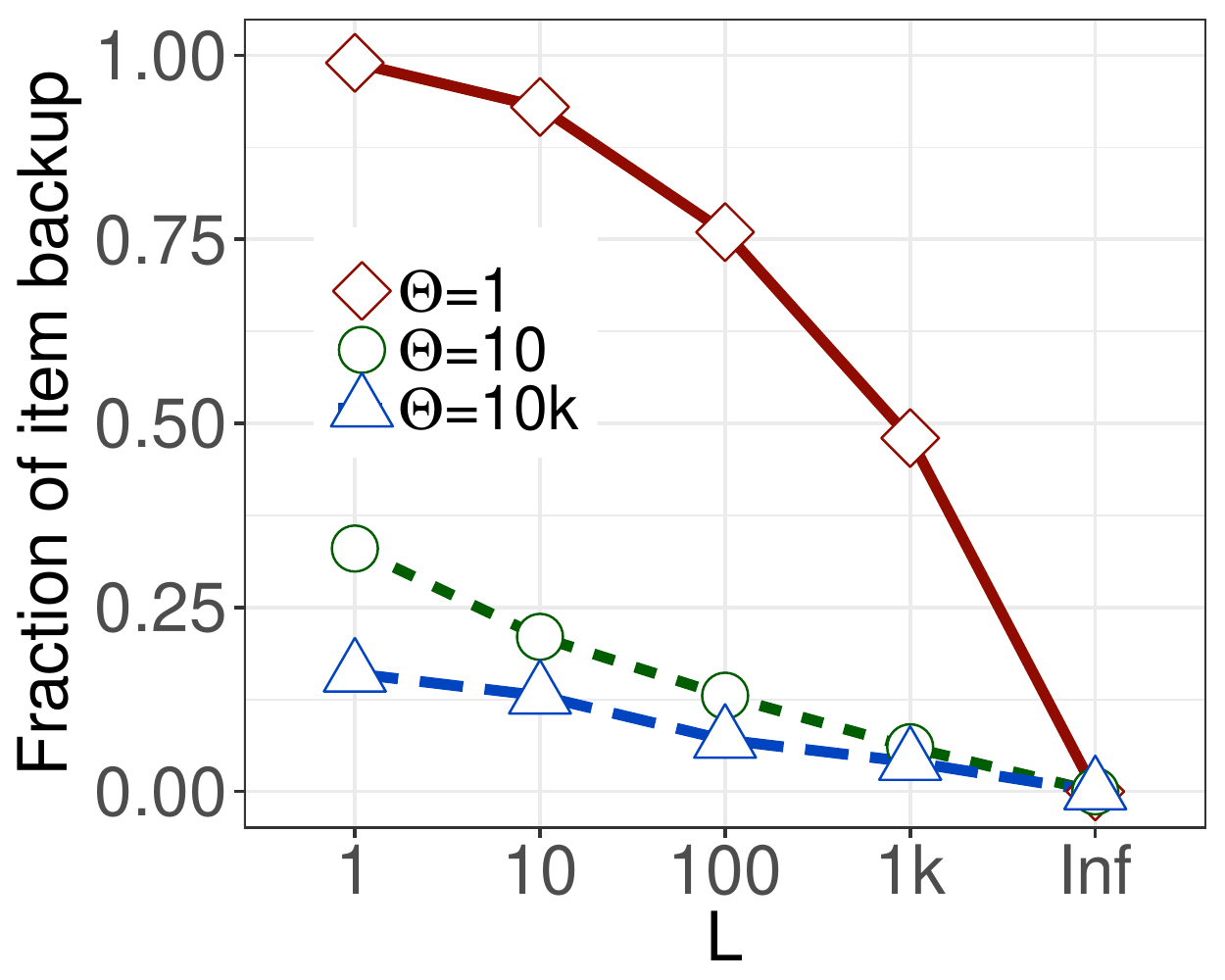}
		\vspace{-3pt}\\
		{\small (c) Fraction of state backups} &
		{\small (d) Fraction of item backups}
	\end{tabular}
	\caption{Experiment~2: Impact of thresholds on WordCount.}
	\label{fig:exp2}
\end{figure}

We justify our throughput results by showing the fractions of state and item
backup operations over the total number of items.  Figure~\ref{fig:exp2}(c)
shows that AF-Stream issues state backups for less than 30\% of items when
$\Theta\ge10$ (where we fix $L\!=\!1$).
Figure~\ref{fig:exp2}(d) shows that increasing $\Theta$ also
reduces the fraction of item backups (e.g., to less than 20\% when
$\Theta\ge10$), mainly because the compute threads have more available
resources to process items rather than perform state backups. This reduces the
number of unprocessed items, thereby issuing fewer item backups.

\subsubsection{Performance-accuracy Trade-offs}
\label{subsubsec:real}

We examine how AF-Stream trades between performance and accuracy.  We evaluate
its throughput when no failure happens, while we evaluate its accuracy after
recovering from system failures in which all workers fail.
To mimic a system failure, we inject a
special item in the stream to a worker.  When the worker reads the special
item, it sends a remote stop signal to kill all worker processes.  We then
resume all worker processes, recover all backup states, and replay the backup
items.  We inject the special item multiple times to generate multiple
failures over the entire stream.  We consider the other three streaming
algorithms, which are more complicated than Grep and WordCount.

In the following, we vary $\Theta$ and $L$, and fix $\Gamma\!=\!10^3$.
Here, $\Gamma$ represents the maximum number of unacknowledged items in
upstream workers.  We observe that the actual number of unacknowledged items
is small and they have limited impact on both performance and accuracy in our
experiments.  Thus, we focus on the physical meanings of $L$ and $\Theta$ in
each algorithm and justify our choices of the two parameters.

\begin{figure}[t]
	\centering
	\begin{tabular}{@{\ }c@{\ }c}
		\includegraphics[width=1.65in]{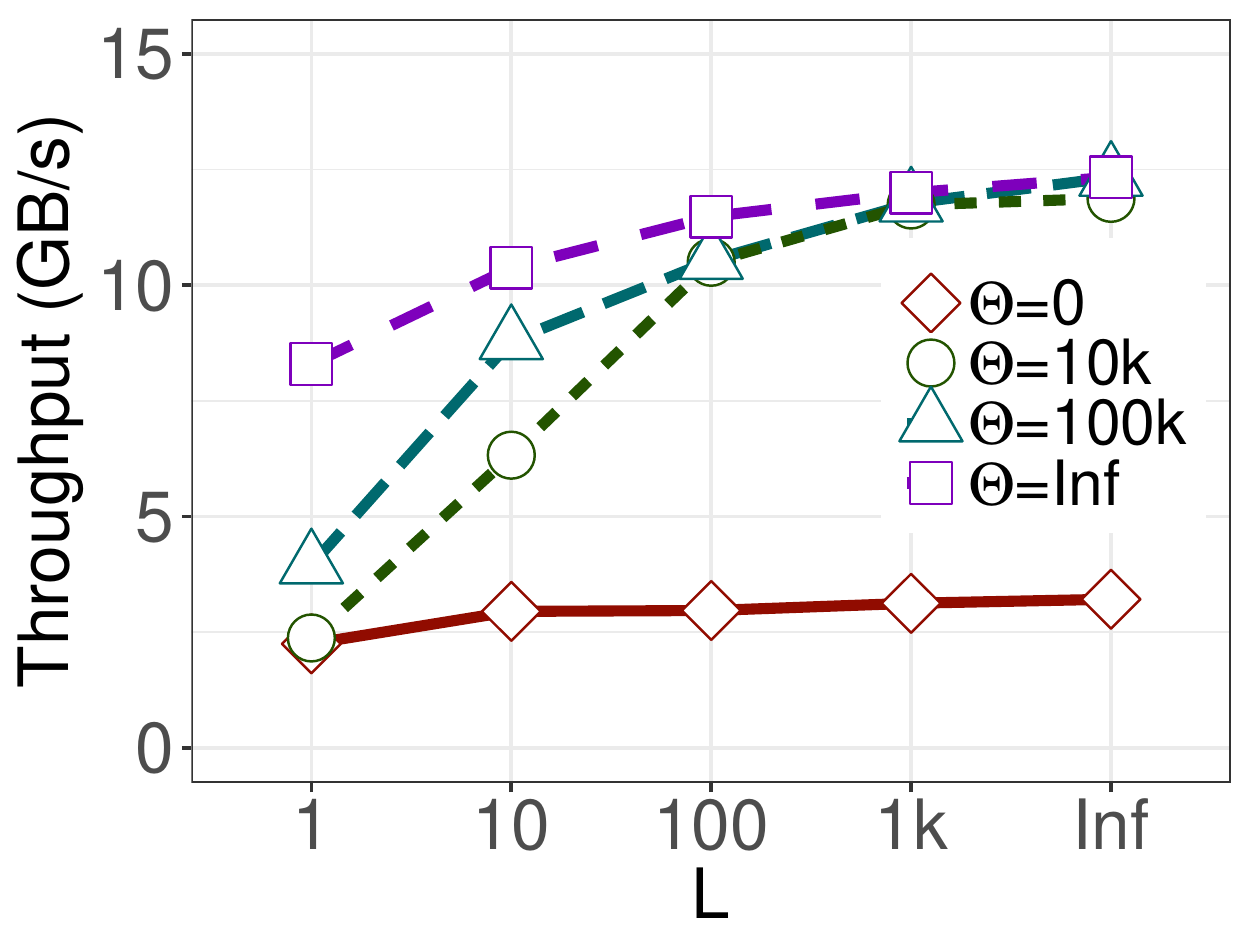} &
		\includegraphics[width=1.65in]{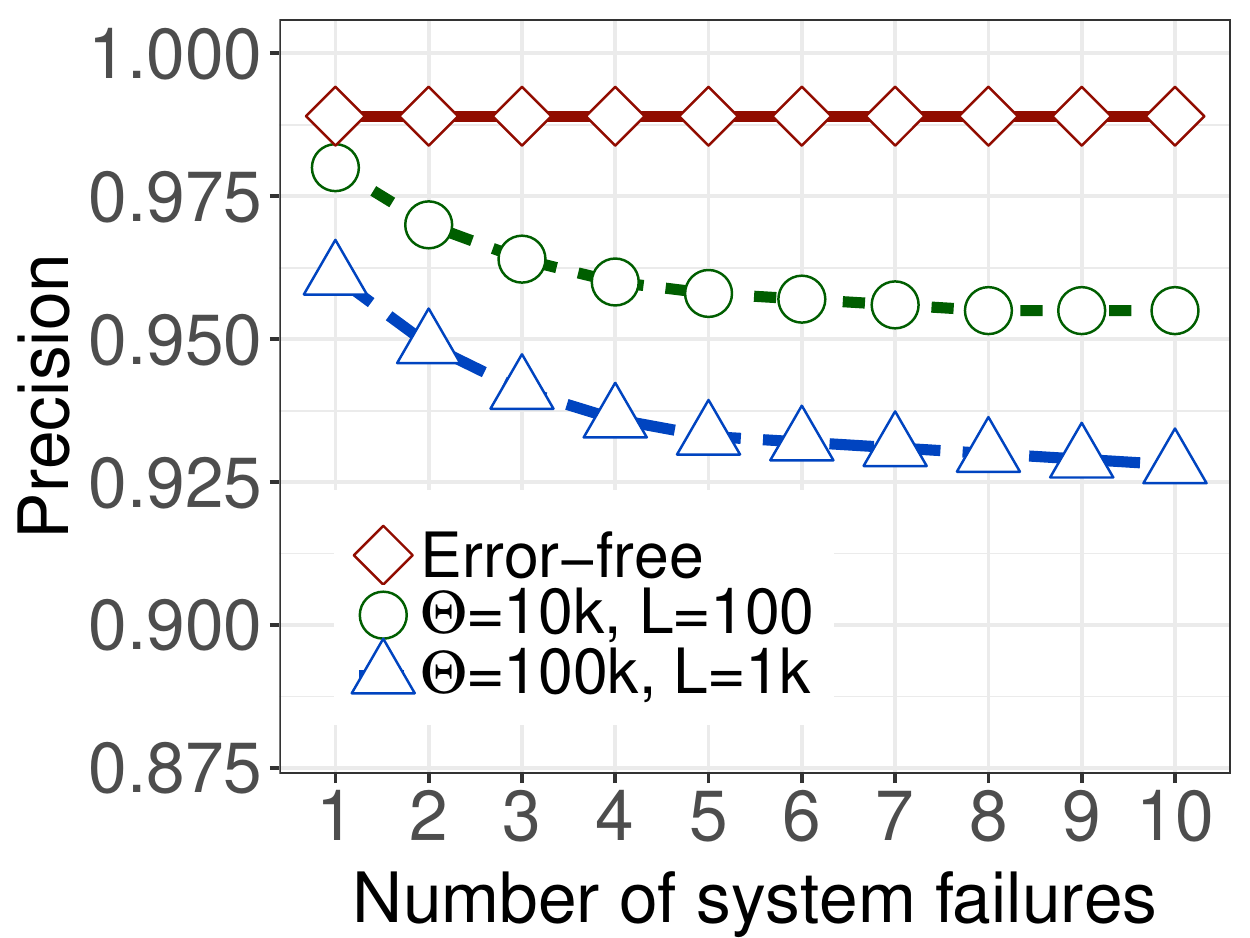} \\
		{\small (a) Throughput} &
		{\small (b) Precision}
	\end{tabular}
	\caption{Experiment~3: Heavy hitter detection.}
	\label{fig:exp3}
\end{figure}

\paragraph{Experiment~3 (Heavy hitter detection).} We perform heavy hitter
(HH) detection in a stream of packet headers from CAIDA \cite{Caida}, where
the total packet header size is 40\,GB.  We define an HH as a source-destination
IP address pair whose total payload size exceeds 10\,MB.  We implement HH
detection based on Count-Min Sketch \cite{Cormode2005a}.  Each worker holds a
Count-Min Sketch with four rows of 8,000 counters each as its state.  Each
packet increments one counter per row by its payload size.  We maintain the
counters in a matrix that has built-in fault tolerance (\S\ref{sec:impl}).
Also, an upstream worker sends local sketches and detection results to a
downstream worker as punctuation items, for which we ensure error-free fault
tolerance (\S\ref{subsubsec:item_backup}).

In our implementation, $\Theta$ is the maximum difference of counter values
(in bytes) between the current state and the most recent backup state and $L$
is the maximum number of unprocessed non-backup packets.
We choose $\Theta\!\le\!10^5$ and $L\!\le\!10^3$.
If the packet size is at most 1,500~bytes, our choices account for at most
1.5\,MB of counter values, much smaller than our selected threshold 10\,MB.

We measure the throughput of AF-Stream as the total packet header size
processed per second.  To measure the accuracy, one important fix is that we
need to address the missing updates due to approximate fault tolerance.  In
particular, the missing updates cause the restored counter values to be
smaller than the original counter values before a failure.  To compensate for
the missing updates, we add each counter by $\Theta/2^k+L\alpha/2^k$ when
restoring counter values after the $k$-th failure (where $k\ge 1$).  This
ensures the zero false negative rate of Count-Min Sketch, while increasing the
false positive rate by a bounded value.  We measure the accuracy by the
precision, defined as the ratio of the number of actual HHs to the number of
returned HHs (which include false positives).

Figure~\ref{fig:exp3} presents the throughput and precision of AF-Stream for
HH detection.  If we disable both state and item backups (i.e., $\Theta$ and
$L$ are close to infinity), the throughput is 12.33\,GB/s, and
the precision is 98.9\% when there is no failure.  With approximate fault
tolerance, if $\Theta\!=\!10^5$ and $L\!=\!10^3$, the throughput
decreases by up to 4.7\%, yet the precision only decreases to 92.8\% after 10
system failures.  If we set $\Theta\!=\!10^4$ and $L\!=\!100$, the
throughput drops by around 15\%, while the precision decreases to 95.5\%.

\paragraph{Experiment~4 (Online join).}  
Online join is a basic operation in stream
database queries.  This experiment considers an online join operation that
correlates two streams of packet headers of different cities obtained from
CAIDA \cite{Caida}, with the total packet header size 40\,GB.
Our goal is to return the tuples of the destination IP address
and timestamp (in units of seconds) that have matching packet headers in both
streams, meaning that both streams visit the same host at about the same time.
We partition the join operator into multiple workers, each of which runs a
Ripple Join algorithm \cite{Haas1999} to sample a subset of packets for
online join at a sampling rate 10\%.  Each worker keeps the sampled packets in
a set with built-in fault tolerance (\S\ref{sec:impl}).

Here, $\Theta$ represents the maximum difference of the numbers of packets
between the current state and the most recent backup state, and $L$ is the
maximum numbers of unprocessed non-backup packets.
We find that our sampling rate can obtain around 100~million packets.  Thus,
we set $\Theta\!\le\!10^5$ and $L\!\le\!10^3$
to account for at most 0.1\% of sampled packets.

We again measure the throughput of AF-Stream as the total packet header size
processed per second.  To measure the accuracy, we issue an aggregation query
for the total number of joined tuples.  Ripple join returns the estimated
number of tuples by dividing the number of joined tuples in the sampled set by
the sampling rate.  We measure the accuracy as the relative estimation error,
defined as the percentage difference of the estimated number from the actual
number without any sampling.

Figure~\ref{fig:exp4} presents the throughput and relative estimation error of
AF-Stream for online join.  If we disable both state and item backups, the
throughput is 6.96\,GB/s.  The relative estimation error is 9.1\% when no
failure happens.  If we enable approximate fault tolerance, the throughput
drops by 5.2\% for $\Theta\!=\!10^5$ and $L\!=\!10^3$, and by 12\% for
$\Theta\!=\!10^4$ and $L\!=\!100$, while the relative estimation error
only increases to 11.3\% and 10.6\%, respectively, even after 10 system
failures.

\begin{figure}[t]
	\centering
	\begin{tabular}{@{\ }c@{\ }c}
		\includegraphics[width=1.65in]{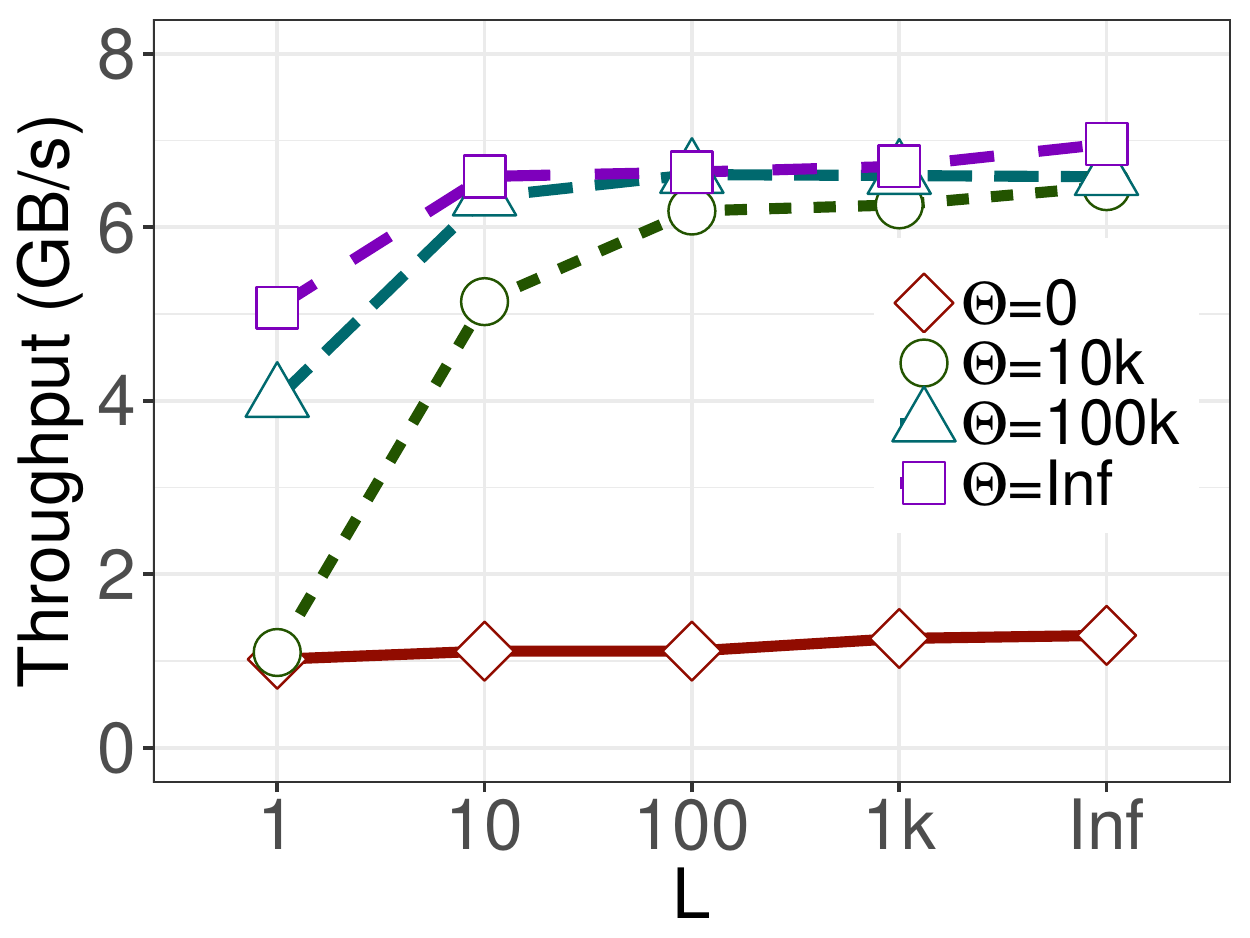} &
		\includegraphics[width=1.65in]{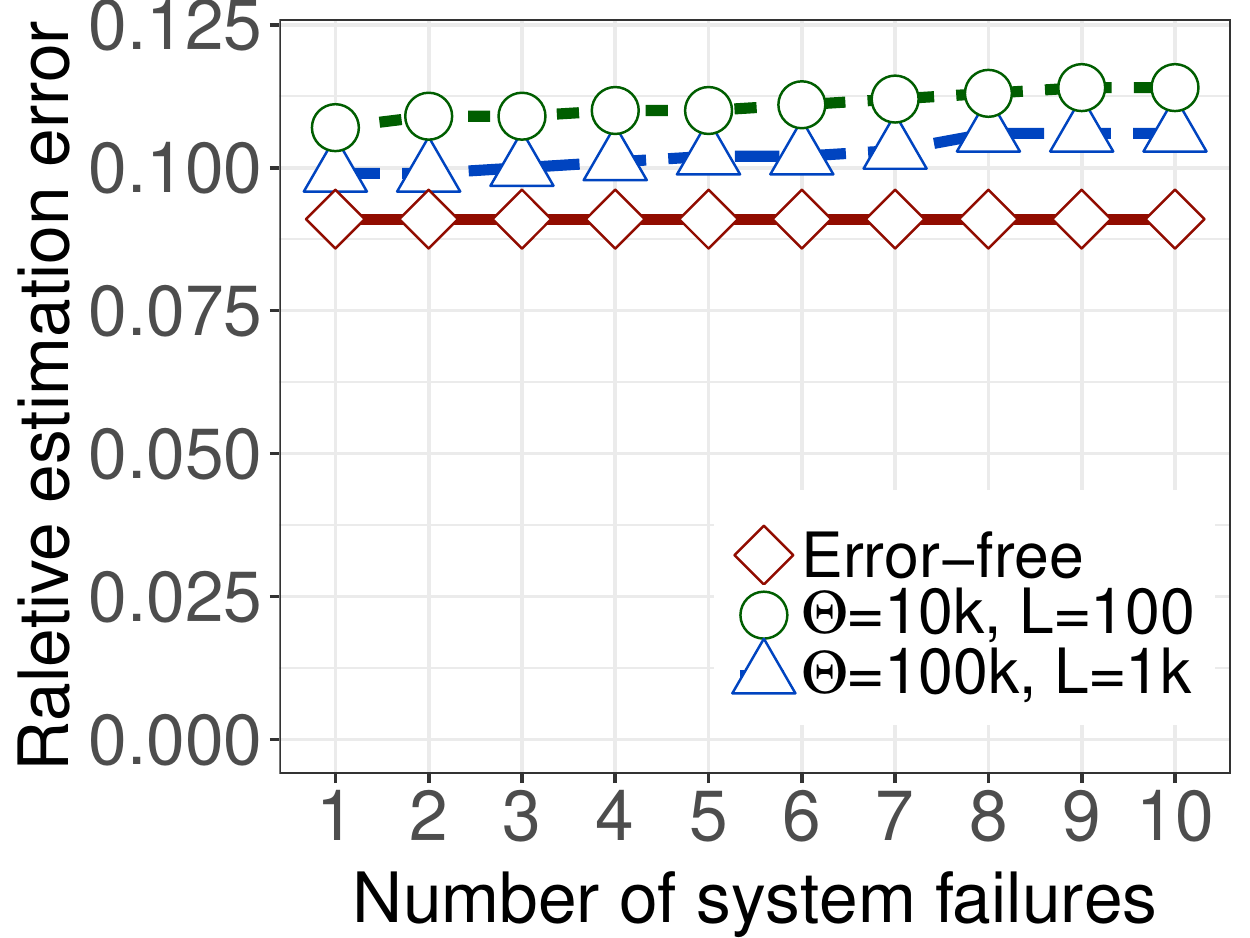} \\
		{\small (a) Throughput} &
		{\small (b) Relative estimate error}
	\end{tabular}
	\caption{Experiment~4: Online join.}
	\label{fig:exp4}
\end{figure}

\paragraph{Experiment~5 (Online logistic regression).} Logistic regression is
a classical algorithm in machine learning.  We use online logistic regression
to predict advertisement click-through rates for a public trace in KDD Cup
2012 \cite{Niu2012}.  We extract a trace with 150~million tuples, each of
which is associated with a label and multiple features.  However, we find that
40~million of them have a missing value in one of the features.  We remove
these 40~million anomalous tuples from the trace and use the remaining
110~million tuples (with a total size 42\,GB) in our evaluation; nevertheless,
we include those anomalous tuples in our convergence evaluation to show the
robustness of AF-Stream in model training (\S\ref{subsec:eval_convergence}).  

We evenly divide the entire trace into two halves, one as a training set and
another as a test set.  We train the model with a distributed SGD technique
\cite{Langford2009}, in which each upstream worker trains its local model with
a subset of the training set, and regularly sends its local model to a single
downstream worker (every $10^3$ tuples in our case) in the form of a
punctuation item (\S\ref{subsubsec:item_backup}).  The online logistic
regression algorithm has a feedback loop, in which the downstream worker
computes the average of the model parameters to form a global model, and sends
the global model to each upstream worker in the form of a feedback item.  The
upstream worker updates its local model accordingly.  Each upstream worker
stores the model parameters in a hash table as its state.  In this experiment,
we configure AF-Stream to run in the ASP consistency model (\S\ref{sec:impl}).

Here, $\Theta$ represents the Euclidean distance of the model parameters, and
$L$ is the maximum number of unprocessed non-backup tuples for model training.
As our model has more than $10^6$ features, we set $\Theta\!\le\!10$ and
$L\!\le\!10^3$ to limit the errors to the model parameters.  This setting
implies an average Euclidean distance of less than
$\sqrt{\Theta/10^6}\!=\!0.01$ for each feature parameter and at most
$L/(110\times10^6)\!=\!0.001\%$ of lost tuples. 

We measure the throughput as the number of tuples processed per second.  To
measure the accuracy, we predict a label for each tuple in the test set based
on its features by logistic regression, and check if the predicted label is
identical to the true label associated with the tuple.   We measure the
accuracy as the prediction rate, defined as the fraction of correctly
predicted tuples over all tuples in the test set.

Figure~\ref{fig:exp5} presents the throughput and prediction rate of AF-Stream
for online logistic regression.  The throughput is 62,000~tuples/s when we
disable state backups.  If we set $\Theta\!=\!10$ and $L\!=\!10^3$, or
$\Theta\!=\!1$ and $L\!=\!100$, the throughput is only less than that without
state and item backups by up to 0.3\%.  The reason is that the model has a
large size with millions of features and incurs intensive computations.  Thus,
the throughput drops due to state backups become insignificant.  In
addition, the prediction rate is 94.3\% when there is no failure.  The
above two settings of $\Theta$ and $L$ decrease the prediction rate to
90.4\% and 92.9\%, respectively, after 10 system failures.

\begin{figure}[t]
	\centering
	\begin{tabular}{@{\ }c@{\ }c}
		\includegraphics[width=1.65in]{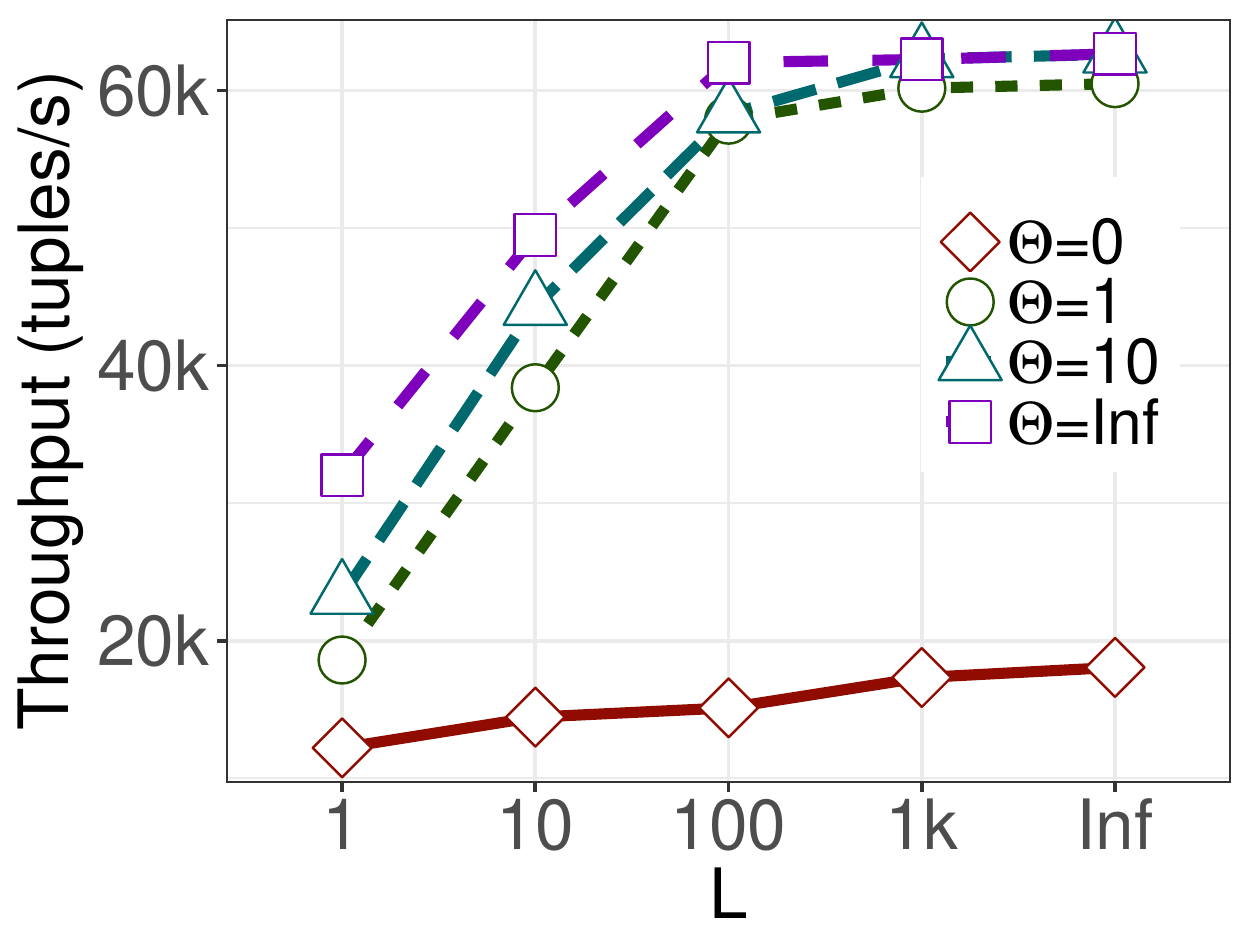} &
		\includegraphics[width=1.65in]{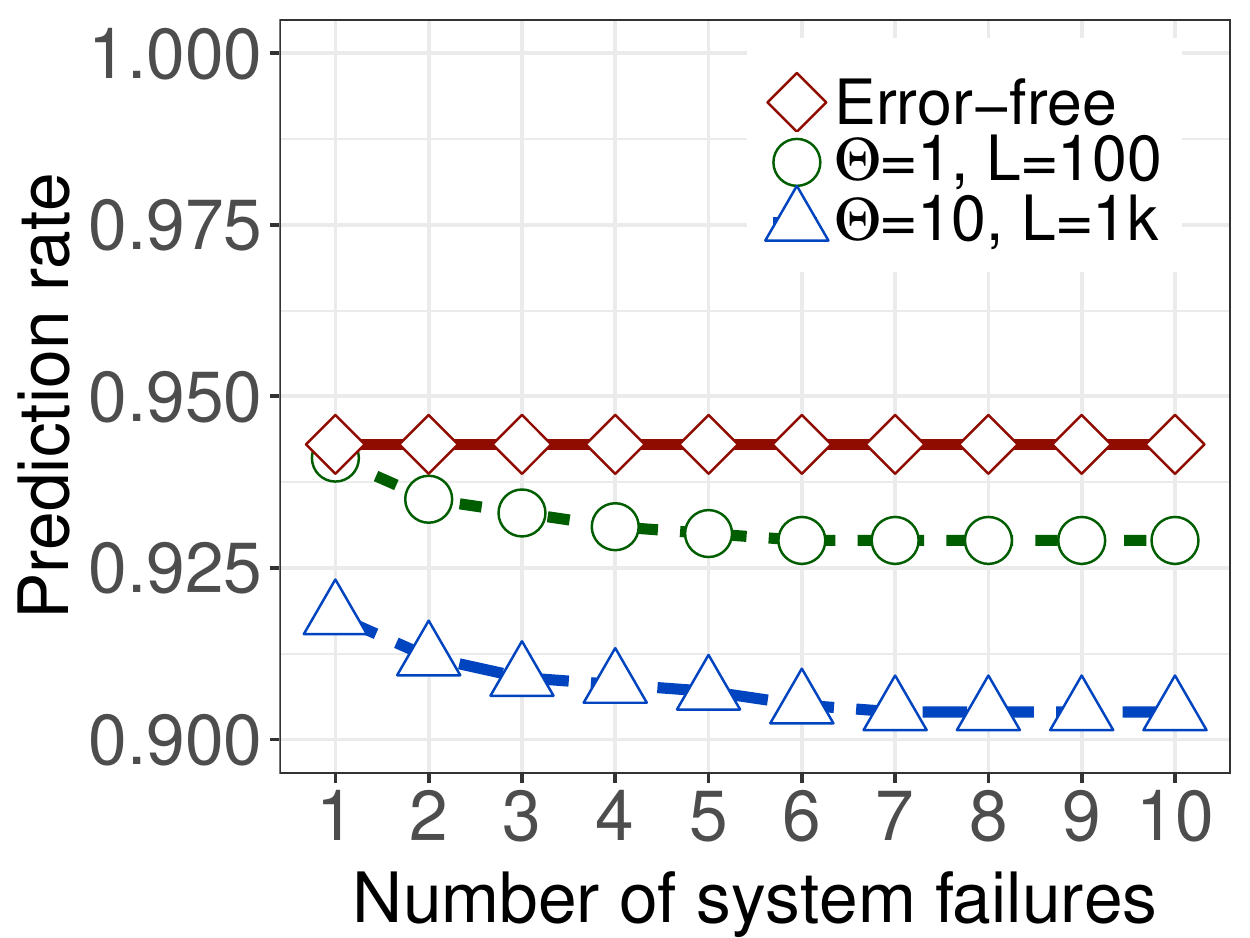} \\
		{\small (a) Throughput} &
		{\small (b) Prediction rate}
	\end{tabular}
	\caption{Experiment~5: Online logistic regression.}
	\label{fig:exp5}
\end{figure}

\begin{table}[!t]
	\centering
	\label{tab:cmp}
	\renewcommand{\arraystretch}{1.0}
	{\small
		\begin{tabular}[c]{|c|c|c|c|}
			\hline
			& HH detection,
			& Online join,
			& Online LR, \\
			& $\Theta=10^5$
			& $\Theta=10^5$
			& $\Theta=10$ \\
			\hline
			\hline
			\makecell{\bf No fault\\\bf tolerance}
			& 12.33 GB/s & 6.96 GB/s & 62,699 tuples/s \\
			\hline
			\makecell{\bf Mini-batch}
			& 0.91 GB/s & 0.89 GB/s & 14,770 tuples/s\\
			\hline
			\makecell{\bf Upstream\\\bf backup}
			& 0.89 GB/s & 0.87 GB/s & 12,827 tuples/s\\
			\hline
			\hline
			\makecell{\bf Approximate\\\bf fault tolerance}
			& 11.77 GB/s & 6.59 GB/s & 62,292 tuples/s \\
			\hline
		\end{tabular}
	}
	\caption{Comparisons of fault tolerance approaches for HH detection, online
		join and online logistic regression (LR) with $L\!=\!10^3$ and
		$\Gamma\!=\!10^3$.}
\end{table}

\paragraph{Discussion.}
Our results demonstrate how AF-Stream addresses the trade-off between
performance and accuracy for different parameter choices.  We also implement
the three algorithms under existing fault tolerant approaches as in
Experiment~1, and Table~\ref{tab:cmp} summarizes the results.  We observe that
both mini-batch and upstream backup approaches reduce the throughput to less
than 25\% compared to disabling fault tolerance, while approximate fault
tolerance achieves over 95\% of the throughput.

\subsection{Convergence Evaluation on a Local Cluster}
\label{subsec:eval_convergence}

\begin{figure}[t]
	\centering
	\begin{tabular}{@{\ }c@{\ }c}
		\includegraphics[width=1.65in]{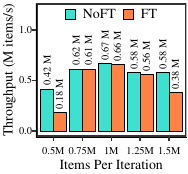} & 
		\includegraphics[width=1.65in]{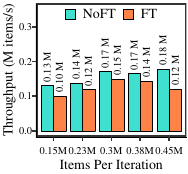}  \\ 
		\mbox{\small (a) KDDCup, Spark Streaming} &
		\mbox{\small (b) HIGGS, Spark Streaming} \\
		\includegraphics[width=1.65in]{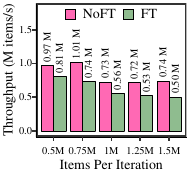} & 
		\includegraphics[width=1.65in]{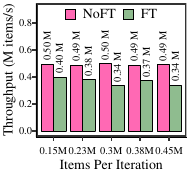} \\
		\mbox{\small (c) KDDCup, Flink} &
		\mbox{\small (d) HIGGS, Flink} \\
		\includegraphics[width=1.65in]{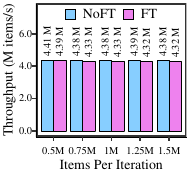} &
		\includegraphics[width=1.65in]{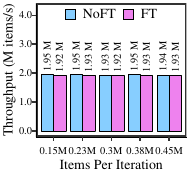} \\
		\mbox{\small (e) KDDCup, AF-Stream} &
		\mbox{\small (f) HIGGS, AF-Stream} 
	\end{tabular}	
	\caption{Experiment~6: Impact of fault tolerance on throughput (NoFT: fault
		tolerance disabled; FT: fault tolerance enabled).}	
	\label{fig:th}
\end{figure}

\begin{figure*}[t]
	\centering
	\begin{tabular}{@{\ }c@{\ }c}
		\includegraphics[width=3.2in]{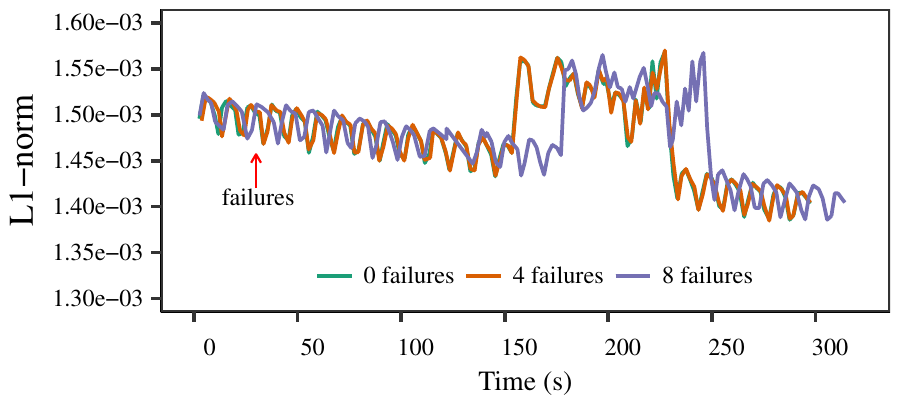} &
		\includegraphics[width=3.2in]{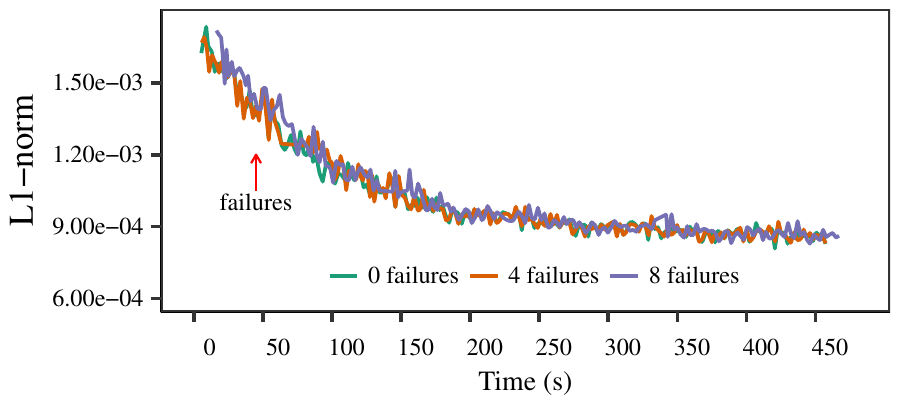}\\
		{\small (a) KDDCup, Spark Streaming} &
		{\small (b) HIGGS, Spark Streaming} 
		\vspace{3pt}\\
		\includegraphics[width=3.2in]{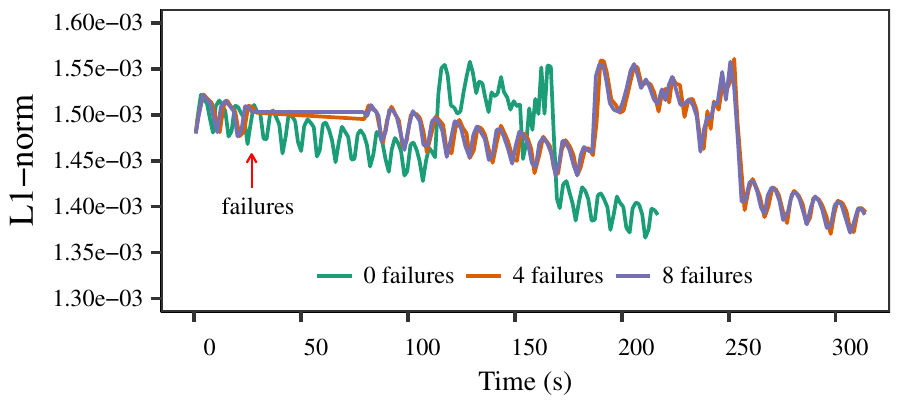} &
		\includegraphics[width=3.2in]{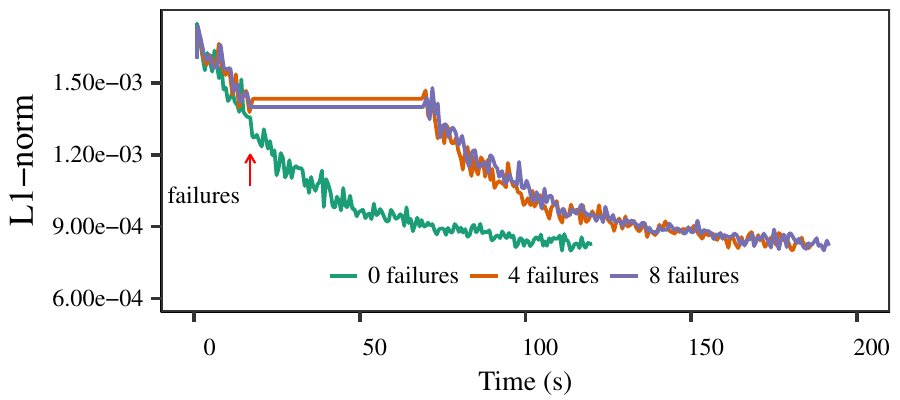}\\
		{\small (c) KDDCup, Flink}  &
		{\small (d) HIGGS, Flink}
		\vspace{3pt}\\
		\includegraphics[width=3.2in]{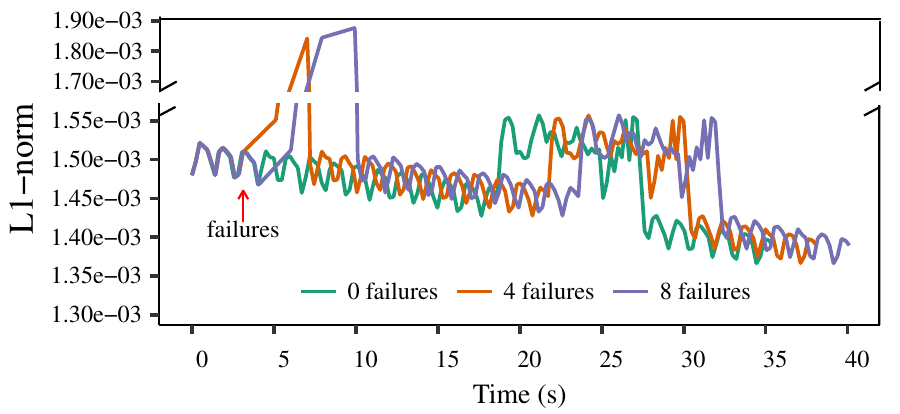} &
		\includegraphics[width=3.2in]{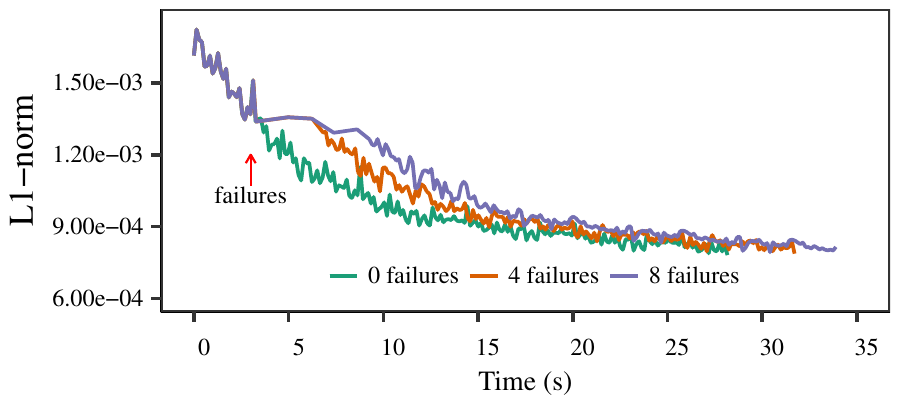}\\
		{\small (e) KDDCup, AF-Stream}  &
		{\small (f) HIGGS, AF-Stream}
	\end{tabular}	
	\caption{Experiment~7: Convergence under failures.}
	\label{fig:failure}
\end{figure*}

We further show how AF-Stream preserves the model convergence of online
learning with respect to failure patterns, traces, and parameter choices.
We conduct our experiments on a local cluster with a total of 12 machines,
each equipped with an Intel Core i5-7500 3.40\,GHz quad-core CPU, 32\,GB RAM,
and a TOSHIBA DT01ACA100 7200\,RPM 1\,TB SATA disk.   All machines run Ubuntu
16.04 LTS and are connected via a 10\,Gb/s switch. 

We focus on the model training of online logistic regression, and compare the
convergence performance under AF-Stream with that under Spark Streaming
(v1.6.3) \cite{Zaharia2013} and Flink (v1.7.2) \cite{Carbone2015}; both Spark
Streaming and Flink can achieve error-free fault tolerance.
Recall that Spark Streaming partitions a stream into {\em mini-batches} on
small time-steps and issues both state backups as checkpoints and item backups
via write-ahead logging, while Flink issues state backups as checkpoints and
persists items in Kafka for item backups (\S\ref{subsec:abstraction}).

We run the model training of online logistic regression on two real-world
traces.  The first trace, denoted by KDDCup, is the same public trace from KDD
Cup 2012 \cite{Niu2012} used in Experiment~5 in \S\ref{subsec:eval_ec2}.  We
use all 150~million tuples in the trace, including the 40~million anomalous
tuples (\S\ref{subsec:eval_ec2}).  We preprocess the trace so that each item
has 11 features.
The second trace, denoted by HIGGS, contains items for the
classification of signal and background processes in high-energy physics
experiments \cite{Baldi2014}.  Note that we concatenate the original HIGGS
trace five times to obtain 55~million items with 28 features each. 

For each trace, AF-Stream, Flink, and Spark Streaming train a regression model
with SGD; in Spark Streaming, we implement SGD using MLlib \cite{Meng2016}. 
In Flink, since its machine learning library (called FlinkML)
only supports batch processing, we re-implement SGD based on the
implementation in MLlib. 
We partition each trace across 11 local workers in 11 different machines,
each of which iteratively performs gradient computations and sends the results
to a global worker in the remaining machine.  We evaluate the performance and
accuracy of our model training via three metrics: throughput, convergence
time, and the L1-norm difference between the models in the last and current
iterations.  For AF-Stream, by default, we set $\Gamma\!=\!1000$, and consider
different values of $L$ and $\Theta$ in our experiments (in 
Appendix~\ref{sec:appendix_convergence}, we show that the effect of $\Gamma$
is included in state divergence, which is controlled by $\Theta$). 

\paragraph{Experiment~6 (Impact of fault tolerance on throughput).} We first
examine how fault tolerance mechanisms influence the throughput of model
training.  For fair comparisons, we configure all systems to operate in the
BSP consistency model (\S\ref{sec:impl}) and process streaming items in
mini-batches, each of which corresponds to an iteration.
For Flink, to realize the BSP model, we use its {\tt{countWindowAll}} API to
aggregate the streaming items in {\em tumbling windows} that resemble
mini-batches.  We set the stream input rates as 0.5~million and 0.15~million
items per second for KDDCup and HIGGS, respectively.  We vary the 
{\em iteration size} (i.e., the number of items being processed per iteration)
in our evaluation.
For AF-Stream, we set $L\!=\!2000$ and vary $\Theta$ as 0.1, 1, and 10 and
obtain the average results. 

Figure~\ref{fig:th} compares the throughput with and without fault tolerance
versus the iteration size. Recall from Experiment~1 that AF-Stream has high
absolute throughput due to a simplified implementation.  Thus,
our focus here is not to compare the absolute throughput of different
systems.  Instead, we focus on comparing the performance differences with and
without enabling fault tolerance.

We first consider Spark Streaming.  For KDDCup, the fault tolerance overhead
in Spark Streaming's throughput
(Figure~\ref{fig:th}(a)) depends on the iteration size.  If the iteration size
is 0.5~million items per iteration, enabling fault tolerance drops the
throughput by 57\%.  The reason is that the number of mini-batches over the
entire stream becomes high and Spark Streaming needs to trigger a high number
of state backups.  As the iteration size increases (i.e., the number of
mini-batches decreases), the fault tolerance overhead is negligible.
However, when the iteration size reaches 1.5~million items per iteration,
enabling fault tolerance drops the throughput by 34\%, since a worker now
needs more resources to process a large number of items in each iteration
in time.  Spark Streaming shows a similar trend in HIGGS as in KDDCup
(Figure~\ref{fig:th}(b)). When the iteration size is 0.15~million items
per iteration, the throughput drops by 23\%.  The throughput drop reduces
to 13\% on average when the iteration size increases from 0.23~million to
0.38~million items per iteration, but increases again to 32\% when the
iteration size reaches 0.45~million items per iteration.
We also observe similar results for Flink, in which enabling
fault tolerance drops the throughput by
25\% and 26\% on average in KDDCup and HIGGS, respectively
(Figures~\ref{fig:th}(c) and \ref{fig:th}(d)).  
On the other hand, AF-Stream's throughput (Figures~\ref{fig:th}(e) and
\ref{fig:th}(f)) remains stable for any iteration size in both traces.  It
drops by only 1.14\% and 1.06\% on average for KDDCup and HIGGS, respectively,
when approximate fault tolerance is used. 

\begin{figure*}[t]
	\centering
	\begin{tabular}{@{\ }c@{\ }c}
		\includegraphics[width=3.2in]{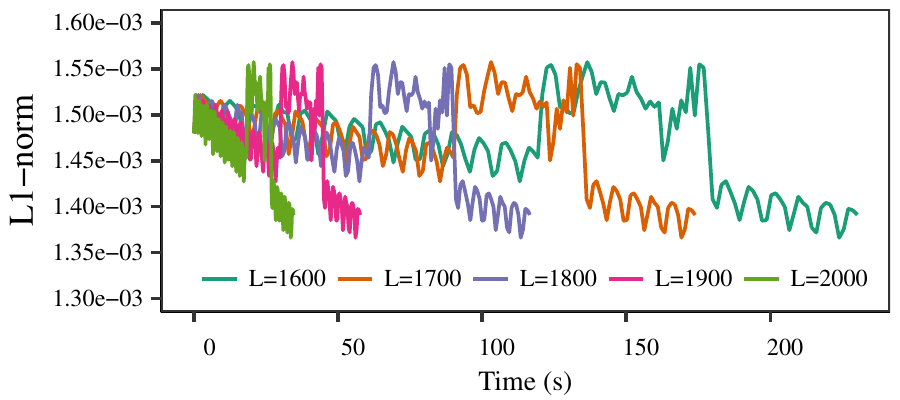} &
		\includegraphics[width=3.2in]{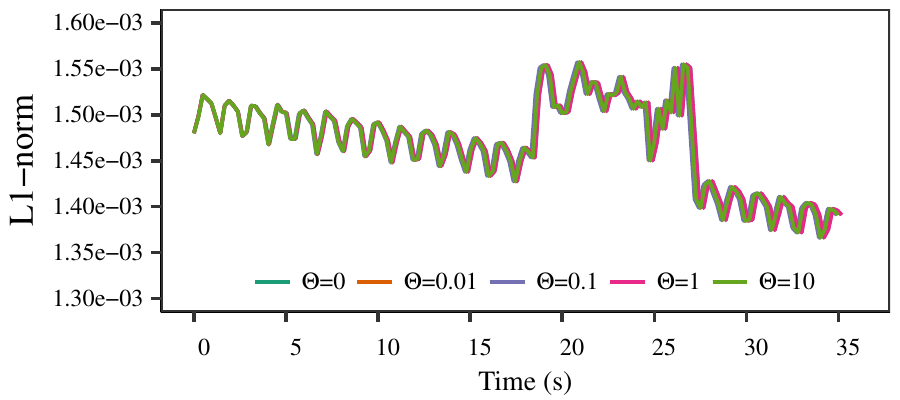}\\
		{\small (a) KDDCup, varying $L$, $\Theta\!=\!0.01$} &
		{\small (b) KDDCup, varying $\Theta$, $L\!=\!2000$}
		\vspace{3pt}\\
		\includegraphics[width=3.2in]{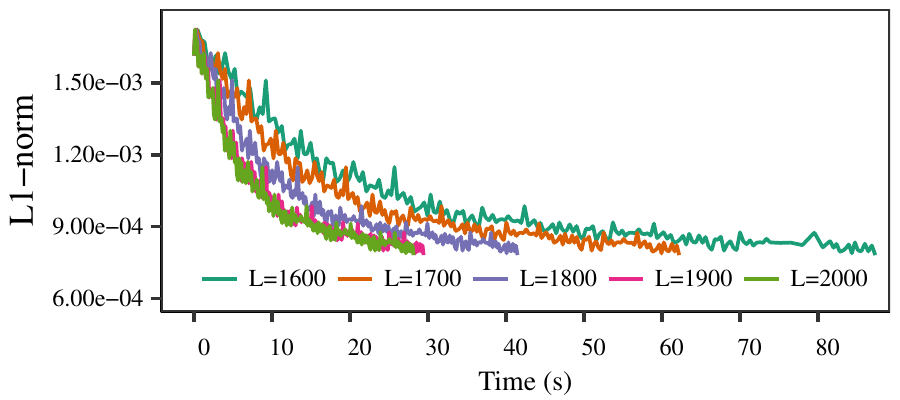} &
		\includegraphics[width=3.2in]{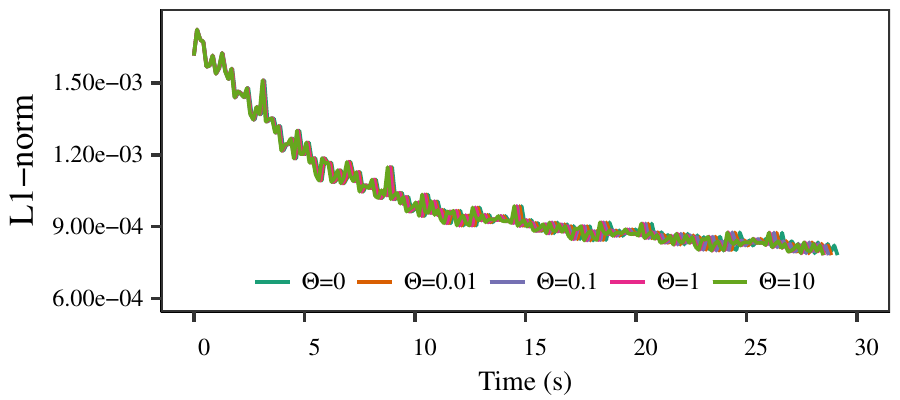}\\
		{\small (c) HIGGS, varying $L$, $\Theta\!=\!0.01$}  &
		{\small (d) HIGGS, varying $\Theta$, $L\!=\!2000$} 
	\end{tabular}	
	\caption{Experiment~8: Impact of parameters on the model convergence of
		AF-Stream.}
	\label{fig:parameter}
\end{figure*}

\paragraph{Experiment~7 (Convergence under failures).}  We next evaluate the
model convergence time when the systems operate under failures.  In all
systems, we kill a number of local workers to mimic simultaneous failures
after they have processed 10\% of all input items.  For Spark Streaming, it
will recover the lost states and items and re-distribute the workloads to the
other non-failed workers.  
For Flink, when some workers fail, it needs to first stop the whole job,
including the processing of other non-failed workers. It then restarts the job
from the latest checkpoint.
For AF-Stream, we immediately restart the workers at the same machines where
they fail.  For AF-Stream, we set $L\!=\!2000$ and $\Theta\!=\!0.01$.
All systems operate under the BSP consistency model.  We set the iteration
size as 1~million and 0.3~million items per iteration for KDDCup and HIGGS,
respectively.  We consider the cases when there is no failure and when we
issue four or eight failures.  Figure~\ref{fig:failure} shows the results. 

First, note that for KDDCup (Figures~\ref{fig:failure}(a),
\ref{fig:failure}(c), and \ref{fig:failure}(e)), there is a sharp increase in
L1-norm even though there is no failure (from time~155s to 229s in Spark
Streaming, from time~114s to 168s in Flink, and from time~18s to 26s in
AF-Stream).  The reason is attributed to the 40~million anomalous tuples with
a missing feature in that period.  Nevertheless, AF-Stream still ensures that
the model converges even under failures as compared to without failures.

Specifically, Spark Streaming provides error-free fault tolerance.  For KDDCup 
(Figure~\ref{fig:failure}(a)), it has the same convergence pattern even with
four failures as in the no-failure case.  Even when there are eight failures,
the convergence time is only slightly delayed, for example, from 297s in the
no-failure case to 314s (i.e., 5.7\% delay) so as to reach the final L1-norm
equal to 1.4e-3.  For HIGGS (Figure~\ref{fig:failure}(b)), Spark Streaming
has the same convergence pattern with both four and eight failures as in the
no-failure case. 

Similarly, Flink sees delayed convergence under failures
(Figures~\ref{fig:failure}(c) and \ref{fig:failure}(d)).  In particular, it
takes more than 50s to restart the whole job when failures happen.  Thus, with
eight failures, the convergence time increases from 218s in the no-failure case
to 313s to reach the final L1-norm 1.4e-3 in KDDCup, and from 120s in the
no-failure case to 191s to reach the final L1-norm 8e-4 in HIGGS.

AF-Stream has a different convergence pattern when failures happen.  For
KDDCup (Figure~\ref{fig:failure}(e)), the L1-norm sharply increases right
after the failures are issued and recovered, since AF-Stream incurs errors in
gradient computation upon the recovery of failures.  Nevertheless, the errors
are amortized after processing six and ten iterations for four and eight
failures, respectively.  The model converges slightly later than the
no-failure case, as AF-Stream issues more backups before the errors are
amortized.  For example, in the no-failure case, the model takes 35.4s to
converge to the final L1-norm equal to 1.4e-3; where there are four and eight
failures, the convergence time slightly increases to 38.1s (i.e., 7.6\% delay)
and 40.1s (i.e., 13.3\% delay), respectively.  For HIGGS 
(Figure~\ref{fig:failure}(f)), the L1-norm does not significantly change right
after the failures are issued and recovered, yet these failures delay the
convergence as they trigger more backups in AF-Stream as in KDDCup. 
With four and eight failures, the convergence time slightly increases from
28.2s to 31.7s (i.e., 12.4\% delay) and 33.9s (i.e., 20\% delay),
respectively, to reach the final L1-norm 8e-4. 

\begin{figure*}[t]
	\centering
	\begin{tabular}{@{\ }c@{\ }c}
		\includegraphics[width=3.2in]{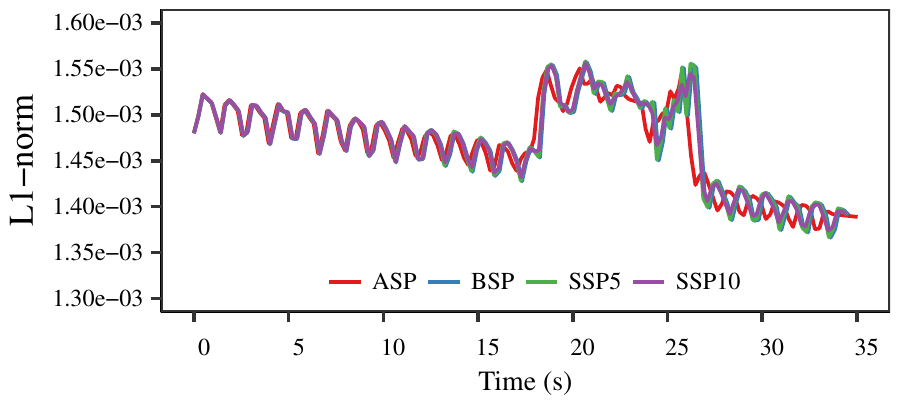} &
		\includegraphics[width=3.2in]{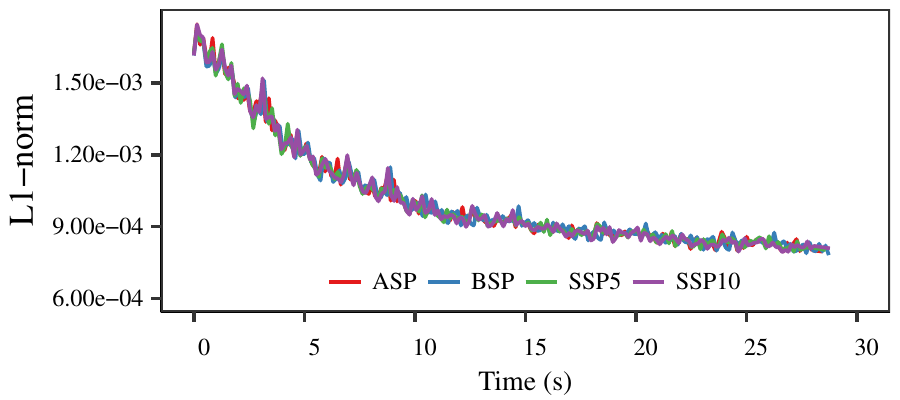}\\
		{\small (a) KDDCup} &
		{\small (b) HIGGS} 
	\end{tabular}
	\caption{Experiment~9: Convergence under different consistency models.}	
	\label{fig:mth}
\end{figure*}

\paragraph{Experiment~8 (Impact of parameters on the model convergence of
	AF-Stream).} We evaluate how various parameters affect the model convergence
of AF-Stream.  We focus on studying $\Theta$ and $L$, as they
affect the convergence performance as shown in the proof of
Theorem~\ref{theorem:sgd} (Appendix~\ref{sec:appendix_convergence}). 
We set the iteration size as 1~million and 0.3~million items per second for
KDDCup and HIGGS, respectively, and run AF-Stream under the BSP consistency
model.  

We first study KDDCup.  Figure~\ref{fig:parameter}(a) shows the model
convergence of AF-Stream for different values of $L$ for KDDCup, where we fix
$\Theta\!=\!0.01$.  If we use smaller $L$, the model convergence time
significantly increases.  The reason is that AF-Stream now issues more item
backups to satisfy the fault tolerance requirement, and such backup overhead
significantly degrades the overall performance.   Note that increasing $L$
beyond 2000 has roughly the same performance as in $L\!=\!2000$, so we only
show the results of $L$ up to 2000 for brevity.  Figure~\ref{fig:parameter}(b)
shows the model convergence of AF-Stream for different values of $\Theta$,
where we fix $L\!=\!2000$.  The model convergence time remains fairly the
same.  It shows that the impact of $L$ is much greater than that of $\Theta$.
We also show the impact of parameters for HIGGS
(Figures~\ref{fig:parameter}(c) and \ref{fig:parameter}(d)), and the
observations are similar.  

\paragraph{Experiment~9 (Consistency models).}
We further study how AF-Stream performs under different consistency models
(i.e., ASP, BSP, and SSP); for SSP, we configure each local worker to use the
stale model for at most five and ten additional iterations (denoted by SSP-5
and SSP-10, respectively).  We set the iteration size as 1 million and 0.3
million items for KDDCup and HIGGS, respectively. We also set $L\!=\!2000$ and
$\Theta\!=\!0.01$.

Figure~\ref{fig:mth} shows the results.  We find that the model convergence
performance is highly similar across different consistency models in both
traces.  
In all cases, the models converge to the same final L1-norm.  This
demonstrates that AF-Stream works well with different consistency models. In
particular, as all local workers have similar processing speeds, we do not see
significant performance differences between BSP and ASP, even though all local
workers need to wait until receiving the feedback from the global worker in
each iteration under BSP (\S\ref{sec:impl}).

\paragraph{Discussion.}
Our results show that AF-Stream preserves the iterative-convergent and
error-tolerant nature of online learning based on stochastic gradient descent,
which also validates the conclusion of our theoretical analysis. In
particular, AF-Stream maintains the model convergence of online logistic
regression in the presence of failures, and the convergence still holds for
different traces and parameter settings.

\section{Related Work}
\label{sec:related}

\noindent
\textbf{Best-effort fault tolerance.}
Some stream processing systems only provide best-effort fault tolerance.
They either discard missing items (e.g., \cite{Neumeyer2010}) or
missing states (e.g., \cite{ApacheFlume, Storm, Kulkarni2015}), or
simply monitor data loss and return the data completeness to developers (e.g.,
\cite{Jain2008, Logothetis2011}).
These systems may have unbounded errors in the face of failures, making the
processing results useless.  In contrast, AF-Stream bounds errors upon
failures with theoretical guarantees.

\paragraph{Error-free fault tolerance.}
Early stream processing systems extend traditional relational databases to
distributed stream databases to support SQL-like queries on continuous
data streams.  They support error-free fault tolerance,
and often adopt {\em active standby} (e.g.,
\cite{Shah2004,Balazinska2005}) or {\em passive standby} (e.g.,
\cite{Cherniack2003,Hwang2005,Hwang2007,Krishnamurthy2010}).
Active standby employs backup nodes to execute the same computations as
primary nodes, while passive standby makes periodic backups for states and
items to backup nodes.  Both approaches, however, are expensive due to
maintaining redundant resources (for active standby) or issuing frequent
backup operations (for passive standby).

Some stream processing systems realize {\em upstream backup}, in which an
upstream worker keeps the items that are being processed in its downstream
workers until all downstream workers acknowledge the completion of the
processing. 
The upstream worker replays the kept items when failures happen.  This
approach is used extensively by previous studies
\cite{Akidau2013,Qian2013,Fernandez2013,Wang2012,Murray2013,Hu2014,Liu2014,Trident}.
Upstream backup generally incurs significant overhead to normal processing,
as a system needs to save a large number of items for possible replays 
(\S\ref{subsubsec:item_backup}).  

To mitigate the impact on normal processing, asynchronous state checkpoints
\cite{Martin2011, Fernandez2014} allow normal processing to be performed in
parallel with state backups.  However, they are not designed for stream
processing and do not address item backups, which could be expensive.
StreamScope \cite{Lin2016} provides an abstraction to handle failure recovery,
but does not address how to trade between performance and accuracy in fault
tolerance as in AF-Stream.

\paragraph{Incremental processing.}
Incremental processing systems extend batch processing systems for stream
processing, by incrementally batching processing at small timescales.
Some systems extend MapReduce \cite{Dean2004} by pipelining mapper and reducer tasks
(e.g., \cite{Condie2010}), while others explicitly divide a stream into mini-batches and run batch-based processing for
each mini-batch (e.g., \cite{Logothetis2010,Chen2010,Bhatotia2011,Zaharia2013}).
Incremental processing systems inherently support error-free fault tolerance
because all data is available to regenerate states upon failures, but they
incur high I/O overhead in saving all items for availability.

\paragraph{Approximation techniques.}
Recent distributed systems have extensively adopted approximation techniques
to improve performance. For example,   BlinkDB \cite{Agarwal2013} is a
database which samples a subset of data to reduce query latency.  Some machine
learning systems (e.g., \cite{Li2014, Wei2015, Xing2015}) defer
synchronization to reduce communication costs.  In the context of stream
processing, JetStream \cite{Rabkin2014} deploys tunable operators that
automatically trade accuracy for bandwidth saving in wide-area stream
processing.  Approximate Spark Streaming \cite{Agarwal2015} samples a subset
of items in continuous streams for actual processing.  AF-Stream differs from
them by leveraging approximations in fault tolerance for distributed stream
processing.

Load shedding \cite{Tatbul2003,Tatbul2007} is a technique that enables an
overloaded stream processing system to drop unprocessed streaming items, with
a trade-off of producing approximate outputs.  Both load shedding and
AF-Stream target different reliability aspects in stream processing: (i) load
shedding addresses system overloading, while AF-Stream focuses on fault
tolerance; (ii) load shedding is activated upon system overloading, while
AF-Stream proactively mitigates backup overhead and provides error bounds in
failure recovery; and (iii) load shedding does not mitigate overhead due to
state backup.  Nevertheless, AF-Stream can also be integrated with load
shedding to address overloading conditions.

LAAR \cite{Bellavista2014} supports adaptive replication by dynamically
activating or deactivating replication for processing elements based on the
service-level agreement requirements, so as to improve processing
capacities.  AF-Stream targets the adaptive backups for internal states and
streaming items, and we provide error bounds independent of the number of
failures. 

Zorro \cite{Pundir2015} introduces approximation techniques to handle failures
in graph processing.  It exploits vertex replication in distributed graph
processing to reconstruct lost states with a high probability. While the idea
is similar to approximate fault tolerance, it is not applicable for stream
processing since streaming items do not exhibit such replication nature.

\paragraph{Convergence of online learning.} The convergence property of online
learning has been well studied and analyzed through game theory
\cite{Hazan2007} or stochastic programming \cite{Bottou2007}.  Many follow-up
studies extend the convergence analysis of online learning for different
settings, such as signal processing paradigms \cite{Busuttil2007}, delayed
updates in parallel computing architectures \cite{Langford2009}, and platforms
with adversarial delays \cite{Quanrud2015}.  In contrast to the current
studies, we analyze the convergence of online learning in distributed stream
processing under approximate fault tolerance. 

\section{Conclusions}
\label{sec:conclusion}

We propose AF-Stream, a distributed stream processing system that
realizes approximate fault tolerance for both internal states and unprocessed
items.  AF-Stream achieves not only high performance by reducing the number of
backup operations, but also high accuracy by bounding errors upon failures
with theoretical guarantees.  It provides an extensible programming model and
exports user-configurable threshold parameters for configuring the
performance-accuracy trade-off.  In addition, it provably preserves the
iterative-convergent and error-tolerant nature of online learning based on
SGD.  Experiments on our AF-Stream prototype demonstrate its high performance
and high accuracy in various streaming algorithms, as well as its model
convergence in various settings.


{\small
\bibliographystyle{abbrv}      
\bibliography{paper}   
}

\begin{appendices}

\section{Error Analysis}
\label{sec:appendix_error}

In this section, we extend the analysis in \S\ref{subsec:ft_analysis} to
derive the error bounds of two streaming algorithms with respect to $\Theta$,
$L$, and $\Gamma$.  

\subsection{Heavy Hitter Detection}

We consider the heavy hitter (HH) detection with Count-Min Sketch
\cite{Cormode2005a}. 
Count-Min Sketch maintains a matrix of counters with $r$ rows and $w$ counters
per row to keep track of the values of the keys.  Given a threshold $\phi$, a
key $x$ is reported as an HH if its estimated value exceeds $\phi$.
Lemma~\ref{lemma:sketch} presents the error probability of HH detection for an
arbitrary key $x$ with the true value $T(x)$.
\begin{lemma}[\cite{Cormode2005a}]
	\label{lemma:sketch}
	Consider a Count-Min Sketch with $r=\log_{1/2}{\delta}$ rows and
	$w=\frac{U}{\epsilon\phi}$ counters per row, where $U$ denotes the sum of the
	values of all keys, and $\epsilon$ and $\phi$ are error parameters.  It
	reports every HH $x$ where $T(x)\ge\phi$ without any error.  For a non-HH $x$
	with $T(x) < (1-\epsilon)\phi$, it is reported as an HH with an error
	probability at most $\delta$. 
\end{lemma}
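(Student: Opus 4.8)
The plan is to reproduce the standard Count-Min Sketch analysis of \cite{Cormode2005a}, specialized to the detection rule ``report $x$ as an HH iff its estimate $\hat{T}(x)$ is at least $\phi$.'' First I would fix notation: the sketch keeps $r$ rows, and row $j$ uses a hash function $h_j$ drawn (independently across rows) from a uniform, pairwise-independent family mapping the key space onto $\{1,\dots,w\}$; processing a key $y$ adds $T(y)$ to the counter $C[j][h_j(y)]$ in every row, and the estimate is $\hat{T}(x)=\min_{1\le j\le r} C[j][h_j(x)]$. The no-false-negative claim then comes essentially for free from a one-sided (no-underestimate) property: since all key values are non-negative, each counter $C[j][h_j(x)]$ equals $T(x)$ plus the mass of colliding keys, hence $C[j][h_j(x)]\ge T(x)$ for every $j$ and so $\hat{T}(x)\ge T(x)$. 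Consequently any true HH with $T(x)\ge\phi$ satisfies $\hat{T}(x)\ge\phi$ and is always reported, with no error.

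For the false-positive bound I would isolate the per-row overestimate $X_j \defeq C[j][h_j(x)]-T(x)=\sum_{y\ne x:\,h_j(y)=h_j(x)} T(y)$, i.e. the total value of the other keys that collide with $x$ in row $j$. Using uniformity of $h_j$, each other key collides with probability $1/w$, so by linearity of expectation $\mathbb{E}[X_j]=\tfrac{1}{w}\sum_{y\ne x}T(y)\le U/w=\epsilon\phi$, where the last equality substitutes $w=U/(\epsilon\phi)$. Next I would translate a false report into a statement about the $X_j$'s: a non-HH $x$ with $T(x)<(1-\epsilon)\phi$ is wrongly reported iff $\hat{T}(x)\ge\phi$, i.e. iff $\min_j X_j\ge \phi-T(x)$, and since $\phi-T(x)>\epsilon\phi$ this forces $X_j>\epsilon\phi$ in \emph{every} row simultaneously.

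Finally I would combine a per-row Markov bound with independence across rows. Markov's inequality applied to each $X_j$ controls $\Pr[X_j>\epsilon\phi]$ by a constant determined by the ratio $\mathbb{E}[X_j]/(\epsilon\phi)$; because the $r$ hash functions are drawn independently, the row events $\{X_j>\epsilon\phi\}$ are independent, so $\Pr[\min_j X_j>\epsilon\phi]=\prod_{j=1}^{r}\Pr[X_j>\epsilon\phi]$ decays geometrically in $r$, and choosing $r=\log_{1/2}\delta$ brings the product down to $\delta$. The delicate step, and the one I expect to be the main obstacle, is making the per-row Markov constant and the number of rows fit together to yield exactly $\delta$: this couples the choice of $w$ (which sets $\mathbb{E}[X_j]$ and hence the Markov ratio) to the choice of $r=\log_{1/2}\delta$ (which sets how many independent trials the minimum takes), and it requires careful handling of the worst-case threshold $\phi-T(x)$ and of strict versus non-strict inequalities so that the geometric product genuinely telescopes to the claimed bound.
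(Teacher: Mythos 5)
Your skeleton --- one-sided error for the no-false-negative claim, the per-row collision mass $X_j$ with $\mathbb{E}[X_j]\le U/w$, Markov per row, then independence across the $r$ rows --- is exactly the standard Count-Min analysis, and it is the right way to attack this statement (the paper itself never proves the lemma; it imports it by citation from the Count-Min Sketch paper, so there is no in-paper argument to match). The problem is that your plan does not close, and the failure is precisely at the step you flagged as ``delicate.'' With the width as stated, $w = U/(\epsilon\phi)$, your own expectation computation gives $\mathbb{E}[X_j] \le U/w = \epsilon\phi$, so Markov yields $\Pr[X_j > \epsilon\phi] \le \mathbb{E}[X_j]/(\epsilon\phi) \le 1$, a vacuous bound, and the product over rows telescopes to $1^r = 1$, not to $\delta$. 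For the geometric product to give $(1/2)^r = \delta$ with $r = \log_{1/2}\delta$, you need per-row failure probability at most $1/2$, hence $\mathbb{E}[X_j] \le \epsilon\phi/2$, hence $w \ge 2U/(\epsilon\phi)$: a factor of $2$ that the statement of the lemma omits (the original Cormode--Muthukrishnan parametrization hides the same constant as $w = e/\epsilon'$ with $r = \ln(1/\delta)$).

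Moreover, this is not slack that a cleverer argument could recover: with $w = U/(\epsilon\phi)$ exactly, take $T(x)$ just below $(1-\epsilon)\phi$ and let the remaining mass consist of roughly $U/(\epsilon\phi)$ keys each of value just above $\epsilon\phi$. Under fully random hashing (a legitimate instance of pairwise-independent hashing), a single collision with $x$ in a row already pushes that row's counter to at least $\phi$, and the probability of at least one collision per row is about $1 - 1/e \approx 0.63 > 1/2$; the resulting false-positive probability is about $(1-1/e)^{\log_2(1/\delta)} = \delta^{0.66} \gg \delta$. So the constants in the statement, read literally, cannot be made to ``genuinely telescope'' as you hoped. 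To repair your proof you must either strengthen the hypothesis to $w \ge 2U/(\epsilon\phi)$ (after which every step you outline goes through verbatim: Markov gives $1/2$ per row, independence across rows gives $(1/2)^{\log_{1/2}\delta} = \delta$) or widen the gap in the conclusion to $T(x) < (1-2\epsilon)\phi$. Everything else in your plan --- non-negativity giving zero false negatives deterministically, pairwise independence sufficing for the expectation bound, independence across rows justifying the product --- is correct.
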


\begin{theorem}
	\label{theorem:sketch}
	Suppose that AF-Stream deploys a Count-Min Sketch with the same setting as
	Lemma~\ref{lemma:sketch} and the compensation method in Experiment~3
	(\S\ref{subsec:eval_ec2}).  The sketch reports all HHs. On the other hand,
	upon a single failure, for an arbitrary key $x$ with the true value $T(x) <
	(1-\epsilon)\phi-\Theta-\alpha L$, it is reported as an HH with an error
	probability at most $\delta$.
\end{theorem}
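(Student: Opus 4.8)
The plan is to track how a single failure perturbs each individual sketch counter, show that the compensation exactly cancels the worst-case undercounting (which yields the ``reports all HHs'' claim), and then replay the Count-Min argument of Lemma~\ref{lemma:sketch} with the reporting threshold shifted by the worst-case overcounting. First I would fix the per-counter perturbation. By the single-failure analysis of \S\ref{subsubsec:single}, after recovery the actual state diverges from the ideal state by at most $\Theta + L\alpha$; since the divergence used here is the maximum difference of counter values, this bound applies \emph{uniformly} to every counter. Writing $C_{\mathrm{id}}[j,i]$ and $C_{\mathrm{ac}}[j,i]$ for the ideal and restored counters, and noting that missing updates can only lower a counter, I get $C_{\mathrm{id}}[j,i] - (\Theta+L\alpha) \le C_{\mathrm{ac}}[j,i] \le C_{\mathrm{id}}[j,i]$. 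The compensation of Experiment~3 then adds (at most) $\Theta+L\alpha$ to each restored counter, producing $C_{\mathrm{cp}}[j,i]$ with $C_{\mathrm{id}}[j,i] \le C_{\mathrm{cp}}[j,i] \le C_{\mathrm{id}}[j,i] + (\Theta+L\alpha)$.

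For the ``reports all HHs'' part I would use the left inequality. Ordinary Count-Min never undercounts, so $\widehat T(x) \ge T(x)$; taking the per-row minimum of $C_{\mathrm{cp}} \ge C_{\mathrm{id}}$ gives $\widehat T'(x) \ge \widehat T(x) \ge T(x)$ for every key $x$, where $\widehat T'$ is the compensated estimate. Hence every $x$ with $T(x)\ge\phi$ satisfies $\widehat T'(x)\ge\phi$ and is still reported, so the compensation preserves the zero false-negative guarantee.

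For the false-positive bound I would use the right inequality, which after the per-row minimum yields the sandwich $\widehat T(x) \le \widehat T'(x) \le \widehat T(x) + \Theta + L\alpha$. Thus $x$ can be falsely reported only if $\widehat T(x) \ge \phi - \Theta - L\alpha$, so $P[\widehat T'(x)\ge\phi] \le P[\widehat T(x) \ge \phi - \Theta - L\alpha]$. For a key with $T(x) < (1-\epsilon)\phi - \Theta - \alpha L$, the residual margin is $(\phi-\Theta-L\alpha) - T(x) > \epsilon\phi$, which is exactly the margin Lemma~\ref{lemma:sketch} exploits at threshold $\phi$. Replaying that argument verbatim---each row's collision noise has expectation at most $U/w = \epsilon\phi$, so by Markov the per-row overflow probability is bounded by the same $1/2$, and independence across the $r=\log_{1/2}\delta$ rows multiplies to $\delta$---gives $P[\widehat T(x)\ge\phi-\Theta-L\alpha]\le\delta$, as claimed.

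The main obstacle I anticipate is the bookkeeping that links AF-Stream's \emph{aggregate} state-divergence bound to a clean \emph{per-counter} statement and verifies that the compensation is calibrated to the same constant $\Theta+L\alpha$: it must be large enough to erase all undercounting (for zero false negatives) yet no larger, so that the overcounting stays within $\Theta+L\alpha$ and the reporting threshold shifts by exactly this amount. A secondary point of care is that the $\min$ over rows interacts correctly with nonuniform per-counter shifts; this is precisely what the two-sided sandwich $\widehat T \le \widehat T' \le \widehat T + \Theta + L\alpha$ is designed to absorb, after which the probabilistic step reduces to Lemma~\ref{lemma:sketch} with no new estimates required.
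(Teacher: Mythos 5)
Your proposal is correct and follows essentially the same route as the paper's proof: the compensation yields no underestimation (hence zero false negatives) and an overestimation of at most $\Theta+\alpha L$, after which the false-positive bound follows by shifting the margin to $\epsilon\phi+\Theta+\alpha L$ and invoking the Markov-inequality argument of Lemma~\ref{lemma:sketch}. Your per-counter sandwich $C_{\mathrm{id}} \le C_{\mathrm{cp}} \le C_{\mathrm{id}} + (\Theta+L\alpha)$ merely makes explicit the bookkeeping that the paper compresses into the single sentence ``AF-Stream overestimates a key by at most $\Theta+\alpha L$, but does not underestimate any keys.''
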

\begin{proof}
	With the compensation method, AF-Stream overestimates a key by at most 
	$\Theta + \alpha L$, but does not underestimate any keys.  Thus, we ensure
	that every HH $x$ with $T(x)\ge\phi$.  On the other hand, let $\NT(x)$ denote
	the estimated value of key $x$.  For a non-HH $x$ with $T(x) <
	(1-\epsilon)\phi-\Theta-\alpha L$, it is reported as an HH if and only if its
	estimate value $\NT(x)\ge\phi$.  Thus, 
	$\Pr\{\NT(x)\ge\phi\} = Pr\{\NT(x)-T(x)\ge\phi-T(x)\} \le 
	\Pr\{\NT(x)-T(x)\ge \epsilon\phi+\Theta+\alpha L\}\le\delta$, due to Markov's
	inequality \cite{Cormode2005a}. 
\end{proof}

\paragraph{Example.} Consider a network traffic stream where the total
volume of flows in a window is $U=$~40~GB and flows exceeding $\phi=$~10MB are
HHs.  Let $\delta=~1/16$ and $\epsilon=~1/2$. By setting
$r=\log_{1/2}{\delta}=4$ and $w=\frac{U}{\epsilon\phi}=8192$, a Count-Min
Sketch reports all HHs.  For any non-HH less than 5~MB (i.e.,
$(1-\epsilon)\phi$), it is falsely reported as an HH with an error probability
at most $\delta=1/16$ based on Lemma~\ref{lemma:sketch}.  

Suppose that we run HH detection with Count-Min Sketch on AF-Stream.  Here, we
assume $\alpha=$~1,500, denoting the maximum packet size in bytes.  If we
configure $L=10^3$ and $\Theta=10^5$, then the keys with sizes less than
3.5~MB (i.e., $(1-\epsilon)\phi-L\alpha-\Theta$) will be falsely reported as
HHs with an error probability at most $\delta=1/16$ based on
Theorem~\ref{theorem:sketch}.

\subsection{Stream Database Queries}

We study Ripple Join \cite{Haas1999}, an online join algorithm that samples a
subset of items in two streams and performs join operations on the sampled
subset.  Let $n$ be the number of sampled items.  For aggregation queries
(e.g., SUM, COUNT, and AVG), we denote the true value and estimate value by
$\mu$ and $\hat{\mu}$, respectively.  Ripple Join provides the following
guarantee of the aggregation error. 
\begin{lemma}[\cite{Haas1999}]
	\label{lemma:join}
	When Ripple Join is applied to $n$ sampled items, it guarantees that
	$\Pr\{\hat{\mu}\in[\mu-\epsilon_n, \mu+\epsilon_n]\}\ge 1-\delta$, where
	$\epsilon_n=\frac{z}{\sqrt{n}}$ and $z$ is a constant number depending on
	$\delta$ and the specific aggregation.
\end{lemma}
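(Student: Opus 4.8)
The plan is to obtain the interval as a consequence of the asymptotic normality of the Ripple Join estimator, following the standard online-aggregation argument. First I would make the estimator explicit: for an aggregate over the join of the two streams, Ripple Join forms an estimator $\hat{\mu}$ by computing the aggregate on the sampled join and rescaling by the inverse inclusion probabilities (a Horvitz--Thompson-style correction). The first step is to verify unbiasedness, i.e.\ $E[\hat{\mu}] = \mu$, so that the sampling scheme contributes no systematic error; for SUM, COUNT, and AVG this follows by applying linearity of expectation to the indicator that a given joinable pair of tuples is included in the sample.

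Next I would compute the variance and invoke a central limit theorem. The variance decays as $\mathrm{Var}(\hat{\mu}) = \sigma^2/n$, where $\sigma^2$ is a fixed quantity determined by the data distribution and the chosen aggregate. A central limit theorem then yields $\sqrt{n}\,(\hat{\mu}-\mu)/\sigma \Rightarrow N(0,1)$ as $n\to\infty$. Choosing the standard-normal quantile $z_0 = \Phi^{-1}(1-\delta/2)$ and setting $z = z_0\,\sigma$ gives
\begin{equation}
\Pr\{\hat{\mu}\in[\mu-\epsilon_n,\mu+\epsilon_n]\}\to 1-\delta,\quad \epsilon_n = \frac{z}{\sqrt{n}}, \nonumber
\end{equation}
which folds both the confidence level $\delta$ and the aggregate-specific variance into the single constant $z$, exactly as the statement claims.

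The main obstacle is that the sampled join tuples are not independent: each sampled join result couples one tuple from each stream, so $\hat{\mu}$ is a bilinear, U-statistic-like form rather than a sum of i.i.d.\ terms, and the plain i.i.d.\ CLT does not apply directly. I would handle this by writing $\hat{\mu}$ as a two-sample U-statistic and invoking the corresponding CLT via its Hoeffding projection, which still delivers the $1/\sqrt{n}$ variance scaling and asymptotic normality. Finally, to make the interval usable without knowing $\sigma^2$ in advance, I would substitute a consistent sample-based estimate $\hat{\sigma}^2$ and apply Slutsky's theorem so that the reported half-width $\hat{z}/\sqrt{n}$ still attains the target coverage; establishing $\hat{\sigma}^2 \to \sigma^2$ under the join's dependence structure is the step I expect to require the most care.
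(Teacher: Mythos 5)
The paper itself contains no proof of this lemma: it is imported by citation from \cite{Haas1999} and used as a black box in the proof of Theorem~\ref{theorem:join}, so the only meaningful comparison is against that cited source. Your reconstruction follows essentially the same route that Haas and Hellerstein take. The estimator is the aggregate over the sampled cross-product rescaled by inverse inclusion probabilities; it is unbiased by linearity of expectation; its fluctuations are governed by a central limit theorem for two-sample generalized U-statistics (you are right that the sampled join pairs are dependent and that the Hoeffding projection is the standard device that restores the $1/\sqrt{n}$ scaling and asymptotic normality); and the unknown asymptotic variance is replaced by a consistent plug-in estimate justified by Slutsky's theorem, which is exactly how the running confidence intervals in \cite{Haas1999} are computed. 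Folding the normal quantile and the aggregate-specific variance into the single constant $z$ matches the lemma's statement.

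One caveat is worth flagging. Your CLT argument delivers asymptotic coverage, $\Pr\{\hat{\mu}\in[\mu-\epsilon_n,\mu+\epsilon_n]\}\to 1-\delta$, whereas the lemma as restated here asserts the finite-sample inequality $\ge 1-\delta$ for the given $n$. This is a looseness of the paper's paraphrase rather than a defect of your argument --- the intervals in \cite{Haas1999} are explicitly ``large-sample'' intervals with precisely this asymptotic character --- but if one insisted on the literal $\ge 1-\delta$ for every $n$, the CLT route cannot give it; one would instead invoke the deterministic (Hoeffding-inequality) confidence intervals that Haas's work also provides, in which $z$ depends on an a priori bound on the aggregated values rather than on an estimated variance, at the cost of wider intervals.
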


AF-Stream loses sampled items in failure recovery, which is equivalent to
decreasing the sampling rate in Ripple Join.  Therefore, AF-Stream (slightly)
increases the aggregation error as fewer items are sampled.
Theorem~\ref{theorem:join} quantifies the new aggregation error. 

\begin{theorem}
	\label{theorem:join}
	When AF-Stream applies Ripple Join to $n$ sampled items, it guarantees that 
	$\Pr\{\hat{\mu} \in [\mu-\epsilon_n, \mu+\epsilon_n]\} \ge 1-\delta$, where
	$\epsilon_n=\frac{z}{\sqrt{n-\Theta-L\beta-\Gamma}}$ and $z$ is a constant
	depends on $\delta, n$ and the specific aggregation.
\end{theorem}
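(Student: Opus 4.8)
The plan is to reduce this statement to Lemma~\ref{lemma:join} by treating the items lost under approximate fault tolerance as an effective shrinkage of the sample size. Ripple Join's guarantee in Lemma~\ref{lemma:join} depends only on the number of items actually retained in the sampled subset, so if I can show that AF-Stream retains at least $n - \Theta - L\beta - \Gamma$ of the original $n$ sampled items, then substituting this reduced count into the lemma's bound immediately yields the claimed $\epsilon_n = z/\sqrt{n-\Theta-L\beta-\Gamma}$.

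First I would quantify the total number of sampled items that can be lost upon recovery, invoking the error analysis of \S\ref{subsec:ft_analysis}. For online join the state is the set of sampled packets and its divergence function counts the difference in the number of packets, so the state-divergence threshold translates directly into at most $\Theta$ sampled packets missing from the restored join set. The sampled packets still in transit are the output items forwarded from the upstream worker to the downstream worker; by the lost-output-item bound of \S\ref{subsubsec:single} these number at most $\Gamma + L\beta$, namely the $\Gamma$ unacknowledged items plus the $\beta$ outputs generated by each of the at-most-$L$ lost unprocessed items. Summing the two disjoint sources gives a total loss of at most $\Theta + L\beta + \Gamma$ sampled items. Crucially, these bounds are independent of the number of failures $k$ (\S\ref{subsubsec:multiple}), so the argument holds after recovering from any number of failures, not merely a single one.

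Next I would argue that discarding a bounded subset of sampled items is statistically equivalent to having run Ripple Join on a smaller sample of size $n' = n - \Theta - L\beta - \Gamma$: the retained items are still a uniform sample of the stream, so the estimator $\hat{\mu}$ remains unbiased and only its variance grows. Applying Lemma~\ref{lemma:join} with $n'$ in place of $n$ then gives $\Pr\{\hat{\mu}\in[\mu-\epsilon_{n'},\mu+\epsilon_{n'}]\}\ge 1-\delta$ with $\epsilon_{n'} = z/\sqrt{n'}$, which is exactly the stated bound; the constant $z$ now also depends on $n$ because the loss is expressed relative to the original sample size.

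The main obstacle I anticipate is the accounting in the second step: precisely justifying that each of the three threshold quantities maps to a \emph{distinct, non-overlapping} pool of lost sampled packets (the state residue, the in-flight output items, and the unacknowledged items), so that the losses add without double-counting. A secondary subtlety is confirming that the lost items do not bias the sample, since they are dropped for reasons independent of their join-key values; this preserves the unbiasedness assumption underlying Lemma~\ref{lemma:join}, so that only the effective sample size, and not the correctness of the estimator, is affected.
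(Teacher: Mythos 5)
Your proposal is correct and follows essentially the same route as the paper's proof: both arguments observe that the join state's divergence is measured in item counts (so the threshold $\Theta$ translates directly into at most $\Theta$ lost sampled items, i.e., $\alpha=1$), add the at most $\Gamma+L\beta$ lost output items to get a total loss of $\Theta+\Gamma+L\beta$, and then substitute the reduced sample size $n-\Theta-\Gamma-L\beta$ into Lemma~\ref{lemma:join}. Your additional remarks on non-double-counting, unbiasedness of the surviving sample, and independence from the number of failures $k$ are refinements of the same accounting rather than a different approach.
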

\begin{proof}  
	Recall that AF-Stream defines the state divergence as the difference of the
	numbers of items between the current state and the most recent backup state in
	Experiment~4.  Thus, $\alpha = 1$, since each lost item implies a difference
	of one in the number of items.  AF-Stream ensures that the total number of
	lost sampled items is at most $\Theta+\alpha(\Gamma+L\beta) = \Theta + \Gamma
	+ L\beta$.  By replacing $n$ in Lemma~\ref{lemma:join} with
	$n-\Theta-\Gamma-L\beta$, the theorem concludes. 
\end{proof}

\paragraph{Example.}
Consider two streams with $10^9$ items in a window and Ripple Join employs a
sampling rate 10\%, leading to an average of $10^8$ sampled items.  In this
case, the error is less than $\frac{z}{10,000}$ with a probability at least
$1-\phi$ based on Lemma~\ref{lemma:join}.  

Suppose that we run Ripple Join on AF-Stream.  Here, $\beta$ is the maximum
number of items with which an item can join.  Its value depends on how fast a
stream is generated and the number of items that we apply the join operation.
For example, in Experiment~4 (\S\ref{subsec:eval_ec2}), we find that the
number of items being joined is no more than 100, so we let $\beta=100$.
Suppose that we configure $\Theta\!=\!10^5$, $L\!=\!10^3$, $\Gamma\!=\!10^3$,
AF-Stream loses at most $201,000$ (i.e., $\Theta+L\beta+\Gamma$) sampled items
and increases the error from $\frac{z}{10,000}$ to $\frac{z}{9,989}$ based on
Theorem~\ref{theorem:join}.  

\section{Proof of Theorem \ref{theorem:sgd}}
\label{sec:appendix_convergence}

We refer readers to the notation and problem setting defined in
\S\ref{sec:theory}. 
Our proof makes the same assumptions as in prior variants of SGD
\cite{Dai2015,Ho2013,Langford2009}, including: (i) the function $f_t$ is 
{\em convex} for all $t$; (ii) the sub-differentials of $f_i$ are bounded:
$\|\nabla f_t\| \le H$ for some $H>0$ ({\em Lipschitz continuous}); (iii) the
distance function $D$ is defined as $D(\PS\|\PS')=\tfrac{1}{2}(\PS-\PS')^2$,
and the maximum distance of model parameters is bounded: 
$\max_{\PS, \PS'}D(\PS\|\PS') \le F^2$ for some $F$ ({\em bounded diameter}).

The goal of our proof is to compute the difference between the average risk of
the actual model parameters and that of the global optimum $\PS^*$, i.e., 
\begin{equation}
\AvgT f_t(\NS_t) - \AvgT f_t(\PS^*), 
\nonumber
\end{equation}
and show that the difference converges to zero when $T$ approaches infinity.
We first show that a non-failed sequence has bounded errors via
Lemmas~\ref{lemma:sgd_one_item} and~\ref{lemma:sgd_seq}.  Since the function
$f_t$ is convex for all $t$, we can bound the difference by:
\begin{equation}
\AvgT f_t(\NS_t)-\AvgT f_t(\PS^*) \le \AvgT\langle\Ng_t,\NS_t-\PS^*\rangle,
\nonumber 
\end{equation} 
where $\Ng_t$ denotes the gradients computed using the actual parameters
(i.e., $\Ng_t=\nabla f_t(\NS_t)$), and the symbol $\langle,\rangle$ denotes
the dot product of two vectors.

First, we decompose $\langle\Ng_t,\NS_t-\PS^*\rangle$ as follows:
\begin{lemma}[\cite{Zinkevich2003}]
	\label{lemma:sgd_one_item}
	If $\NS_{t+1} = \NS_t-\eta_t\Ng_t$, then
	\begin{equation}
	\langle\Ng_t,\NS_t-\PS^*\rangle =
	\tfrac{\eta_t}{2}\|\Ng_t\|^2 + 
	\tfrac{D(\PS^*\|\NS_t)-D(\PS^*\|\NS_{t+1})}{\eta_t}.
	\nonumber
	\end{equation}
\end{lemma}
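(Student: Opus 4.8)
The plan is to read Lemma~\ref{lemma:sgd_one_item} as the standard one-step ``progress'' identity of online gradient descent, treating the distance $D(\PS^*\|\cdot)$ as a quadratic potential and measuring how much it decreases in a single update step $\NS_{t+1}=\NS_t-\eta_t\Ng_t$. Because $D(\PS\|\PS')=\tfrac12(\PS-\PS')^2$ is exactly (half) the squared Euclidean norm, the change in potential expands into three terms: a squared-gradient term $\|\Ng_t\|^2$, the cross term carrying the inner product $\langle\Ng_t,\NS_t-\PS^*\rangle$ that I want to isolate, and a leading quadratic piece that cancels. No inequality is needed; the quadratic form of $D$ makes the relation an exact identity.

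Concretely, I would first substitute the update rule in the form $\PS^*-\NS_{t+1}=(\PS^*-\NS_t)+\eta_t\Ng_t$ and expand the squared norm:
\[
D(\PS^*\|\NS_{t+1})=\tfrac12\|\PS^*-\NS_t\|^2+\eta_t\langle\PS^*-\NS_t,\Ng_t\rangle+\tfrac{\eta_t^2}{2}\|\Ng_t\|^2 .
\]
Subtracting this from $D(\PS^*\|\NS_t)=\tfrac12\|\PS^*-\NS_t\|^2$ cancels the leading term and leaves
\[
D(\PS^*\|\NS_t)-D(\PS^*\|\NS_{t+1}) = -\eta_t\langle\PS^*-\NS_t,\Ng_t\rangle-\tfrac{\eta_t^2}{2}\|\Ng_t\|^2 .
\]
I would then flip the sign of the inner product using $\langle\PS^*-\NS_t,\Ng_t\rangle=-\langle\Ng_t,\NS_t-\PS^*\rangle$, divide through by $\eta_t$, and rearrange to solve for $\langle\Ng_t,\NS_t-\PS^*\rangle$, which reproduces the claimed identity verbatim.

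The only thing to watch is sign bookkeeping in the dot-product expansion and the division by $\eta_t>0$ (legitimate since $\eta_t=\tfrac{1}{\sqrt t}$). There is no genuine analytic obstacle: the statement is a pure algebraic identity, and its real significance is downstream. Summed over $t$, the $D(\PS^*\|\NS_t)-D(\PS^*\|\NS_{t+1})$ differences telescope, so Lemma~\ref{lemma:sgd_one_item} is precisely the per-step ingredient that the subsequent sequence lemma (Lemma~\ref{lemma:sgd_seq}) aggregates to bound the cumulative regret of a non-failed SGD run. I would therefore present it as a two-line computation and immediately flag the telescoping structure it feeds into.
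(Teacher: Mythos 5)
Your proof is correct, and it is worth noting that the paper itself does not prove this lemma at all: it simply cites it as part of Theorem~1 of \cite{Zinkevich2003} and moves on. Your direct verification is therefore a genuinely different (and more self-contained) route. The algebra checks out: writing $\PS^*-\NS_{t+1}=(\PS^*-\NS_t)+\eta_t\Ng_t$, expanding $D(\PS^*\|\NS_{t+1})=\tfrac{1}{2}\|\PS^*-\NS_{t+1}\|^2$, cancelling the common $\tfrac{1}{2}\|\PS^*-\NS_t\|^2$ term, flipping the sign of the cross term, and dividing by $\eta_t>0$ reproduces the stated identity exactly. Your observation that no inequality is needed is the key structural point, and it is precisely where your argument diverges from the cited source: in Zinkevich's setting the update includes a projection onto a convex feasible set, so the corresponding statement there is an inequality (projection can only decrease the distance to $\PS^*$), whereas in this paper's unconstrained update $\NS_{t+1}=\NS_t-\eta_t\Ng_t$ with the quadratic distance $D(\PS\|\PS')=\tfrac{1}{2}(\PS-\PS')^2$, the relation is an exact identity, which is what the lemma asserts. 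What your approach buys is self-containedness and transparency about exactly which hypotheses are used (only the update rule and the quadratic form of $D$, not convexity of $f_t$ or boundedness); what the paper's citation buys is brevity and an implicit pointer to the more general constrained setting. Your closing remark about the telescoping of $D(\PS^*\|\NS_t)-D(\PS^*\|\NS_{t+1})$ is also accurate: that is exactly how Lemma~\ref{lemma:sgd_seq} consumes this identity.
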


The lemma is part of Theorem~1 in \cite{Zinkevich2003}.  Prior studies 
(e.g., \cite{Dai2015,Ho2013,Langford2009}) also have similar lemmas.  Next, we prove
that any non-failed sequence of states has bounded errors:
\begin{lemma}
	\label{lemma:sgd_seq}
	Consider $T$ sequential states $\NS_{a+1}, \NS_{a+2}, \cdots,$ $\NS_{a+T}$, 
	where $\NS_{a+t+1} = \NS_{a+t}-\eta_{a+t}\Ng_{a+t}$ for any $0<t<T$,
	\begin{equation}
	\SumT\langle\Ng_{a+t},\NS_{a+t}-\PS^*\rangle\le
	H^2(\sqrt{a+T}-\sqrt{a})+F^2\sqrt{a+T}.
	\nonumber
	\end{equation}
\end{lemma}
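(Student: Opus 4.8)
The plan is to apply Lemma~\ref{lemma:sgd_one_item} to each summand and then control the two resulting pieces separately. Writing the recurrence index as $a+t$, Lemma~\ref{lemma:sgd_one_item} gives $\langle\Ng_{a+t},\NS_{a+t}-\PS^*\rangle = \tfrac{\eta_{a+t}}{2}\|\Ng_{a+t}\|^2 + \tfrac{1}{\eta_{a+t}}\big(D(\PS^*\|\NS_{a+t})-D(\PS^*\|\NS_{a+t+1})\big)$. Summing over $t=1,\dots,T$ splits the left-hand side into a \emph{gradient term} $\tfrac12\sum_t \eta_{a+t}\|\Ng_{a+t}\|^2$ and a \emph{distance term} $\sum_t \tfrac{1}{\eta_{a+t}}\big(D(\PS^*\|\NS_{a+t})-D(\PS^*\|\NS_{a+t+1})\big)$. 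I would then bound these two sums by $H^2(\sqrt{a+T}-\sqrt{a})$ and $F^2\sqrt{a+T}$, respectively, so that adding them yields the claim.

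For the gradient term, I would use the Lipschitz assumption $\|\Ng_{a+t}\|\le H$ together with $\eta_{a+t}=1/\sqrt{a+t}$ to bound it by $\tfrac{H^2}{2}\sum_{t=1}^{T}\tfrac{1}{\sqrt{a+t}}$. Since $x\mapsto 1/\sqrt{x}$ is decreasing, a standard integral comparison gives $\sum_{t=1}^{T}\tfrac{1}{\sqrt{a+t}}\le\int_{a}^{a+T}\tfrac{dx}{\sqrt{x}}=2(\sqrt{a+T}-\sqrt{a})$, so the gradient term is at most $H^2(\sqrt{a+T}-\sqrt{a})$, exactly the first summand of the target bound.

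The distance term is the delicate part, because the step sizes $\eta_{a+t}$ vary with $t$, so the clean telescoping available for a constant step size no longer applies. My plan is to handle it by summation by parts (Abel summation): writing $1/\eta_{a+t}=\sqrt{a+t}$ and abbreviating $D_{a+t}=D(\PS^*\|\NS_{a+t})$, I would rearrange $\sum_{t=1}^{T}\sqrt{a+t}\,(D_{a+t}-D_{a+t+1})$ into a leading term $\sqrt{a+1}\,D_{a+1}$, a sum of increments $\sum_{t=2}^{T}(\sqrt{a+t}-\sqrt{a+t-1})\,D_{a+t}$, and a trailing term $-\sqrt{a+T}\,D_{a+T+1}$. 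The trailing term is dropped because $D\ge 0$; the leading term and every increment coefficient are nonnegative, so the bounded-diameter assumption $D_{a+t}\le F^2$ bounds the remainder by $F^2\big(\sqrt{a+1}+\sum_{t=2}^{T}(\sqrt{a+t}-\sqrt{a+t-1})\big)$. This inner sum telescopes to $\sqrt{a+T}-\sqrt{a+1}$, leaving $F^2\sqrt{a+T}$, the second summand of the target bound.

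The main obstacle is precisely this distance term: the naive telescoping breaks once $\eta_{a+t}$ depends on $t$, and the care lies in (i) choosing the direction of summation by parts so that the surviving increment coefficients $\sqrt{a+t}-\sqrt{a+t-1}$ are nonnegative, and (ii) correctly invoking both $D\ge 0$ (to discard the trailing boundary term) and $D\le F^2$ (to bound the retained increments). Once both pieces are established, their sum is $H^2(\sqrt{a+T}-\sqrt{a})+F^2\sqrt{a+T}$, as stated.
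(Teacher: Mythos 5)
Your proposal is correct and follows essentially the same route as the paper's proof: apply Lemma~\ref{lemma:sgd_one_item} termwise, bound the gradient sum by $H^2(\sqrt{a+T}-\sqrt{a})$ via the step-size choice $\eta_{a+t}=1/\sqrt{a+t}$, and handle the distance sum by the same summation-by-parts rearrangement (the paper's split into the boundary terms $\tfrac{D(\PS^*\|\NS_{a+1})}{\eta_{a+1}}-\tfrac{D(\PS^*\|\NS_{a+T+1})}{\eta_{a+T}}$ plus the increment sum $\sum_{t=2}^{T}D(\PS^*\|\NS_{a+t})(\tfrac{1}{\eta_{a+t}}-\tfrac{1}{\eta_{a+t-1}})$ is exactly your Abel summation), then invoke $D\ge 0$ to drop the trailing term and $D\le F^2$ on the rest. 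Your writeup just makes explicit two steps the paper leaves implicit: the integral comparison for the gradient term and the identification of the rearrangement as summation by parts.
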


\paragraph{Proof of Lemma~\ref{lemma:sgd_seq}.}
By Lemma~\ref{lemma:sgd_one_item}, we show that 
\begin{equation}
\begin{split}
&   \SumT\langle\Ng_{a+t},\NS_{a+t}-\PS^*\rangle 
\\  =  &     \SumT[\tfrac{\eta_{a+t}}{2}\|\Ng_{a+t}\|^2 +
\tfrac{D(\PS^*\|\NS_{a+t})-D(\PS^*\|\NS_{a+t+1})}{\eta_{a+t}}] 
\\  =  &   \SumT\tfrac{\eta_{a+t}}{2}\|\Ng_{a+t}\|^2 +
\tfrac{D(\PS^*\|\NS_{a+1})}{\eta_{a+1}}-\tfrac{D(\PS^*\|\NS_{a+T+1})}{\eta_{a+T}}
\\ & 
+  \sum_{t=2}^{T}[D(\PS^*\|\NS_{a+t})(\tfrac{1}{\eta_{a+t}}-\tfrac{1}{\eta_{a+t-1}})].
\nonumber
\end{split}
\end{equation}
We bound each term as follows. For the first term,
\begin{equation}
\begin{split}
&  \SumT\tfrac{\eta_{a+t}}{2}\|\Ng_{a+t}\|^2 
\\ &  \le  \SumT\tfrac{\eta_{a+t}}{2}H^2  = \SumT\tfrac{1}{2\sqrt{a+t}}H^2
\le H^2(\sqrt{a+T}-\sqrt{a}).
\nonumber
\end{split}
\end{equation}
For the second term, 
\begin{equation}
\begin{split}
& \tfrac{D(\PS^*\|\NS_{a+1})}{\eta_{a+1}}-\tfrac{D(\PS^*\|\NS_{a+T+1})}{\eta_{a+T}} 
\\ & \le \tfrac{D(\PS^*\|\NS_{a+1})}{\eta_{a+1}} 
\le  \tfrac{F^2}{\eta_{a+1}} = F^2\sqrt{a+1}. 
\nonumber
\end{split}
\end{equation}
For the final term,
\begin{equation}
\begin{split}
& \sum_{t=2}^{T}[D(\PS^*\|\NS_{a+t})(\tfrac{1}{\eta_{a+t}}-\tfrac{1}{\eta_{a+t-1}})]
\\ & \le  F^2\!\sum_{t=2}^{T}[\tfrac{1}{\eta_{a+t}}-\tfrac{1}{\eta_{a+t-1}}] 
\le \ F^2(\sqrt{a+T}-\sqrt{a+1}).
\nonumber
\end{split}
\end{equation}
Summing up all terms, we have
\begin{equation}
\begin{split}
&    \SumT\langle\Ng_{a+t},\NS_{a+t}-\PS^*\rangle
\\ & \le  \  H^2(\sqrt{a+T}-\sqrt{a})+F^2\sqrt{a+1}
\\ & \hspace{0.5in} +F^2(\sqrt{a+T}-\sqrt{a+1})
\\ & =  \ H^2(\sqrt{a+T}-\sqrt{a})+F^2\sqrt{a+T}.  
\nonumber
\end{split}
\end{equation}

\begin{figure}[t]
	\centering
	\includegraphics[width=1\linewidth]{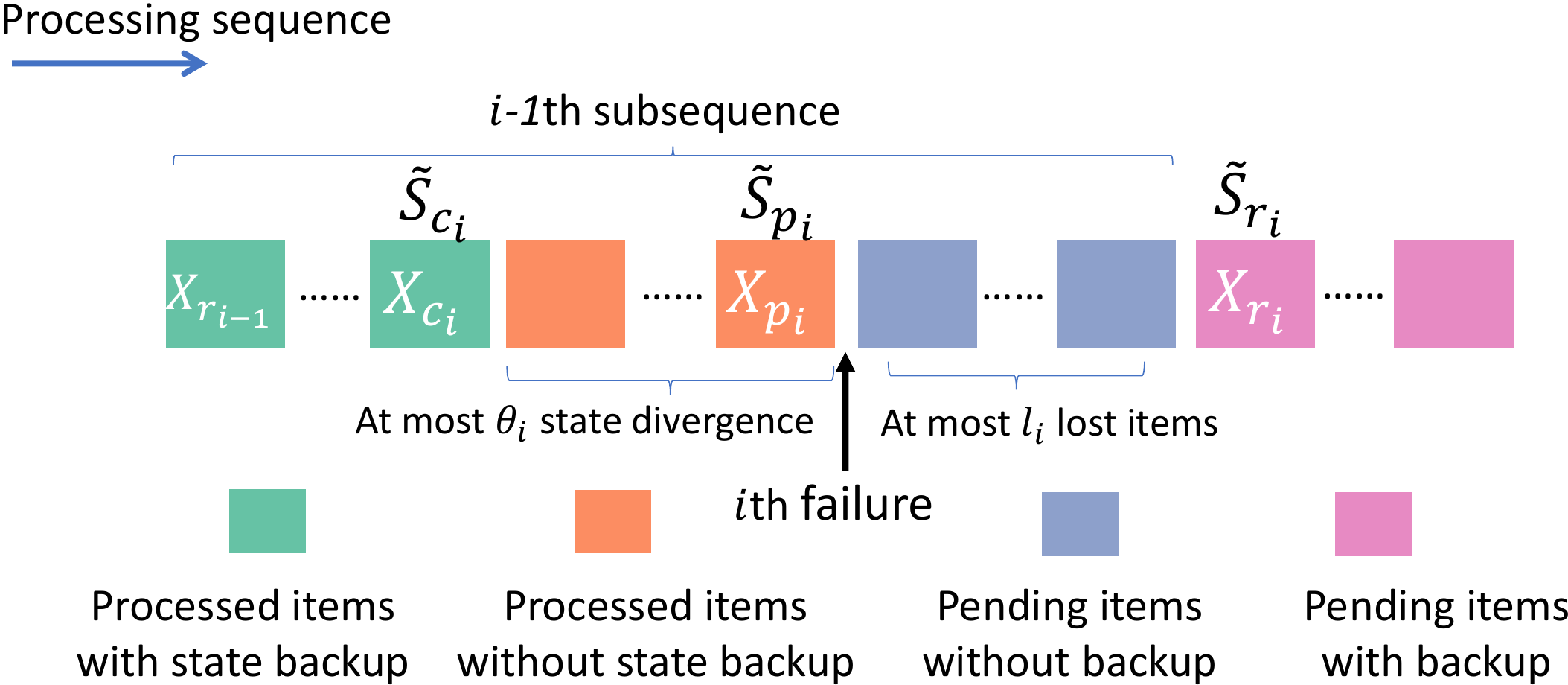}
	\vspace{-6pt}
	\caption{Example when the $i$-th failure happens (the items are processed from
		left to right).}
	\label{fig:state}
\end{figure}

\paragraph{Proof of Theorem~\ref{theorem:sgd}.} We prove
Theorem~\ref{theorem:sgd} based on above lemmas.  We define the following
notation (see Figure~\ref{fig:state} for illustration).  Suppose that the
$i$-th failure happens right after an item, denoted by $X_{p_i}$ is processed.
To recover from the failure, AF-Stream loads the most recent checkpoint state
(i.e., the model parameters), denoted by $\NS_{c_i}$, which corresponds to
the state vector of SGD after the item $X_{c_i}$ has been processed.  It also
resumes SGD from some recovered item, denoted by $X_{r_i}$, where $c_i < r_i$.
We can equivalently view that the model updates from $X_{c_i+1}$ to
$X_{r_i-1}$ are all discarded, and hence $\NS_{t}=\NS_{c_i}$ for $c_i+1\le t
\le r_i$.  Without loss of generality, we set $r_0=1$.

Upon the recovery of the $i$-th failure, let $\theta_i$ denote the divergence
between the recovered state and the actual state before the failure, $l_i$ be
the number of lost items that are not yet processed and have no backup, and
$\alpha$ be the state divergence incurred by each lost item.  Note that
$\theta_i$ and $l_i$ are dynamic thresholds with respect to $\Theta$ and $L$,
respectively, and they both decrease as $i$
increases to bound the overall errors (see details below).  Also, $\alpha$ is
expected to be bounded in any well-defined machine learning model; otherwise
the model becomes unstable.  Thus, the total state divergence is $\theta_i +
l_i\alpha$. 

Our proof has three parts: (i) split all $T$ items into $k+1$ subsequences due
to $k$ failures, (ii) apply Lemma~\ref{lemma:sgd_seq} to bound the error of
each subsequence, and (iii) sum up the errors of all subsequences.

(i) We first split the sequence $\{\NS_t: 1\le t\le T\}$ into $k+1$
subsequences:
\begin{equation}
\begin{split}
& \SumT f_t(\NS_t) - \SumT f_t(\PS^*) 
\\ &   =    \sum_{i=0}^{k-1}\sum_{t=r_i}^{r_{i+1}-1}[f_t(\NS_t)-f_t(\PS^*)] +
\sum_{t=r_{k}}^{T}[f_t(\NS_t)-f_t(\PS^*)].
\nonumber
\end{split}
\end{equation}

(ii) We next bound the difference of each subsequence.
However, for each $\sum_{t=r_i}^{r_{i+1}-1}[f_t(\NS_t)-f_t(\PS^*)]$, we cannot
apply Lemma~\ref{lemma:sgd_seq} directly since the condition $\NS_{t+1} =
\NS_{t}-\eta_{t}\Ng_{t}$ does not hold for $c_{i+1}<t<r_{i+1}$.  To
compensate, we construct another subsequence $\NS'_t$ for $c_{i+1}<t<r_{i+1}$,
where $\NS'_{c_{i+1}+1} = \NS_{c_{i+1}}-\eta_{c_{i+1}}\Ng_{c_{i+1}}$ and
$\NS'_{t+1} = \NS'_{t}-\eta_{t}\Ng'_{t}$ for $c_{i+1}+1<t<r_{i+1}-1$.
Then Lemma~\ref{lemma:sgd_seq} works for the concatenation of the two
subsequences $\{\NS_t: r_i\le t \le c_{i+1}\}$ and 
$\{\NS'_t: c_{i+1}<t<r_{i+1}\}$. 

{\allowdisplaybreaks
	\begin{align*}
	& \sum_{t=r_i}^{r_{i+1}-1}[f_t(\NS_t)-f_t(\PS^*)] 
	\\  =  & \sum_{t=r_i}^{c_{i+1}}[f_t(\NS_t)-f_t(\PS^*)]
	+\sum_{t=c_{i+1}+1}^{r_{i+1}-1}[f_t(\NS'_t)-f_t(\PS^*)] 
	\\ &  +\sum_{t=c_{i+1}+1}^{r_{i+1}-1}[f_t(\NS_t)-f_t(\PS^*)]
	- \sum_{t=c_{i+1}+1}^{r_{i+1}-1}[f_t(\NS'_t)-f_t(\PS^*)] \\  
	\le & \ \ H^2(\sqrt{r_{i+1}-1}-\sqrt{r_i-1}) + F^2\sqrt{r_{i+1}-1} \\ & + \sum_{t=c_{i+1}+1}^{r_{i+1}-1}[f_t(\NS_t)-f_t(\NS'_t)] \\
	\le & \ \ H^2(\sqrt{r_{i+1}-1}-\sqrt{r_i-1}) + F^2\sqrt{r_{i+1}-1} \\ & + \sum_{t=c_{i+1}+1}^{r_{i+1}-1}\langle\nabla f_t(\NS_t),\NS_t-\NS'_t\rangle \\
	\le & \ \ H^2(\sqrt{r_{i+1}-1}-\sqrt{r_i-1}) + F^2\sqrt{r_{i+1}-1} \\ & + \sum_{t=c_{i+1}+1}^{r_{i+1}-1}\|\nabla f_t(\NS_t)\|\cdot D(\NS_t\|\NS'_t)
	\\
	\le & \ \ H^2(\sqrt{r_{i+1}-1}-\sqrt{r_i-1}) + F^2\sqrt{r_{i+1}-1}  \\ &  + H\sum_{t=c_{i+1}+1}^{r_{i+1}-1}D(\NS_{c_i}\|\NS'_t) \\
	\le & \ \ H^2(\sqrt{r_{i+1}-1}-\sqrt{r_i-1}) + F^2\sqrt{r_{i+1}-1} 
	\\ &  + \mathcal{O}(H(\theta^2_{i+1}+\alpha l^2_{i+1})).
	\nonumber
	\end{align*}
}

(iii) Finally, we sum up the differences of all subsequences.
\begin{equation}
\begin{split}
& \SumT f_t(\NS_t) - \SumT f_t(\PS^*) 
\\  =  &   \sum_{i=0}^{k-1}\sum_{t=r_i}^{r_{i+1}-1}[f_t(\NS_t)-f_t(\PS^*)] +
\sum_{t=r_{k}}^{T}(f_t(\NS_t)-f_t(\PS^*)) \\
\le & \sum_{i=0}^{k-1}[H^2(\sqrt{r_{i+1}-1}-\sqrt{r_i-1}) + F^2\sqrt{r_{i+1}-1} \\ 
& + \mathcal{O}(H(\theta^2_{i+1} +\alpha l^2_{i+1}))]   +
H^2(\sqrt{T}-\sqrt{r_k-1}) + F^2\sqrt{T} \\
\le & \ \mathcal{O}(H(\sum_{i=1}^{k}\theta^2_{i+1}+\alpha\sum_{i=1}^{k}l^2_{i+1})) + F^2\sum_{i=1}^{k}\sqrt{r_{i}} \\
&	+ (H^2+F^2)\sqrt{T}.
\nonumber
\end{split}
\end{equation}

To prove Theorem~\ref{theorem:sgd}, we need to bound three terms:
$\sum_{i=1}^{k}\theta^2_{i+1}$, $\sum_{i=1}^{k}l^2_{i+1}$, and
$\sum_{i=1}^{k}\sqrt{r_{i}}$. The first two terms can be bounded by halving
$\theta_{i+1}$ and $l_{i+1}$ after each failure recovery to make the sum of a
geometric series bounded (\S\ref{subsubsec:multiple} and
\S\ref{subsubsec:multiple2}).  Specifically, if the initial values are set to
be $\theta_{1}=\mathcal{O}(\Theta)$ and $l_{1}=\mathcal{O}(L)$, we have
$\mathcal{O}(H(\sum_{i=1}^{k}\theta^2_{i+1}+\alpha\sum_{i=1}^{k}l^2_{i+1}))
=\mathcal{O}(H(\Theta^2+\alpha L^2))$.  
For the term $\sum_{i=1}^{k}\sqrt{r_{i}}$, since each $r_i$ must be smaller
than $T$, we have $\sum_{i=1}^{k}\sqrt{r_{i}} \le k\sqrt{T}$.

We can now bound the sum of the differences of all subsequences as follows:
\begin{equation}
\begin{split}
& \SumT f_t(\NS_t) - \SumT f_t(\PS^*) \\
\le & \ \mathcal{O}(H(\Theta^2+\alpha L^2)) + F^2 k\sqrt{T} + (H^2+F^2)\sqrt{T}.
\nonumber
\end{split}
\end{equation}

By dividing both sides by $T$, the error is negligible when $T$ approaches
infinity, i.e.,
\begin{equation*}
\lim\limits_{T\to\infty}(\AvgT f_t(\NS_t)-\AvgT f_t(\PS^*))=0.  
\end{equation*}

\paragraph{Remarks.} In the proof of Theorem~\ref{theorem:sgd}, the parameters
$\Theta$ and $L$ are used to bound the term $D(\NS_{c_i}\|\NS'_t)$. Note that
the effect of $\Gamma$ has already been included in state divergence (i.e.,
the distance $\Theta$ is the accumulated result of $\Gamma$ unacknowledged
processed items), so our proof does not include $\Gamma$.

\paragraph{Convergence rate.} AF-Stream has a (slightly) slower convergence
rate than the standard SGD.  Specifically, the regret difference to the
optimum is
$\mathcal{O}(H(\Theta^2+\alpha L^2)) + F^2 k\sqrt{T} +(H^2+F^2)\sqrt{T}$,
while the regret difference is $(H^2+F^2)\sqrt{T}$ in the failure-free case
based on Lemma~\ref{lemma:sgd_seq}.  This implies that AF-Stream requires more
steps to achieve the same error level. Nevertheless, our experiments show that
the number of extra steps is limited.

\paragraph{Extensions for other optimization methods.} While our analysis
focuses on SGD, another well-known class of methods for learning parameters in
machine learning is Markov Chain Monte Carlo (MCMC).  In particular, Gibbs
sampling is one well-known MCMC method, and its efficiency for the Latent
Dirichlet Allocation (LDA) model has been well-justified
\cite{Canini2009,Yao2009,Smola2010,Ahmed2012}.  Even though we cannot apply our analysis
of SGD directly to MCMC methods, we briefly discuss how their models converge
under AF-Stream.  In particular, the divergence from the states to a
stationary distribution only depends on how many times the model is updated,
i.e., the number of successfully processed items \cite{Canini2009}.  Since the
errors in AF-Stream are equivalent to dropping a bounded number of model
updates, AF-Stream still ensures model convergence with sufficient input items.

\onecolumn

\section{Interfaces and Built-in Primitive Operators in AF-Stream}
\label{sec:appendix_interfaces}
\begin{table*}[!h]
	\centering
	\renewcommand{\arraystretch}{1.15}
	{\footnotesize
		\begin{tabular}[c]{|l|l|p{2.6in}|}
			\hline
			\textbf{Entities}  &  \textbf{Functions} & \textbf{Descriptions} \\
			\hline
			\hline
			Worker & \texttt{\textbf{void} AddUpstreamWorker(\textbf{string\&} upName)} & Adds an upstream worker\\ 
			\cline{2-3}
			& \texttt{\textbf{void} AddDownstreamWorker(\textbf{string\&} downName)} &  Adds a downstream worker\\
			\cline{2-3}
			& \texttt{\textbf{void} AddUpstreamThread(\textbf{Thread\&} thread)} & Plugs in the upstream thread\\
			\cline{2-3}
			& \texttt{\textbf{void} AddDownstreamThread(\textbf{Thread\&} thread)} & Plugs in the downstream thread\\
			\cline{2-3}
			& \texttt{\textbf{void} AddComputeThread(\textbf{Thread\&} thread)} & Plugs in a compute thread \\
			\cline{2-3}
			& \texttt{\textbf{void} PinCPU(\textbf{Thread\&} thread, \textbf{int} core)} & Pins a thread to a CPU core\\
			\cline{2-3}
			& \texttt{\textbf{void} SetWindow(\textbf{int} type, \textbf{int} length)} & Sets the type and length of
			a window (the window type can be the hopping window, the sliding window,
			or the decaying window) \\
			\cline{2-3}
			& \texttt{\textbf{void} Start()} & Starts the execution of the worker\\
			\hline
			Compute & \texttt{\textbf{void} ConnectFromUpstreamThread()} & Associates the compute thread with the upstream thread\\
			\cline{2-3}
			thread & \texttt{\textbf{void} ConnectToDownstreamThread()} & Associates the compute thread with the downstream thread\\
			\cline{2-3}
			& \texttt{\textbf{void} ConnectToComputeThread(\textbf{Thread\&} dstThread)} &  Connects the compute thread to another compute thread\\
			\cline{2-3}
			& \texttt{\textbf{void} SendToUp(\textbf{string\&} upName, \textbf{Item\&} feedback)} & Sends a feedback item to an upstream worker \\
			\cline{2-3}
			& \texttt{\textbf{void} SendToDown(\textbf{string\&} downName, \textbf{Item\&} data)} & Sends a data item to a downstream worker\\
			\hline
		\end{tabular}
	}
	\caption{Composing Interfaces in C++ Syntax.}
	\label{tab:compose}
\end{table*}

\begin{table*}[!h]
	\centering
	\renewcommand{\arraystretch}{1.15}
	{\footnotesize
		\begin{tabular}[c]{|l|l|p{2.6in}|}
			\hline
			\textbf{Entities }  &  \textbf{Functions} & \textbf{Descriptions}\\
			\hline
			\hline
			Upstream thread & \texttt{\textbf{Item} ReceiveDataItem()} & Receives an input item from data sources or upstream workers \\
			\cline{2-3}
			& \texttt{\textbf{int} GetDestComputeThread(\textbf{Item\&} item)} & Returns the compute thread which an item is dispatched \\
			\cline{2-3}
			& \texttt{\textbf{void} SendFeedbackItem(\textbf{Item\&} feedback)} & Sends a feedback item to an upstream worker \\
			\hline
			Downstream thread & \texttt{\textbf{void} SendDataItem(\textbf{Item\&} feedback)} & Sends output items \\
			\cline{2-3}
			& \texttt{\textbf{Item} ReceiveFeedbackItem()} & Receives a feedback item from downstream workers \\
			\hline
			Compute thread & \texttt{\textbf{bool} ProcessData(\textbf{Item\&} data)} & Processes a data item \\
			\cline{2-3}
			& \texttt{\textbf{bool} ProcessFeedback(\textbf{Item\&} feedback)} & Processes a feedback item \\
			\cline{2-3}
			& \texttt{\textbf{bool} ProcessPunctuation(\textbf{Item\&} punc)} &  Processes a punctuation item\\
			\hline
			Fault-tolerant operator & \texttt{\textbf{double} StateDivergence()} & Gets the divergence of the
			up-to-date state and the backup state \\
			\cline{2-3}
			& \texttt{\textbf{State} BackupState()} & Returns the state to be saved by AF-Stream\\ 
			\cline{2-3}
			& \texttt{\textbf{void} RecoverState(\textbf{State\&} state)} & Obtains the most recent backup state from AF-Stream\\
			\hline
		\end{tabular}
	}
	\caption{User-defined Interfaces in C++ Syntax.}
	\label{tab:interface}
\end{table*}

\begin{table*}[!h]
	\centering
	\renewcommand{\arraystretch}{1.15}
	{\footnotesize
		\begin{tabular}[c]{|l|p{0.36\textwidth}|p{0.20\textwidth}|p{0.22\textwidth}|}
			\hline
			\textbf{Operator}  & \texttt{StateDivergence} & \texttt{BackupState} & \texttt{RecoverState} \\
			\hline
			\hline
			Numeric variables  & Difference of two values & Returns the variable value & Assigns the variable value\\
			\hline
			Vector and matrix & Manhattan distance; Euclidean distance; or maximum difference of values of an index
			& Returns a list of (index, value) pairs and the length of the list
			& Fills in restored (index, value) pairs\\
			\hline
			Hash table &  Manhattan distance, Euclidean distance, maximum difference of
			values of a key, difference of the numbers of keys
			& Returns a list of (key, value) pairs and the length of the list
			& Inserts the restored (key, value) pairs\\
			\hline
			Set & Difference of the numbers of keys
			& Returns a list of set members and the length of the list
			& Inserts the restored set members\\
			\hline
		\end{tabular}
	}
	\caption{Built-in fault-tolerant primitive operators.}
	\label{tab:builtin}
\end{table*}

\end{appendices}

\end{document}